\PassOptionsToPackage{dvipsnames}{xcolor}
\PassOptionsToPackage{pagebackref,draft=false}{hyperref}
\documentclass{article}

\usepackage{geometry}              
\usepackage{authblk}               

\usepackage{amsfonts}              
\usepackage{amsmath}               
\usepackage{amssymb}               
\usepackage{amsthm}                
\usepackage{bm}                    
\usepackage{centernot}             
\usepackage{colonequals}           
\usepackage{dsfont}                
\usepackage{mathtools}             
\usepackage{nicefrac}              
\usepackage{scalerel}              
\usepackage{xparse}                

\usepackage{arydshln}              
\usepackage{booktabs}              
\usepackage{colortbl}              
\usepackage{diagbox}               
\usepackage{makecell}              
\usepackage{multirow}              
\usepackage{tabularray}            
	\UseTblrLibrary{diagbox}

\usepackage{enumitem}              
	\setlist[itemize]{itemsep=0pt,topsep=4pt}
	\setlist[enumerate]{itemsep=0pt,topsep=4pt}
	\setlist[enumerate,1]{label=(\roman*)}
	\setlist[enumerate,2]{label=(\alph*)}

\usepackage{rotating}              
\usepackage{tikz}                  
\usepackage{tikz-cd}               
\usepackage{tikzit}                

\tikzstyle{morphism}=[fill=white, draw=black, shape=rectangle]
\tikzstyle{medium box}=[fill=white, draw=black, shape=rectangle, minimum width=0.7cm, minimum height=0.7cm]
\tikzstyle{large morphism}=[fill=white, draw=black, shape=rectangle, minimum width=1.7cm, minimum height=1cm]
\tikzstyle{bn}=[fill=black, draw=black, shape=circle, inner sep=1.5pt]
\tikzstyle{state}=[fill=white, draw=black, regular polygon, regular polygon sides=3, minimum width=0.8cm, shape border rotate=180, inner sep=0pt]
\tikzstyle{medium state}=[fill=white, draw=black, regular polygon, regular polygon sides=3, minimum width=1.3cm, inner sep=0pt, shape border rotate=180]
\tikzstyle{large state}=[fill=white, draw=black, regular polygon, regular polygon sides=3, minimum width=2.2cm, shape border rotate=180, inner sep=0pt]
\tikzstyle{wide state}=[fill=white, draw=black, shape=isosceles triangle, minimum width=0.8cm, shape border rotate=270, inner sep=1.4pt, minimum height=0.5cm, isosceles triangle apex angle=80]
\tikzstyle{wn}=[fill=white, draw=black, shape=circle, inner sep=1.5pt]
\tikzstyle{blue morphism}=[fill=white, draw={rgb,255: red,15; green,0; blue,150}, shape=rectangle, text={rgb,255: red,15; green,0; blue,150}, tikzit category=blue]
\tikzstyle{red morphism}=[fill=white, draw={rgb,255: red,150; green,0; blue,2}, shape=rectangle, text={rgb,255: red,150; green,0; blue,2}, tikzit category=red]
\tikzstyle{blue state}=[fill=white, draw={rgb,255: red,15; green,0; blue,150}, shape=circle, regular polygon, regular polygon sides=3, minimum width=0.8cm, shape border rotate=180, inner sep=0pt, text={rgb,255: red,15; green,0; blue,150}, tikzit category=blue]
\tikzstyle{blue node}=[fill={rgb,255: red,15; green,0; blue,150}, draw={rgb,255: red,15; green,0; blue,150}, shape=circle, tikzit category=blue, inner sep=1.5pt]
\tikzstyle{blue}=[text={rgb,255: red,15; green,0; blue,150}, tikzit draw={rgb,255: red,191; green,191; blue,191}, tikzit category=blue, tikzit fill=white, inner sep=0mm]
\tikzstyle{blue wide state}=[fill=white, draw={rgb,255: red,15; green,0; blue,150}, text={rgb,255: red,15; green,0; blue,150}, shape=isosceles triangle, minimum width=0.8cm, shape border rotate=270, inner sep=1.4pt, minimum height=0.5cm, isosceles triangle apex angle=80]
\tikzstyle{red node}=[fill={rgb,255: red,150; green,0; blue,2}, draw={rgb,255: red,150; green,0; blue,2}, shape=circle, inner sep=1.5pt]
\tikzstyle{Purple node}=[fill={rgb,255: red,120; green,0; blue,120}, draw={rgb,255: red,120; green,0; blue,120}, text={rgb,255: red,120; green,0; blue,120}, shape=circle, inner sep=1.5pt]
\tikzstyle{red}=[text={rgb,255: red,150; green,0; blue,2}, inner sep=0mm, tikzit fill=white, tikzit draw={rgb,255: red,191; green,191; blue,191}]
\tikzstyle{purple}=[text={rgb,255: red,150; green,0; blue,150}, inner sep=0mm, tikzit fill=white, tikzit draw={rgb,255: red,191; green,191; blue,191}]
\tikzstyle{white morphism}=[fill=white, draw=white, shape=rectangle, tikzit draw={rgb,255: red,139; green,139; blue,139}]
\tikzstyle{leak morphism}=[fill=white, draw={rgb,255: red,120; green,0; blue,85}, shape=rectangle, text={rgb,255: red,120; green,0; blue,85}, tikzit category=leak]
\tikzstyle{leak}=[text={rgb,255: red,120; green,0; blue,85}, inner sep=0mm, tikzit fill=white, tikzit draw={rgb,255: red,191; green,191; blue,191}, tikzit category=leak]
\tikzstyle{leak node}=[fill={rgb,255: red,120; green,0; blue,85}, draw={rgb,255: red,120; green,0; blue,85}, shape=circle, inner sep=1.5pt, tikzit category=leak]
\tikzstyle{horiz state}=[fill=white, draw=black, regular polygon, regular polygon sides=3, minimum width=1cm, shape border rotate=90, inner sep=0pt]

\tikzstyle{arrow}=[->]
\tikzstyle{dashed box}=[-, dashed]
\tikzstyle{blue arrow}=[-, draw={rgb,255: red,15; green,0; blue,150}, tikzit category=blue]
\tikzstyle{red arrow}=[-, draw={rgb,255: red,150; green,0; blue,2}, tikzit category=red]
\tikzstyle{purple arrow}=[->, draw={rgb,255: red,120; green,0; blue,120}, >=stealth, shorten <=2pt, shorten >=2pt]
\tikzstyle{protected purple arrow}=[->, draw={rgb,255: red,120; green,0; blue,120}, >=stealth, shorten <=2pt, shorten >=2pt, preaction={line width=1.8pt, white, draw}]
\tikzstyle{mapsto}=[{|->}]
\tikzstyle{double wire}=[-, double]
\tikzstyle{protected double wire}=[-, double, preaction={line width=2.5pt,white,draw}]
\tikzstyle{triple wire}=[-, draw, line width=0.4pt, preaction={-, draw, line width=1.4pt, white, preaction={-, draw, line width=2.2pt}}]
\tikzstyle{protected}=[-, preaction={line width=1.8pt,white,draw}]
\tikzstyle{leak arrow}=[-, tikzit draw={rgb,255: red,150; green,0; blue,120}]
\tikzstyle{protected leak arrow}=[-, tikzit draw={rgb,255: red,150; green,0; blue,120}]
\tikzstyle{hollow arrow}=[-, very thin, white, preaction={line width=0.7pt,draw={rgb,255: red,120; green,0; blue,85}}, tikzit category=leak, tikzit draw={rgb,255: red,150; green,0; blue,120}]
\tikzstyle{protected hollow arrow}=[-, very thin, white, preaction={line width=0.7pt,draw={rgb,255: red,120; green,0; blue,85},preaction={line width=2.1pt,white,draw}}, tikzit category=leak, tikzit draw={rgb,255: red,150; green,0; blue,120}]
\tikzstyle{over arrow}=[-, black, preaction={draw=white, double}]
\tikzstyle{purple line}=[-, draw={rgb,255: red,120; green,0; blue,120}]
\tikzstyle{orange line}=[-, draw={rgb,255: red,255; green,100; blue,0}]
\tikzstyle{red line}=[-, draw={rgb,255: red,150; green,0; blue,2}]
\tikzstyle{blue line}=[-, draw={rgb,255: red,15; green,0; blue,150}]
\tikzstyle{blue double arrow}=[-, double, draw={rgb,255: red,15; green,0; blue,150}, tikzit category=blue]
\tikzstyle{red double arrow}=[-, double, draw={rgb,255: red,150; green,0; blue,2}, tikzit category=red]
\tikzstyle{purple double line}=[-, double, draw={rgb,255: red,120; green,0; blue,120}]
\tikzstyle{orange double line}=[-, double, draw={rgb,255: red,255; green,100; blue,0}]
\tikzstyle{curly brace}=[decorate, decoration={brace,amplitude=5pt}]

\tikzstyle{d-wire1 plate}=[-, double=red!20!white]
\tikzstyle{d-wire2 plate}=[-, double=red!32!white]
\tikzstyle{dotted_plate}=[-, densely dotted, draw=red, fill opacity=0.4, fill=red!50!white, rounded corners]
\tikzstyle{twire1}=[-, draw, line width=0.4pt, preaction={-, draw, line width=1.4pt, red!20!white, preaction={-, draw, line width=2.2pt}}]
\tikzstyle{twire2}=[-, draw, line width=0.4pt, preaction={-, draw, line width=1.4pt, red!32!white, preaction={-, draw, line width=2.2pt}}]

	\usetikzlibrary{arrows.meta, positioning}
	\usetikzlibrary{patterns.meta}
	\usetikzlibrary{shapes, shadows, calc}

\usepackage{xcolor}                

\usepackage{doi}                   
\usepackage{nameref}               
\usepackage[numbers,sort]{natbib}  
\usepackage[nottoc]{tocbibind}	   

\usepackage[verbose=true]{microtype} 

\usepackage{calc}                  
\usepackage{comment}               
\usepackage{etoolbox}              
\usepackage{todonotes}             

\allowdisplaybreaks                

\definecolor{myurlcolor}{rgb}{0,0,0.3}
\definecolor{mycitecolor}{rgb}{0,0.3,0}
\definecolor{myrefcolor}{rgb}{0.3,0,0}
\usepackage[pagebackref,draft=false]{hyperref}
\hypersetup{colorlinks,
linkcolor=myrefcolor,
citecolor=mycitecolor,
urlcolor=myurlcolor}

\usepackage[noabbrev,capitalize]{cleveref}	

\definecolor{amber}{rgb}{1.0, 0.75, 0.0}
\definecolor{amaranth}{rgb}{0.9, 0.17, 0.31}
\definecolor{indiagreen}{rgb}{0.07, 0.53, 0.03}

\newcommand{\tablered}{\cellcolor{amaranth!30}}
\newcommand{\tableyellow}{\cellcolor{amber!30}} 
\newcommand{\tablegreen}{\cellcolor{indiagreen!30}}

\newcounter{todocounter}
\makeatletter
\def\cref@thmoptarg[#1]#2#3#4{%
	\ifhmode\unskip\unskip\par\fi%
	\normalfont%
	\trivlist%
	\let\thmheadnl\relax%
	\let\thm@swap\@gobble%
	\thm@notefont{\fontseries\mddefault\upshape}%
	\thm@headpunct{.}
	\thm@headsep 5\p@ plus\p@ minus\p@\relax%
	\thm@space@setup%
	#2
	\@topsep \thm@preskip               
	\@topsepadd \thm@postskip           
	\def\@tempa{#3}\ifx\@empty\@tempa%
	\def\@tempa{\@oparg{\@begintheorem{#4}{}}[]}%
	\else%
	\refstepcounter[#1]{#3}
	\@namedef{cref@#3@alias}{#1}
	\def\@tempa{\@oparg{\@begintheorem{#4}{\csname the#3\endcsname}}[]}%
	\fi%
	\@tempa}%
\makeatother

\newtheorem{theorem}{Theorem}[section]
\newtheorem*{theorem*}{Theorem}
\newtheorem{proposition}[theorem]{Proposition}
\newtheorem{lemma}[theorem]{Lemma}
\newtheorem{corollary}[theorem]{Corollary}
\newtheorem{definition}[theorem]{Definition}
\theoremstyle{definition}
\newtheorem{example}[theorem]{Example}
\newtheorem{remark}[theorem]{Remark}

\numberwithin{equation}{section}

\let\originalleft\left
\let\originalright\right
\renewcommand{\left}{\mathopen{}\mathclose\bgroup\originalleft}
\renewcommand{\right}{\aftergroup\egroup\originalright}

\newcommand{\V}{\mathsf{V}}
\newcommand{\I}{\mathsf{I}}
\newcommand{\Z}{\mathsf{Z}}
\newcommand{\Q}{\mathsf{Q}}
\renewcommand{\S}{\mathsf{S}}

\newcommand{\E}{\mathsf{E}}
\newcommand{\C}{\mathsf{C}}
\renewcommand{\P}{\mathsf{P}}
\newcommand{\F}{\mathsf{F}}
\newcommand{\W}{\mathsf{W}}

\newcommand{\K}{\mathsf{K}}
\newcommand{\R}{\mathsf{R}}
\newcommand{\B}{\mathsf{B}}
\newcommand{\T}{\mathsf{T}}
\newcommand{\Y}{\mathsf{Y}}
\newcommand{\G}{\mathsf{G}}
\newcommand{\sfH}{\mathsf{H}}

\newcommand{\ph}{\mathord{\rule[-0.05em]{0.6em}{0.05em}}}				

\newcommand{\id}{\mathds{1}}
\DeclareMathOperator{\im}{\mathsf{im}}
\newcommand{\swap}{\mathrm{swap}}

\newcommand{\comp}{ 		
	\mathchoice{\,}{\,}{}{} 	
}

\newcommand{\newterm}[1]{\textit{#1}}
\newcommand{\pre}{\mathrm{pre}}
\newcommand{\ret}{\mathrm{ret}}
\newcommand{\rank}{\mathrm{rank}}

\providecommand{\given}{}			
\newcommand{\SetSymbol}[1][]{%
	\nonscript\;\,#1\vert
	\allowbreak
	\nonscript\;\,
	\mathopen{}
}
\DeclarePairedDelimiterX{\Set}[1]{\{}{\}}{%
	\renewcommand{\given}{\SetSymbol[\delimsize]}
	#1
}
\makeatletter
\let\oldSet\Set
\def\Set{\@ifstar{\oldSet}{\oldSet*}}
\makeatother

\DeclarePairedDelimiterX{\Family}[1]{(}{)}{%
	\renewcommand{\given}{\SetSymbol[\delimsize]} 
	#1
}
\makeatletter
\let\oldFamily\Family
\def\Family{\@ifstar{\oldFamily}{\oldFamily*}}

\csundef{slashbox}
\newsavebox{\numbox}%
\newsavebox{\slashbox}%
\newsavebox{\denbox}%
\newlength{\slashlength}%
\newlength{\faktorscale}%
\DeclareDocumentCommand{\newfaktor}{m O{0.35} m O{-0.35}}{
\savebox{\numbox}{\ensuremath{#1}}
\savebox{\slashbox}{\ensuremath{\diagup}}
\savebox{\denbox}{\ensuremath{#3}}
\setlength{\faktorscale}{0.5\ht\numbox+0.5\ht\denbox}%
\setlength{\slashlength}{2pt+0.8\faktorscale+#2\faktorscale-#4\faktorscale}%
\raisebox{#2\ht\slashbox}{\usebox{\numbox}}
\mkern-2mu%
\rotatebox{-30}{\rule[#4\ht\denbox]{0.4pt}{\slashlength}}
\mkern9mu%
\hspace{-0.44\slashlength}%
\raisebox{#4\ht\denbox}{\usebox{\denbox}}
}
\DeclareDocumentCommand{\linefaktor}{m O{0.08} m O{-0.08}}{
\savebox{\numbox}{\ensuremath{#1}}
\savebox{\slashbox}{\ensuremath{\diagup}}
\savebox{\denbox}{\ensuremath{#3}}
\setlength{\faktorscale}{0.5\ht\numbox+0.5\ht\denbox}%
\setlength{\slashlength}{0.2\faktorscale+0.8\baselineskip}%
\raisebox{#2\ht\slashbox}{\usebox{\numbox}}
\mkern-1mu%
\raisebox{-0.8pt}{%
	\rotatebox{-25}{\rule[#4\ht\denbox]{0.4pt}{\slashlength}} 
}%
\mkern-1mu%
\hspace{-0.25\slashlength}%
\raisebox{#4\ht\denbox}{\usebox{\denbox}}
}

\definecolor{gridblue}   {RGB}{164,179,213}
\definecolor{gridgreen}  {RGB}{133,206,183}
\definecolor{gridred}    {RGB}{252,153,142}
\definecolor{gridpurple} {RGB}{201,153,202}
\definecolor{gridpink}   {RGB}{236,161,207}
\definecolor{gridolive}  {RGB}{184,224,118}
\definecolor{gridyellow} {RGB}{255,225, 89}
\definecolor{gridbrown}  {RGB}{193,163,131}
\definecolor{gridteal}   {RGB}{153,193,220}

\definecolor{gridgray}{gray}{0.4}

\colorlet{gridgraylight}	{gridgray!50!white}
\colorlet{gridbluelight}	{gridblue!50!white}
\colorlet{gridgreenlight}	{gridgreen!50!white}
\colorlet{gridredlight}		{gridred!50!white}
\colorlet{gridpurplelight}	{gridpurple!50!white}
\colorlet{gridpinklight}	{gridpink!50!white}
\colorlet{gridolivelight}	{gridolive!50!white}
\colorlet{gridyellowlight}	{gridyellow!50!white}
\colorlet{gridbrownlight}	{gridbrown!50!white}
\colorlet{gridteallight}	{gridteal!50!white}

\def\cellsize{0.6}
\def\panelsep{3.9}      

\newcommand{\drawgrid}{
	\draw (0,0) rectangle ++(3*\cellsize,3*\cellsize);
	\foreach \i in {1,2} {
		\draw (\i*\cellsize,0) -- (\i*\cellsize,3*\cellsize);
		\draw (0,\i*\cellsize) -- (3*\cellsize,\i*\cellsize);
	}
}

\newcommand{\panelaxes}{
	\node[left=0.2cm]  at (0,1.5*\cellsize) {$P$};
	\node[below=0.2cm] at (1.5*\cellsize,0) {$Q$};
}

\def\StripeAngle{-45}
\def\StripeDistance{3pt}

\newcommand{\filltwo}[5]{%
	
	\fill[#4] (#1,#2) rectangle ++(#3,#3);
	
	\fill[
	pattern={
		Lines[
		angle=\StripeAngle,
		distance=2*\StripeDistance,
		line width=\StripeDistance
		]
	},
	pattern color=#5
	] (#1,#2) rectangle ++(#3,#3);
}

\newcommand{\fillthree}[6]{%
	
	\fill[#4] (#1,#2) rectangle ++(#3,#3);
	
	\fill[
	pattern={
		Lines[
		angle=\StripeAngle,
		distance=3*\StripeDistance,
		line width=2*\StripeDistance
		]
	},
	pattern color=#5
	] (#1,#2) rectangle ++(#3,#3);
	
	\fill[
	pattern={
		Lines[
		angle=\StripeAngle,
		distance=3*\StripeDistance,
		line width=\StripeDistance,
		xshift=\StripeDistance
		]
	},
	pattern color=#6
	] (#1,#2) rectangle ++(#3,#3);
}

\begin{document}

\title{The View from Within: \\ What Can Embedded Observers (Not) Learn?}

\author[1]{Tom\'a\v{s} Gonda\textsuperscript{*}} 
\author[2]{Johannes Fankhauser\textsuperscript{*}} 
\author[3,4]{Gemma De les Coves\textsuperscript{*}} 

\affil[1]{Institute for Mathematics, University of Innsbruck, Austria}
\affil[2]{Institute for Theoretical Physics, University of Innsbruck, Austria}
\affil[3]{Departament d'Enginyeria, Universitat Pompeu Fabra, Carrer T\`anger 122, 08018 Barcelona}
\affil[4]{ICREA, Instituci\'o Catalana d'Estudis i Recerca Avan\c{c}ats, Passeig Llu\'is Companys 23, 08010 Barcelona}
\date{\today}
\renewcommand\Affilfont{\itshape\small}

\maketitle

\begingroup
	\renewcommand{\thefootnote}{\fnsymbol{footnote}}
	\footnotetext[1]{\, TG and JF contributed equally to this work. \\
		Email contacts: \href{mailto:tomas.gonda@uibk.ac.at}{\itshape tomas.gonda@uibk.ac.at}, \href{mailto:johannes.j.fankhauser@gmail.com}{\itshape johannes.j.fankhauser@gmail.com}, \href{mailto:gemma.delescoves@upf.edu}{\it gemma.delescoves@upf.edu}
		}
\endgroup

\begin{abstract}
	Physics is usually done from a third-person perspective, as if the world were described from outside. 
	Yet, observers are themselves physical systems within the world they observe. 
	Here, we investigate this tension using a toy model of classical particles. 
	Observers are physical systems characterised by a choice of (i) a manifest variable, whose value constitutes their empirical record, and (ii) ready states, which provide initial information. 
	We ask what such observers (subjects) can learn about the world through interactions that establish a correlation with another system (the object). 
	For each combination of the three types of learning (about the past, the future, or both), manifest variables, and ready states, we determine what the subject can learn.
	This shows that many subjects face an epistemic horizon{\,---\,}a limitation to what they can learn about the world{\,---\,}even though the model is classical and deterministic. 
	For example, a subject with a complete manifest variable{\,---\,}one whose ontic state is the empirical record{\,---\,}and partial initial information can learn at most half the object's variables, recovering Spekkens' knowledge-balance principle, i.e.\ an analogue of Heisenberg's uncertainty principle. 
	More generally, we find that subjects face more severe epistemic horizons when predicting than when retrodicting, and that learning by repeatable measurements can be more limited than either prediction or retrodiction alone. 
	Our work provides language and tools to study how the first-person perspective can differ from the third-person perspective beyond the toy model studied here, and invites explanations of the inherent uncertainty in quantum theory from the standpoint of embedded observers.
\end{abstract}

\newpage
\tableofcontents
\newpage

\section{Introduction}\label{sec:intro}

The mechanistic, third-person worldview of classical physics is predicated on a theoretical reality{\,---\,}the physical world{\,---\,}governed by universal laws that explain phenomena observable through empirical data. 
This reality is typically considered independent of the observer, composed of objects whose properties are revealed by measurement. 
In essence, what exists and what is known to exist are considered separate. 
This distinction has a long history in philosophy of science \cite{Kant,Va80} and remains central to modern discussions of the ontology of physical theories \cite{Be87,Ma07,Bh13,Sc20}.

For example, the Newtonian paradigm traditionally posits that specifying the positions and momenta of all particles at a given time, combined with the laws of motion, fully determines their entire future (and past).\footnotemark{}
\footnotetext{We gloss over well-known subtleties concerning determinism in Newtonian mechanics, such as Norton's dome \cite{Norton2003Causation} and related cases in which the standard uniqueness assumptions for solutions to the equations of motion fail.}%
It is often assumed that these quantities are, in principle, precisely measurable. 
One typically abstracts away from the measurement process, assuming any disturbance to the system is both determinable and correctable. 
If such correction were always possible, an observer could, in principle, access arbitrary information about a system.

Yet this classical idealisation leaves open how the third-person description provided by a physical theory connects to empirical observation. Einstein repeatedly emphasised that the relation between theoretical concepts and observations is conceptually mediated rather than simply given by experience \cite{EinsteinSolovine}. More recently, Maudlin has argued that a satisfactory physical theory should explain how observation arises from the physical degrees of freedom described by the theory itself \cite{Maudlin2025}. 

Relatedly, participatory and relational approaches emphasise that knowledge and objectivity should be understood from within the physical world rather than from an external standpoint, often in terms of correlations or relative information between physical systems \cite{Wh89,Rovelli-RQM,barbado2024relational,Co25,Di26}. 

The gap can be stated as follows: a theory may specify what physical states are possible and how they evolve, and yet fail to explain how they can function as information records for an embedded observer, or what the observer can infer about the world from such records.

In \cite{Fa24}, we took a first step towards addressing this gap in a deliberately simple classical setting. 
We introduced nomic toy theory, in which subjects are modelled as physical systems and information gathering as a physical interaction between subject and object. 
We showed that such subjects can face epistemic horizons{\,---\,}fundamental limitations on what they can learn{\,---\,}even though the underlying theory is fully deterministic. 
Because the information-gathering process is not abstract but physical, it is contingent both on the possible interactions and on the properties of the manifest variables whose values carry the observer's empirical records. 

The closest formal point of comparison is Spekkens' toy theory \cite{Sp07,spekkens2016quasi,Pu12,Ca17e,Ha21c}, an epistemically restricted operational theory with a non-contextual ontological model \cite{Ha10}.\footnote{For a contextual ontological model of $n$-qubit stabiliser quantum mechanics, see \cite{Hi22}.} 
The physical theory underlying our investigation coincides with the ontological model of Spekkens' toy theory, but the epistemic restriction arises differently. 
Whereas Spekkens' knowledge-balance principle restricts the allowed epistemic states by postulate, we model subjects as physical systems\footnotemark{} and measurements as physical interactions and ask what such subject can (not) learn.
That is, we study what limitations to subjects' knowledge follow as a result and find that, under certain assumptions, they coincide with the epistemic restriction of Spekkens' toy theory.
\footnotetext{For a different approach to modelling physical observers in Spekkens' toy theory, see \cite{Ha23b}.}%

In the spirit of physicalism \cite{sep-physicalism}, our toy model makes the following commitments about learning:
\begin{itemize}
	\item The subject is a physical system, and its empirical record is instantiated in its physical state.
	
	\item Empirical information about an object is represented by correlations between the state of the object and the subject's record.
	
	\item These correlations are established through physical interactions between the subject and the object.
\end{itemize}

This implies, for instance, that a record by itself does not determine what the subject knows about the object. 
The inference from the record to information about the object's state additionally presupposes knowledge of the details of the physical interaction that took place and the possible initial states of the ready system that enters the interaction. 
We treat this description as background information available to the subject, which need not itself be part of the record.\footnote{An analogous distinction appears in Wolpert's ``self-aware devices'', which represent separately the question an inference device takes itself to be answering and the answer it produces \cite[Sec.~9]{Wo08}.}

We use the term ``subject'' in a deliberately minimal sense: 
It denotes a physical system equipped with physical variables whose values can serve as empirical records.\footnote{This scope can be understood against Ismael's distinction between observational and agentive perspectives \cite{Is25b}: our framework studies the acquisition of information through physical interaction, while leaving the choice of interactions and the purposive use of information outside the model.}

A crucial assumption in our earlier work was that empirical records are carried by Poisson manifest variables, canonically exemplified by position projections.
Because such a manifest variable distinguishes only a coarse-graining of the subject's ontic state, we may interpret the resulting empirical record as providing only partial self-knowledge.
This raises the question:
\textit{Do subjects face an epistemic horizon when their manifest variable is complete, i.e.\ when their record is given by their own ontic state?} 

In this paper, we show that the answer to this question is nuanced, as it requires disambiguating several concepts which give rise to a varied landscape of epistemic horizons. 
The aspects that constrain learning in our model are (see \cref{fig:domains_for_epistemic_horizons}):
\begin{itemize} 
		\item The \textit{fundamental physical theory}: the kinematical and dynamical assumptions of the model.
		
		\item The \textit{manifest variable}: the variable whose value constitutes the subject's empirical record.
		Equivalently, one may say that the record supervenes on those distinctions in the subject's physical state selected by the manifest variable.
		We use \emph{self-knowledge} only as an interpretation of how finely this record distinguishes the subject's own ontic state.
	
		\item The \textit{type of learning}: whether the subject learns through prediction, retrodiction, or repeatable learning. 
		This aspect concerns whether knowledge is about the past state of the object, its future state, or both simultaneously. 
		
		\item Resources of \textit{initial information}: the constraints imposed on the initial ready state of systems used by the subject. 
		This concerns the question of whether the subject has access to systems that carry information, in the sense that their ontic state is constrained to a subset of their full state space. 
\end{itemize}

In particular, the fundamental physical theory in our investigation is fixed to be that of classical particles. 
We then study how different choices of manifest variable and different resources of initial information constrain what can be learned for each of the three types of learning. 

\begin{figure}[h]
	\centering
	\resizebox{\textwidth}{!}{
	\begin{tikzpicture}[
		base/.style = {rectangle, draw=black!50, thick, rounded corners, text centered, text width=4cm, minimum height=2cm, drop shadow},
		center/.style = {base, fill=white, text width=3cm, font=\bfseries\large},
		domain1/.style = {base, fill=white},
		domain2/.style = {base, fill=white},
		domain3/.style = {base, fill=white},
		domain4/.style = {base, fill=white},
		line/.style = {draw, -Latex, thick, gray}
		] 
		
		\node[center] (core) {Epistemic Horizons};
		
		\node[domain1, above left=0cm and 1cm of core] (phy) 
		{\textbf{Fundamental Physical Theory}\\ \footnotesize Kinematical and dynamical structure of the model.};
		
		\node[domain2, above right=0cm and 1cm of core] (sup) 
		{\textbf{Manifest Variable}\\ \footnotesize Variable whose value constitutes the record.}; 
		
		\node[domain3, below left=0cm and 1cm of core] (learn) 
		{\textbf{Type of Learning}\\ \footnotesize Retrodictive or predictive; repeatability of measurements.};
		
		\node[domain4, below right=0cm and 1cm of core] (res) 
		{\textbf{Ready state}\\ \footnotesize Variability in initial states, informational resource.}; 
		
		\draw[line] (phy) -- (core) ;
		\draw[line]  (sup) -- (core) ;
		\draw[line] (learn) -- (core);
		\draw[line]  (res) -- (core);
	\end{tikzpicture}}
	\caption{Four aspects of learning in nomic toy theory that determine a toy subject's epistemic horizons.}
	\label{fig:domains_for_epistemic_horizons}
\end{figure}

The answer to the question posed above is then: Under the constraint that measurements be repeatable (i.e.\ the same outcome is obtained if the measurement is repeated), a non-trivial epistemic horizon persists even for subjects with full access to their own ontic state. 
We establish this key result as \cref{thm:repeatable_complete_MV_Poisson_RS}.
More generally, all the epistemic horizons we find are summarised in \cref{tab:retrodiction,tab:repeatable,tab:prediction}. 

Related formal approaches obtain epistemic limitations from different sources. 
Wolpert derives constraints on observation, prediction, and recollection that apply to physical inference devices independently of the precise physical laws \cite{Wo08}, while Szangolies relates epistemic horizons in quantum theory to self-reference and undecidability \cite{Sz18}. 
By contrast, the horizons studied here arise from the allowed interactions and informational resources within a specified dynamical theory.

\paragraph{Structure of the paper.}
In \cref{sec:stage}, we introduce nomic toy theory, the physical theory used throughout the paper, together with the other three aspects of learning shown in \cref{fig:domains_for_epistemic_horizons}. 
In \cref{sec:EH}, we present the resulting epistemic horizons for different choices of manifest variables, ready states, and learning types. 
In \cref{sec:conclusions}, we summarise and interpret these results, and discuss open questions and directions for future work. 
The appendices contain the formal details of nomic toy theory (Appendix~\ref{sec:definitions}) and the proofs of the main results (Appendix~\ref{sec:proof_EH}).

\section{Nomic Toy Theory}\label{sec:stage}

To study epistemic horizons, i.e.\ fundamental limitations on learning for embedded observers, we formalise the four aspects of learning shown in \cref{fig:domains_for_epistemic_horizons}. The full details are given in Appendices \ref{sec:NTT}, \ref{sec:subjects}, and \ref{sec:learning}.

The first ingredient is the physical theory specifying the systems under consideration and their dynamics. We use nomic toy theory, introduced in \cite{Fa24}, and largely follow the original setup (see \cref{sec:toy_systems}).
Since our aim is to study learning about physical systems, \cref{sec:variables} explains how such information is represented in the model. \Cref{sec:toy_subjects} then introduces our minimal model of embedded observers, including their manifest variables{\,---\,}whose values constitute records{\,---\,}and their ready states. 
Finally, \cref{sec:learning_types} defines measurements as physical interactions between an observer and another system, as well as the information acquired through such interactions.

\subsection{Toy Systems and Transformations}\label{sec:toy_systems}
In nomic toy theory, a physical system may be modelled as $n$ classical particles, each with a position and a momentum. Formally, the state space of such a system $B$ is a $2n$-dimensional symplectic vector space $\B$. We call $B$ a \newterm{toy system}. 
The underlying field may be continuous, such as $\mathbb{R}$, or finite, such as $\mathbb{Z}_\ell$ for a prime $\ell$. 

Each symplectic vector space can be given a basis $(q_1,\dots,q_n,p_1,\dots,p_n)$ of what we may call position and momentum coordinates, such that their Poisson brackets 
\begin{equation}
	\forall \, i,j \in \{1,\ldots,n\} : \quad \{q_i, q_j\} = 0, \qquad \{q_i,p_j\} = \delta_{ij}, \qquad  \{p_i, p_j\} = 0
\end{equation}
can be expressed as
\begin{equation}\label{eq:sympl_form}
	\{b_i, b_j\} = b_i^T \Omega b_j,  \qquad \text{where} \qquad
		\Omega_\B = 
		\begin{pmatrix}
			0 &  \id \\
			- \id & 0
		\end{pmatrix}
\end{equation}
where $b_i$ and $b_j$ are basis elements for $i,j \in \{1,\ldots,2n\}$, and $\id$ denotes the identity matrix (of size $n\times n$). 
In the block matrix expression of \eqref{eq:sympl_form}, we use a decomposition $\B = \Q \oplus \P$, where $\Q$ is spanned by the positions $q_i$ and $\P$ by the momenta $p_i$. 

Given two systems, $B$ and $R$, their composite is another physical system with state space given by the direct sum $\B \oplus \R$.

Reversible dynamics in nomic toy theory align with those of classical mechanics for (at most) quadratic Hamiltonians. 
For a given system $B$, the allowed \newterm{physical transformations} on $B$ include all symplectic maps $\B \to \B$, i.e.\ linear maps that preserve the symplectic structure.
Additionally, we also allow shifts by a constant and operations that discard a subsystem, e.g.\ $\pi_\B \colon \B \oplus \R \to \B$ (see \cref{sec:NTT}, and \cref{ex:Poisson_map} specifically, for more details).
In \cref{thm:dilation} we show that a physical transformation can be decomposed into a (reversible) symplectic map on the input space followed by an (irreversible) discarding operation.

\subsection{Variables}\label{sec:variables}

Properties of toy systems are modelled by variables. 
A \newterm{variable} is a function $Z \colon \B \to \Z$ from the ontic state space $\B$ of the system to a set of possible values $\Z$. 
For example, if $Z$ encodes the predicate ``Is the position of the toy object equal to zero?" then $\Z$ is the set $\{\text{\texttt{yes}, \texttt{no}}\}$ and $Z(b) = \texttt{yes}$ holds if and only if the position of $b$ is $0$. 
For a visual representation of the information carried by a variable, see the diagrams in \cref{fig:variables} depicting the partition of the state space induced by the variable.

\begin{figure}[htb]\centering
	\begin{tikzpicture}
		\begin{scope}[shift={(0,0)}]
			\node at (1.5*\cellsize,3.7*\cellsize) {(a) Constant};
			
			\fill[gridblue] (0,0) rectangle ++(3*\cellsize,3*\cellsize);
			
			\drawgrid
			\panelaxes
		\end{scope}
		
		\begin{scope}[shift={(\panelsep,0)}]
			\node at (1.5*\cellsize,3.7*\cellsize) {(b) Position: $\pi_\Q$};
			
			\foreach \y in {0,1,2}
			{
				\fill[gridblue] (0,\y*\cellsize) rectangle ++(\cellsize,\cellsize);
				\fill[gridredlight] (1*\cellsize,\y*\cellsize) rectangle ++(\cellsize,\cellsize);
				\fill[gridgreenlight] (2*\cellsize,\y*\cellsize) rectangle ++(\cellsize,\cellsize);
			}
			
			\drawgrid
			\panelaxes
		\end{scope}
		
		\begin{scope}[shift={(2*\panelsep,0)}]
			\node at (1.5*\cellsize,3.7*\cellsize) {(c) $\pi_\Q + \pi_\P$};
			
			\fill[gridblue] (0*\cellsize,2*\cellsize) rectangle ++(\cellsize,\cellsize);
			\fill[gridblue] (1*\cellsize,1*\cellsize) rectangle ++(\cellsize,\cellsize);
			\fill[gridblue] (2*\cellsize,0*\cellsize) rectangle ++(\cellsize,\cellsize);
			
			\fill[gridredlight] (0*\cellsize,1*\cellsize) rectangle ++(\cellsize,\cellsize);
			\fill[gridredlight] (1*\cellsize,0*\cellsize) rectangle ++(\cellsize,\cellsize);
			\fill[gridredlight] (2*\cellsize,2*\cellsize) rectangle ++(\cellsize,\cellsize);
			
			\fill[gridgreenlight] (0*\cellsize,0*\cellsize) rectangle ++(\cellsize,\cellsize);
			\fill[gridgreenlight] (1*\cellsize,2*\cellsize) rectangle ++(\cellsize,\cellsize);
			\fill[gridgreenlight] (2*\cellsize,1*\cellsize) rectangle ++(\cellsize,\cellsize);
			
			\drawgrid
			\panelaxes
		\end{scope}
		
		\begin{scope}[shift={(3*\panelsep,0)}]
			\node at (1.5*\cellsize,3.7*\cellsize) {(d) Identity};
			
			\fill[gridblue] (0*\cellsize,2*\cellsize) rectangle ++(\cellsize,\cellsize);
			\fill[gridteallight] (1*\cellsize,2*\cellsize) rectangle ++(\cellsize,\cellsize);
			\fill[gridbrownlight] (2*\cellsize,2*\cellsize) rectangle ++(\cellsize,\cellsize);
			
			\fill[gridyellowlight] (0*\cellsize,1*\cellsize) rectangle ++(\cellsize,\cellsize);
			\fill[gridpinklight] (1*\cellsize,1*\cellsize) rectangle ++(\cellsize,\cellsize);
			\fill[gridgreenlight] (2*\cellsize,1*\cellsize) rectangle ++(\cellsize,\cellsize);
			
			\fill[gridolivelight] (0*\cellsize,0*\cellsize) rectangle ++(\cellsize,\cellsize);
			\fill[gridredlight] (1*\cellsize,0*\cellsize) rectangle ++(\cellsize,\cellsize);
			\fill[gridpurplelight] (2*\cellsize,0*\cellsize) rectangle ++(\cellsize,\cellsize);
			
			\drawgrid
			\panelaxes
		\end{scope}
	\end{tikzpicture}
	
	\caption[variables]{Diagrams depicting information about a toy system $B$, akin to those used for epistemic states in Spekkens' toy theory \cite{Sp07}.	
		Individual squares in a grid are elements of the ontic state space $\B$.
		In this example, $B$ is a single particle with a finite configuration space, so that 
		its position $Q$ (labelling columns) and its momentum $P$ (labelling rows) each have three possible values, i.e.\ $\Q \cong \P \cong \mathbb{Z}_3$.
		Each shaded region of a single colour represents a subset of the state space for a given value of a variable $Z \colon \B \to \Z$, for (a) a constant variable $Z(b) = 0$, (b) the position variable $Z = \pi_\Q$, (c) the sum of position and momentum $Z = \pi_\Q+\pi_\P$, and (d) the identity variable $Z=\id_{\B}$.}
	\label{fig:variables}
\end{figure}

We denote variables by capital letters (such as $Z$ and $E$) and spaces of their values in sans-serif (such as $\Z$ and $\E$). 
In most cases of interest, the values form a vector space and the variable is a linear map. 

An important kind of variable is a \newterm{Poisson}\footnotemark{} variable{\,---\,}a surjective linear variable $Z$ satisfying
\footnotetext{See \cref{foo:Poisson} for a justification of the terminology.}%
\begin{equation}\label{eq:Poisson}
	Z \Omega_\B Z^T=0. 
\end{equation}
For a physical system $\B = \Q \oplus \P$, an example of a Poisson variable is the projection $\pi_\Q$ onto the configuration space $\Q$ of the positions of $n$ particles, see \cref{fig:variables} (b).
Condition \eqref{eq:Poisson} then says that the Poisson bracket between any two positions of the individual particles should vanish.
This is indeed the case and highlights why, more generally, Poisson variables model a set of compatible degrees of freedom of the system (cf.\ \cref{lem:Poisson}).
Note that in Spekkens' toy theory \cite{spekkens2016quasi,Sp07}, valid epistemic states are precisely those sets of ontic states that correspond, in our terminology, to values of Poisson variables.

Here, ``Poisson'' denotes a mathematical property of variables on any toy system.
When the variable in question is a subject's manifest variable, we speak of a \emph{Poisson manifest variable} and interpret it as a record-carrying variable.

\subsection{Toy Subjects}\label{sec:toy_subjects}

In our model, observers are embedded in the sense that they are themselves described as physical systems within the theory.
To represent the empirical records available to such observers, we equip their physical state spaces with manifest variables (cf.\ \cref{fig:domains_for_epistemic_horizons}).
This allows us to study what information such observers can acquire about other systems.

\paragraph{Manifest variables.}
We define a \newterm{toy subject} to be a toy system $S$ equipped with a manifest variable.\footnote{We often write `subject' instead of `toy subject' and `system' instead of `toy system'.}

In our previous work \cite[Definition 2.5]{Fa24}, the manifest variable was required to be Poisson, i.e.\ to satisfy \cref{eq:Poisson}. 
Here we allow a more general notion: 
A manifest variable is any surjective linear map $E \colon \S \to \E$, where $\E$ is the vector space of its possible values.

For a given ontic state $s\in\S$, the value $E(s) \in \E$ constitutes the subject's empirical record. 
In other words, the manifest variable specifies which distinctions in the subject's physical state are not relevant for the information it has about the world{\,---\,}they are pairs of states that correspond to the same empirical record.

We say that the subject knows a property provided that its record entails that property.
This is illustrated in \cref{fig:interaction}, where the interaction generates correlations such that the $E(s)=0$ implies that the initial position of the toy object $B$ was $0$, and also that its final momentum is $1$. 

In this paper, we consider three distinct types of manifest variables:
\begin{enumerate}
	\item For a \newterm{trivial} manifest variable, $E$ is constant. 
	The canonical example is the unique linear map $\S \to \{0\}$. 
	Every ontic state of the subject gives rise to the same record, which therefore carries no information.
	
	\item For a \newterm{Poisson} manifest variable, $E$ is a maximal Poisson variable (\cref{def:max_Poisson}).
	A chief example is the position projection $\pi_\Q \colon \S \to \Q$. 
	Its records distinguish one half of the subject's underlying degrees of freedom.
	We may interpret this as partial self-knowledge.
	
	\item A \newterm{complete} manifest variable is the identity map $\id_\S \colon \S \to \S$.
	The record is uniquely specified by the subject's ontic state.
	We may interpret this as complete self-knowledge.
\end{enumerate}

\begin{figure}[htb]\centering
	\begin{tikzpicture}[>=stealth]
		\begin{scope}[shift={(0,0)}]
			\node at (1.5*\cellsize,3.7*\cellsize) {Toy object $B$};
			\foreach \y in {0,1,2}
			{
				\fill[gridyellow] (0,\y*\cellsize) rectangle ++(\cellsize,\cellsize);
				\fill[gridred] (1*\cellsize,\y*\cellsize) rectangle ++(\cellsize,\cellsize);
				\fill[gridbrown] (2*\cellsize,\y*\cellsize) rectangle ++(\cellsize,\cellsize);
			}
			\drawgrid
			\panelaxes
			\node (past object) at (3*\cellsize,2.5*\cellsize) {};
		\end{scope}
		
		\begin{scope}[shift={(2*\panelsep,0)}]
			\node at (1.5*\cellsize,3.7*\cellsize) {Ready system $R$};
			\foreach \y in {0,1,2}
			{
				\fill[gridgray] (0,\y*\cellsize) rectangle ++(\cellsize,\cellsize);
				\fill[white] (1*\cellsize,\y*\cellsize) rectangle ++(\cellsize,\cellsize);
				\fill[white] (2*\cellsize,\y*\cellsize) rectangle ++(\cellsize,\cellsize);
			}
			\drawgrid
			\panelaxes
			\node (past ready system) at (0*\cellsize,2.5*\cellsize) {};
		\end{scope}
		
		\begin{scope}[shift={(0,1*\panelsep)}]
			\node at (1.5*\cellsize,3.7*\cellsize) {Toy object $B$};
			\foreach \y in {0,1,2}
			{
				\fill[gridpurple] (\y*\cellsize,0*\cellsize) rectangle ++(\cellsize,\cellsize);
				\fill[gridteal] (\y*\cellsize,1*\cellsize) rectangle ++(\cellsize,\cellsize);
				\fill[gridgreenlight] (\y*\cellsize,2*\cellsize) rectangle ++(\cellsize,\cellsize);
			}
			\drawgrid
			\panelaxes
			\node (future object) at (3*\cellsize,0.5*\cellsize) {};
		\end{scope}

		\begin{scope}[shift={(2*\panelsep,1*\panelsep)}]
			\node at (1.5*\cellsize,3.7*\cellsize) {Toy subject $S$};
			\foreach \y in {0,1,2}
			{
				\filltwo{0*\cellsize}{\y*\cellsize}{\cellsize}{gridpurple}{gridyellow};
				\filltwo{1*\cellsize}{\y*\cellsize}{\cellsize}{gridteal}{gridred};
				\filltwo{2*\cellsize}{\y*\cellsize}{\cellsize}{gridgreenlight}{gridbrown};
			}
			\drawgrid
			\panelaxes
			\node (future subject) at (0*\cellsize,0.5*\cellsize) {};
		\end{scope}
		
		\begin{scope}[shift={(1*\panelsep,0.5*\panelsep)}]
			\node[draw, rounded corners,
				rectangle,
				minimum width=3*\cellsize,
				minimum height=0.75cm,
				align=center,
				name=measurement
				] at (1.5*\cellsize,1.5*\cellsize) {measurement $M$};
				
			\draw[->, shorten >=3pt, shorten <=3pt] (past object) to [bend right=20] ($(measurement.south west)!0.15!(measurement.south east)$);
			\draw[->, shorten >=3pt, shorten <=3pt] (past ready system) to [bend left=20] ($(measurement.south west)!0.85!(measurement.south east)$);
			\draw[->, shorten >=3pt, shorten <=3pt] ($(measurement.north west)!0.15!(measurement.north east)$) to [bend right=20] (future object);
			\draw[->, shorten >=3pt, shorten <=3pt] ($(measurement.north west)!0.85!(measurement.north east)$) to [bend left=20] (future subject);
		\end{scope}
		
		\begin{scope}[shift={(-0.5*\panelsep,0.5*\panelsep+1.5*\cellsize)}]
			\draw[->,line width=1pt] (0,-0.8*\panelsep) -- (0,0.8*\panelsep);
			\node[left] at (-1*\cellsize,-0.7*\panelsep) {Past};
			\node[left] at (-1*\cellsize,0.7*\panelsep) {Future};
		\end{scope}
	\end{tikzpicture}
	\caption{Illustration of a measurement interaction, with individual state spaces as in \cref{fig:variables}. 
	The manifest variable of the toy subject is $\pi_\Q$ and subject's states with identical records are represented by matching colours.
	The ready state variability $\V$ is $\P$ (allowed initial states are in grey). 
	Matching colours between the subject and the two instances of the object depict the correlation facilitated by the interaction.
	Namely, the position of the subject being $0$ (yellow/purple) implies that the initial position of the object is $0$ (yellow) and also that the final momentum of the object is $0$ (purple), and similarly for other colours.
	The inaccessible subspaces are $\I_\ret = \P$ and $\I_\pre = \Q$.
	The retrodicted variable is thus $\pi_\Q$ and the predicted variable is $\pi_\P$.}
	\label{fig:interaction}
\end{figure}

\paragraph{Ready states.} 
The learning capabilities of embedded observers also depend on the resources of initial information available to them.
To give a full account of an observer for our purposes, we need to specify what information is available about the ready system $R$, which is used as the input of a measurement interaction (cf.\ \cref{fig:interaction}).

The \newterm{ready state} is defined by a subspace $\V$ of the ontic state space $\R$.
Essentially, $\V$ represents the set of possible initial states of $\R$, which is why we also refer to it as the ready state variability.
Intuitively, the smaller the variability, the greater the initial resource of non-uniformity \cite{La61,Gour2015}, which enables more precise learning. 
We distinguish three types of ready states based on `how large' their variability is: 
\begin{enumerate}
	\item A \newterm{trivial} ready state corresponds to no constraint on the initial state, i.e.\ the variability is the whole space, $\V=\R$. 
	
	\item For a ready state of \newterm{Poisson} type, the variability is the kernel of a Poisson variable.\footnote{By \cref{lem:Poisson}, this means the variability of a Poisson ready state is an arbitrary coisotropic subspace of $\R$.}
	
	\item The most powerful is a \newterm{complete} ready state, whose variability is the trivial subspace, $\V=\{0\}$. 
	The initial ontic state is fixed. 
\end{enumerate}

A trivial ready state contains no information{\,---\,}any initial state is possible{\,---\,}so that $\V$ would be given by the blue region of \cref{fig:variables} (a). 
A Poisson ready state corresponds to partial information; the blue regions of \cref{fig:variables} (a), (b) and (c) provide examples thereof.
Finally, a complete ready state expresses perfect information about the initial state, as is the case for the blue region of \cref{fig:variables} (d).  

When we say that a toy subject has access to ready states of a certain type, we mean that every conceivable ready system $R$ has an initial state whose variability $\V$ is of that type. 

It is worth mentioning that all processes in nomic toy theory are surjective (\cref{lem:ops_surj}) and thus they do not lead to any information gain. Indeed, no allowed physical transformation can turn a ready state into a more informative one; the dimension of ready state variability cannot decrease. 

\subsection{Learning}\label{sec:learning_types}

We conceptualise learning about a system $B$, which we call the \newterm{toy object}, as a physical interaction between $B$ and the world.
Because we allow discarding among physical transformations, we do not assume that the input and output of this process must coincide. 
Specifically, the environmental input is a ready system $R$, which supplies the resource of initial information in the form of its ready state.
The environmental output is a toy subject $S$, whose final state carries an empirical record determined by its manifest variable $E$.\footnotemark{}
\footnotetext{In a concrete measurement model, $S$ may be identified as a subsystem of $R$. 
Degrees of freedom in $\R$ complementary to $\S$ are then viewed as irrelevant to the subject's record.}%
See \cref{fig:interaction} for a schematic representation.

\paragraph{Measurements.} 
A \newterm{measurement interaction} is consequently a physical transformation of the form
\begin{equation}
	M \colon \B \oplus \R \to \B\oplus \S.
\end{equation}
When we refer to a measurement $M$, the ready system $R$ comes with a choice of ready state variability $\V$, and the toy subject $S$ comes with a choice of manifest variable $E$ (see \cref{sec:toy_subjects}).

The initial state of the object is left unconstrained, in contrast to that of the ready system, because we focus on learning without prior information about the object. 
In line with our aim of making learning resources explicit, any such prior information would itself have to be accounted for by an earlier learning process and included as part of $M$, rather than left as an implicit background condition.

Specific measurement interactions that play an important role in our results are collected in \cref{tab:basic-interactions}.

\begin{table}[tb]
	\centering
	\small
	\renewcommand{\arraystretch}{1.25}
	\begin{tblr}{
			width=\textwidth,
			colspec={Q[c,m] X[l,m] X[l,m] Q[c,m]},
			row{1}={font=\bfseries},
			vline{2}={-}{},
			hline{2}={-}{},
		}
		Interaction & Action & Main use & Definition \\
		
		$M^{\rm Pson}$
		&
		Encodes the value of a Poisson variable in the record
		&
		Repeatable learning of compatible properties
		&
		\eqref{eq:M_Pson}
		\\
		
		$M^{\rm swap}$
		&
		Exchanges the object and ready system
		&
		Optimal retrodiction, without repeatability
		&
		\eqref{eq:swap}
		\\
		
		$M^{\rm full}$
		&
		Encodes the initial ontic state in the record
		&
		Optimal repeatable learning (for a complete ready state)
		&
		\eqref{eq:M_full}
		\\
		
		$M^{\rm pred}$
		&
		Encodes the final ontic state in the record
		&
		Optimal prediction (for a complete manifest variable)
		&
		\eqref{eq:M_pred}
	\end{tblr}
	\caption{The main measurements we use and their role in our analysis of epistemic horizons.}
	\label{tab:basic-interactions}
\end{table}

We associate two linear maps
\begin{align}
	\label{eq:beta} \beta &\colon \B \oplus \V \to \B;  &  b\oplus v &\mapsto \pi_\B \comp M (b\oplus v),\\
	\label{eq:epsilon} \epsilon &\colon \B \oplus \V \to \E;  &  b\oplus v &\mapsto E \comp \pi_\S \comp M(b\oplus v),
\end{align} 
to each measurement $M$, where $\pi_{\B}$ and $\pi_\S$ are projections that reduce the output of $M$ to $\B$ and $\S$, respectively.
Intuitively, $\beta$ describes the disturbance of the toy object by the measurement when the ready system is restricted to initial states specified by the ready state variability $\V$. 
Similarly, $\epsilon$ describes how the subject's record depends on the joint initial state prior to the measurement interaction.

\paragraph{Subject's knowledge.}
In our setup for learning, we assume that the subject has certain background knowledge. In particular, the subject knows the ontic state spaces of the systems involved, the interaction $M$, and the set $\V$ of allowed initial states of the ready system. 
We call this the subject's \emph{non-empirical} knowledge, to distinguish it from the \emph{empirical} knowledge inferred from its record after the measurement. 

\paragraph{Types of learning.}
Each measurement interaction $M$ facilitates two basic types of learning.
Namely, the subject may learn about the initial state of $B$, which we refer to as \newterm{retrodiction}.
Simultaneously, it may obtain information about the final state of the object and thus engage in \newterm{prediction}. 
In the special case when $M$ is \newterm{repeatable}, retrodicted and predicted information coincides and thus we do not need to distinguish them. 

A crucial feature of $M$ that we can use to characterise learned information is the kernel \mbox{of $\epsilon$}, which consists of those variations of initial states that do not influence the subject's record.
That is, two elements of $\B \oplus \V$ result in the same record if and only if they differ by an element of $\ker(\epsilon)$. 
This subspace of $\B \oplus \V$ thus consists precisely of those joint variations of the object and ready system that remain invisible to the subject.
For example, in \cref{fig:interaction}, coloured subsets of $\S$ indicate distinct records.
Some pairs of joint initial states produce the same record after the interaction and thus cannot be distinguished given a value of the record.

\paragraph{Retrodiction.}
Projecting this subspace onto the state space of the object gives the \newterm{retrodictively inaccessible subspace}
\begin{equation}
	\I_\ret \coloneq \pi_\B(\ker\epsilon)
\end{equation}
that tells us which initial ontic states of $B$ cannot be distinguished by the subject.
The retrodictive information learned by the subject is then characterised by\footnote{See \cref{prop:concrete_measurable} for this characterisation.} what we call the variable \newterm{retrodicted by} $M$, given by the quotient map
\begin{equation}
		L_\ret \colon \B \to \newfaktor{\B}{\I_\ret},  \qquad \qquad  L_\ret(b) \coloneq \Set{ b + i  \given  i \in \I_\ret }.
\end{equation}
For example, in \cref{fig:interaction}, changing only the initial momentum of the object does not change the subject's record, i.e.\ we have $\I_\ret=\P$. 
The subject can therefore use such a measurement to learn the initial position of the object, but not the full initial ontic state.
The retrodicted variable is equivalent to the projection $\pi_\Q$.

\paragraph{Prediction.}
Unlike the initial state of the object, its final state may be constrained by the subject's non-empirical knowledge.
In particular, if the ready state variability is not the whole of $\R$, then it is conceivable that not all elements of $\B$ can occur as possible final states after the measurement interaction.
The set of those that can is precisely the image of $\beta$, which we call the \newterm{preparation support} of $M$.
It thus characterises the non-empirical knowledge of the subject about the final state of $B$. 

Mapping the indistinguishable states in $\ker(\epsilon)$ to the future instance of the object then yields the \newterm{predictively inaccessible subspace}
\begin{equation}\label{eq:I_pre_main}
	\I_\pre \coloneq \beta(\ker\epsilon)
\end{equation}
that tells us which final ontic states of $B$ cannot be distinguished by the subject on the basis of its empirical knowledge.
The predictive information gathered by the subject is then described by the pair consisting of the preparation support and the predictively inaccessible subspace\footnote{See \cref{prop:fixed_from_measured} for this characterisation.}, which we can collect to define the variable \newterm{predicted by} $M$, as given by the quotient map
\begin{equation}\label{eq:predicted_main}
	L_\pre \colon \im(\beta) \to \newfaktor{\im(\beta)}{\I_\pre},  \qquad \qquad  L_\pre(b) \coloneq \Set{ b + i  \given  i \in \I_\pre },
\end{equation}
which captures the correlation between the post-measurement state of the object and the subject's record.
For the interaction in \cref{fig:interaction}, the subject's record distinguishes final momenta of the object but not its final positions. 
Hence, we have $\I_\pre = \Q$ and the predicted variable is equivalent to $\pi_\P$. 
The preparation support is the set of final object states that can arise from an arbitrary initial state of $\B \oplus \V$, which in this example is the set of all final ontic state of $\B$. 

In general, the predicted variable partitions the preparation support into subsets, each of which contains all the final ontic states of the object compatible with a particular record.
See \cref{fig:preparation_support} for an illustration.

\begin{figure}[htb]\centering
	\begin{tikzpicture}
		\begin{scope}[shift={(0,0)}]
			\node at (1.5*\cellsize,3.7*\cellsize) {(a) Full preparation support: $\im(\beta) = \B$};
			
			\foreach \y in {0,1,2}
			{
				\fill[gridblue] (0,\y*\cellsize) rectangle ++(\cellsize,\cellsize);
				\fill[gridredlight] (1*\cellsize,\y*\cellsize) rectangle ++(\cellsize,\cellsize);
				\fill[gridgreenlight] (2*\cellsize,\y*\cellsize) rectangle ++(\cellsize,\cellsize);
			}
			
			\drawgrid
			\panelaxes
		\end{scope}
		
		\begin{scope}[shift={(2*\panelsep,0)}]
			\node at (1.5*\cellsize,3.7*\cellsize) {(b) Restricted preparation support: $\im(\beta) = \P$};
			
			\foreach \y in {0,1,2}
			{
				\fillthree{0}{\y*\cellsize}{\cellsize}{gridblue}{gridredlight}{gridgreenlight};
			}
			
			\drawgrid
			\panelaxes
		\end{scope}
	\end{tikzpicture}
	\caption[prepsupport]{An illustration of predicted variables in terms of diagrams from \cref{fig:variables}.
		Each of the three colours depicts the set of possible final ontic states of $B$ compatible with a given record, i.e.\ a given value of the subject's manifest variable.
		In both cases, the predictively inaccessible subspace is $\I_\pre = \P$.
		In (a) all ontic states of $B$ are possible post-measurement states{\,---\,}the preparation support is $\im(\beta) = \B$.
		The subject has no non-empirical knowledge about $B$ but has empirical knowledge of the position of $B$.
		The predicted variable is $\pi_\Q$, which is a Poisson variable.
		In (b) only states with zero position are possible post-measurement states{\,---\,}the preparation support is $\im(\beta) = \P$.
		The subject has no empirical knowledge about $B$ but has non-empirical knowledge of the final position of $B$.
		The predicted variable is the identity map on $\im(\beta)$, which is a Poisson variable again.
	}
	\label{fig:preparation_support}
\end{figure}

\paragraph{Repeatable measurements.}
Some measurement interactions, such as the swap operation introduced in \cref{eq:swap}, allow the subject to learn about the initial state of $B$, but this information is no longer present in the state of $B$ after the measurement.
Indeed, if a second subject were to perform the same measurement immediately afterwards, the information learned by the two subjects would be uncorrelated. 
The swapping interaction does not, in this sense, produce reliable records of the state of the object.

To capture the idea of the information learned about the past persisting in the object after the measurement, we introduce the notion of repeatability.
A measurement is said to be \newterm{repeatable} if it satisfies
\begin{equation}\label{eq:repeatable}
	L_{\rm ret}(b) = L_{\rm ret}\bigl( \beta(b \oplus v) \bigr),
\end{equation}
for all $b \in \B$ and all $v \in \V$.
The left-hand side expresses the information retrodicted by $M$, while the right-hand side is the information retrodicted by a second subject that applies $M$ immediately after the first one does.
As we show in \cref{sec:bitemporal}, variables retrodicted by a repeatable measurement can also be predicted (\cref{cor:retrodiction_implies_prediction}).
In this sense, repeatable measurements enable simultaneous learning about the past and the future.

\begin{remark}[Comparison with nomic toy theory presented in \cite{Fa24}]\label{rem:comparison}
	The formulation used in this paper generalises our earlier account \cite{Fa24} in several respects. 
	First, we consider physical transformations which include all discarding maps.
	This allows us to explicitly treat irreversible operations, and hence to distinguish the ready system $R$ entering the interaction from the subject $S$ carrying the manifest variable.
	
	Second, we consider subjects of many different kinds.
	In \cite{Fa24}, both the manifest variable and the ready state were necessarily Poisson.
	Moreover, we justified the constraint on the initial state by appealing to pre-selection by the subject.
	This is, however, a problematic assumption (see \cref{rem:ready_state_issues} for more details).
	Here, manifest variables and ready states are independent choices: 
	The manifest variable specifies subject's empirical records, while the ready state specifies which resources of information are available prior to the interaction.
	
	Finally, some of the terminology has changed. 
	What we called a measured variable in \cite{Fa24} corresponds here to a variable retrodicted by a measurement interaction. 
	More precisely, for the measurements considered there, \cref{prop:concrete_measurable} shows that the measured variable \cite[Definition 2.9]{Fa24} coincides with the retrodicted variable $L_{\rm ret}$. 
	The epistemic horizon in \cite[Theorem 3.1]{Fa24} is thus recovered here as the retrodictive case with Poisson manifest variables and Poisson ready states.
\end{remark}

\section{Resources of Learning and Epistemic Horizons}\label{sec:EH}

As outlined in \cref{fig:domains_for_epistemic_horizons}, limitations on what a subject can learn about the world stem from several assumptions. 
To summarise, besides the physical theory and its allowed interactions, the relevant ones are
\begin{enumerate}
	\item the \emph{learning-type} assumption about the kind of information learned by the subject,
	
	\item the \emph{ready state} assumption about the nature of information sources available to the subject prior to learning, and
	
	\item the \emph{manifest variable} assumption specifying which distinctions in the subject's physical state carry the empirical record.
\end{enumerate}

We consider three options for each of these assumptions.
As detailed in \cref{sec:learning_types}, for a generic measurement the subject can either learn about the past or the future. 
Additionally, we can restrict the interaction to repeatable measurements, in which case the subject learns about both past and future simultaneously.
These correspond to the three learning types: retrodictive, repeatable, and predictive learning.
The three options for allowed ready states and for manifest variables have been introduced in \cref{sec:toy_subjects}.
They can be trivial, Poisson, or complete.

In each of the 27 cases, we ask what information the corresponding toy subject can learn about toy objects.
A fundamental limitation on learning is called an \newterm{epistemic horizon}.
We find three broad types of epistemic horizons (see \cref{fig:EH_pictures}):
\begin{enumerate}
	\item[(a)] In the case of the \newterm{total} epistemic horizon, the subject cannot learn anything, i.e.\ the learned variables are constant.
	
	\item[(b)] In the \newterm{knowledge-balance} epistemic horizon,\footnotemark{} a variable can be learned if and only if it is a Poisson variable.
	In particular, the subject can learn at most half of the degrees of freedom of the toy object. 
	\footnotetext{The name refers to the \emph{knowledge-balance principle} \cite{Sp07}, because of the connection between Poisson variables and epistemic states in Spekkens' toy theory (\cref{sec:variables}).}%
	
	\item[(c)] There is \newterm{no} epistemic horizon when the subject can learn the precise ontic state of the object. 
\end{enumerate}

\begin{figure}[htb]\centering
	\begin{tikzpicture}		
		\begin{scope}[shift={(0,0)}]
			\node at (1.5*\cellsize,3.7*\cellsize) {(a) total};
			
			\fill[gridblue] (0,0) rectangle ++(3*\cellsize,3*\cellsize);
			
			\drawgrid
			\panelaxes
		\end{scope}
		
		\begin{scope}[shift={(\panelsep,0)}]
			\node at (1.5*\cellsize,3.7*\cellsize) {(b) knowledge-balance};
			
			\foreach \y in {0,1,2}
			\fill[gridblue] (0,\y*\cellsize) rectangle ++(\cellsize,\cellsize);
			
			\drawgrid
			\panelaxes
		\end{scope}
		
		\begin{scope}[shift={(2*\panelsep,0)}]
			\node at (1.5*\cellsize,3.7*\cellsize) {(c) none};
			
			\fill[gridblue] (0,0) rectangle ++(\cellsize,\cellsize);
			
			\drawgrid
			\panelaxes
		\end{scope}		
	\end{tikzpicture}
	\caption[EH]{Three types of epistemic horizons (total, knowledge-balance and none) in terms of the diagrams from \cref{fig:variables}.
		The shaded region depicts the set of possible ontic states of the toy object compatible with the subject's record. 
		(a) If the epistemic horizon is total, the toy subject cannot learn anything. 
		(b) If it is knowledge-balance, the record can determine the value of a Poisson variable, such as $\pi_\Q$ in this example.
		(c) If there is no epistemic horizon, the subject can learn the precise ontic state of the toy object.}
	\label{fig:EH_pictures}
\end{figure}

Our results are summarised in \cref{tab:retrodiction,tab:repeatable,tab:prediction}. 
We omit the cases with trivial manifest variables, because none of them allows the subject to learn anything{\,---\,}they give rise to a total epistemic horizon irrespective of any other assumptions. 

This classification of epistemic horizons into three types does not tell us everything about what the subjects can and cannot learn.
For example, in the case of prediction, it does not specify what the preparation support is and therefore cannot distinguish the two cases depicted in \cref{fig:preparation_support}.
We provide a detailed account of our results including these nuances in \cref{sec:retrodiction,sec:repeatable,sec:prediction}.

\subsection{Retrodiction}\label{sec:retrodiction}

We begin with a presentation of epistemic horizons for retrodiction in nomic toy theory, as summarised in \cref{tab:retrodiction}. 
The proofs of all results in this section are provided in Appendix~\ref{sec:proofs_retrodictive}.  

\begin{table}[htb]
	\centering
	\renewcommand{\arraystretch}{1.5}~
	\begin{tabular}{cl|cc}
		\multicolumn{2}{c|}{\multirow{2}{*}{\makecell{\textbf{Retrodiction}}}} & \multicolumn{2}{c}{\textbf{Manifest variable}} \\
		\multicolumn{2}{c|}{} & \textit{Poisson} & \parbox{\widthof{knowledge-balance (\ref{thm:repeatable_complete_MV_Poisson_RS})}}{\centering \textit{Complete}} \\ 
		\hline
		\multirow[m]{3}{*}{\begin{sideways}\makecell{\textbf{Ready} \\ \textbf{state}}\end{sideways}} & \textit{Trivial} & \tableyellow knowledge-balance (\ref{thm:Poisson_MV_trivial_RS}) & \tablegreen none (\ref{cor:complete_MV_retrodictive})  \\
		& \textit{Poisson} & \tableyellow knowledge-balance (\ref{thm:Poisson_MV_Poisson_RS}) & \tablegreen none (\ref{cor:complete_MV_retrodictive})  \\
		& \textit{Complete} & \tablegreen none (\ref{thm:Poisson_MV_fixed_RS})  & \tablegreen none (\ref{cor:complete_MV_retrodictive}) 
	\end{tabular}
	\caption{Epistemic horizons for retrodiction in nomic toy theory with links to the theorems.}
	\label{tab:retrodiction}
\end{table}

\paragraph{Poisson manifest variable.}
First, we consider the case of partial self-knowledge in the form of subjects with Poisson manifest variables.
In the context of Poisson ready states, we showed in \cite[Theorem 3.1]{Fa24} that a variable can be retrodicted if and only if it is Poisson.
Since the set-up of nomic toy theory here differs slightly from that of \cite{Fa24} (see \cref{rem:comparison}), we include the new proof in full in the Appendix.

\begin{theorem}[Poisson ready state]\label{thm:Poisson_MV_Poisson_RS}
	For subjects with Poisson manifest variables and Poisson ready states, a variable can be retrodicted if and only if it is Poisson.
\end{theorem}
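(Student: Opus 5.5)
We prove the two directions separately. Throughout, a variable ``can be retrodicted'' means that it is, up to equivalence of the induced partitions, the variable $L_\ret \colon \B \to \B/\I_\ret$ for some measurement $M \colon \B \oplus \R \to \B \oplus \S$. Here the subject carries a maximal Poisson manifest variable $E$, and the ready state variability $\V$ is the kernel of a Poisson variable, i.e.\ a coisotropic subspace of $\R$. Being Poisson for $L_\ret$ amounts to $\I_\ret$ being coisotropic, that is, $\I_\ret^{\perp} \subseteq \I_\ret$, where $\perp$ denotes the symplectic complement. This is the reformulation we obtain from \cref{lem:Poisson}.

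\emph{Achievability.} Given a Poisson variable $Z \colon \B \to \Z$, we pick a ready system $R$ isomorphic to $\B$, with ready state $\V = \ker(\pi_\Q)$ equal to the momentum subspace. We choose a symplectic basis of $\B$ in which $Z$ reads off some of the positions. We then use the interaction $M^{\rm Pson}$ of \eqref{eq:M_Pson}, with $\S = \R$ and $E = \pi_\Q$. Concretely, this is the symplectic ``CNOT-type'' map
\[
q_R \mapsto q_R + Z(b), \qquad p_B \mapsto p_B - Z^T p_R,
\]
completed symplectically. Since the initial $q_R$ is fixed at $0$, the record $E(s)$ equals $Z(b)$. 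Hence $\ker\epsilon = (\ker Z) \oplus \V$, which gives $\I_\ret = \ker Z$ and $L_\ret \cong Z$. (Constant variables are the degenerate case $Z = 0$.)

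\emph{Necessity.} Using \cref{thm:dilation}, we write $M$ as a symplectic map $\B \oplus \R \to \B \oplus \S \oplus \mathsf{D}$ followed by discarding $\mathsf{D}$. This lets us treat $\epsilon$ as the restriction to $\B \oplus \V$ of a linear map $F \colon \B \oplus \R \to \E$ of the form $F = E \circ \pi_\S \circ U$, with $U$ symplectic. Because $E$ is Poisson and $U$ and $\pi_\S$ preserve the Poisson bracket, $F$ is a Poisson variable on $\B \oplus \R$. Equivalently, $\ker F$ is coisotropic, i.e.\ $\im F^T \subseteq \ker F$ in the symplectic pairing.

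Then $\ker \epsilon = \ker F \cap (\B \oplus \V)$ is an intersection of two coisotropic subspaces. The core is a linear-symplectic lemma: if $K \subseteq \B \oplus \R$ is coisotropic and $\V \subseteq \R$ is coisotropic, then $\pi_\B\bigl(K \cap (\B \oplus \V)\bigr)$ is coisotropic in $\B$. We would prove it by taking complements:
\[
\bigl(\pi_\B(K \cap (\B \oplus \V))\bigr)^{\perp} = \iota_\B^{-1}\bigl(K^\perp + (0 \oplus \V^\perp)\bigr).
\]
This uses that the symplectic complement turns projections into preimages of inclusions and intersections into sums. Now take $x$ in this set, so $x \oplus 0 = k + (0 \oplus w)$ with $k \in K^\perp \subseteq K$ and $w \in \V^\perp \subseteq \V$. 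It follows that $x \oplus 0 - (0 \oplus w) = k$ lies in $K$, and it lies in $\B \oplus \V$ since $-w \in \V$. Therefore $x \oplus (-w) \in K \cap (\B \oplus \V)$, and projecting gives $x \in \pi_\B(\ldots)$. So $\I_\ret^\perp \subseteq \I_\ret$, i.e.\ $L_\ret$ is Poisson.

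The main obstacle is the step showing that $F$ is Poisson. The issue is that $\pi_\S \circ U$ contains the discarding map, so we must check that $F\,\Omega\,F^T = E\,\pi_\S\,U\,\Omega\,U^T\pi_\S^T E^T = E\,\Omega_\S\,E^T = 0$. This works once $U$ is symplectic and $\pi_\S$ is the projection onto a symplectic direct summand. Affine shifts in $M$ only translate $\epsilon$ and do not change $\ker \epsilon$, so they can be ignored. The complement identity in the lemma is the other point needing care, especially over finite fields, where we rely on $\dim W + \dim W^\perp = \dim$ of the ambient space and on $W^{\perp\perp} = W$.
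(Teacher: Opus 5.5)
Your proof is correct. The sufficiency half is essentially the paper's construction $M^{\rm Pson}$. The paper takes $\R=\S=\Q\oplus\P$ with $\Q\cong\P\cong\Z$, which spares you both the adapted symplectic basis and the implicit embedding of $\Z$ into the $n$-dimensional position space of your copy $\R\cong\B$.

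The necessity half takes a genuinely different route. The paper goes through \cref{prop:concrete_measurable}. It fixes a free manifest subspace $\F$, so that $L_\ret$ is the corestriction of $M_{\F\B}=\pi_\F E M_{\S\B}$. It then sandwiches $M\Omega M^T=\Omega$ between $\pi_\F E\pi_\S$ and its transpose. Finally, it kills the ready-system term using $M_{\F\V}=0$ together with the projection $\pi_\W$ onto a complement of $\V$, which is Poisson because its kernel $\V$ is coisotropic.

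You instead work directly with the definition $\I_\ret=\pi_\B(\ker\epsilon)$ and isolate a basis-free lemma: if $K\subseteq\B\oplus\R$ and $\V\subseteq\R$ are coisotropic, then so is $\pi_\B(K\cap(\B\oplus\V))$. Your complement identity $(\pi_\B(K\cap(\B\oplus\V)))^\omega=\iota_\B^{-1}(K^\omega+(\{0\}\oplus\V^\omega))$ holds, because $\omega_\B(x,\pi_\B a)=\omega_{\B\oplus\R}(x\oplus 0,a)$ and $(\B\oplus\V)^\omega=\{0\}\oplus\V^\omega$. Your closing inclusion argument is sound.

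\emph{What your version buys.} It is coordinate- and field-independent. By contrast, the paper's orthogonal complement of $\V$ needs a small adjustment over $\mathbb{Z}_\ell$, where any complement of $\V$ works instead. Your argument also makes plain that only the coisotropy of $\ker(E\pi_\S M)$ and of $\V$ is used; in effect it is the statement that composing coisotropic relations yields a coisotropic one.

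\emph{What the paper's route buys.} Its block-matrix template, with the free manifest subspace and $\pi_\W$, combined with \cref{eq:repeatability,eq:no_disturbance}, is reused to prove \cref{thm:repeatable_complete_MV_Poisson_RS}. There $E$ is complete, so $\ker(E\pi_\S M)$ is no longer coisotropic and your lemma would not apply as stated.

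Two small points. First, the step you call the main obstacle is immediate and needs no \cref{thm:dilation}: $M$ is itself Poisson, so $(E\pi_\S M)\,\Omega\,(E\pi_\S M)^T=E\Omega_\S E^T=0$. Second, coisotropy of $\ker F$ should read $(\ker F)^\omega=\im(\Omega F^T)\subseteq\ker F$, not $\im F^T\subseteq\ker F$.
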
 

One might expect that replacing the Poisson ready state by a trivial one would further restrict what the subject can retrodict. 
This is not the case, as can be seen from a simple swapping interaction. 
If the only aim is to correlate the object's initial state with the subject's final state, one can simply swap the states of the object and the subject. 

Specifically, suppose that the object $B$, the ready system $R$, and the subject $S$ have the same ontic state space. The interaction $M^{\rm swap} \colon \B \oplus \R \to \B \oplus \S$ that swaps\footnotemark{} the two inputs is 
\footnotetext{For our purposes, it is irrelevant whether $M^{\rm swap}$ exchanges the physical systems or just the information carried by their ontic states. To establish the epistemic horizon, all that matters is that swapping is a valid transformation in nomic toy theory.}%
\begin{equation}\label{eq:swap}
	\begin{tikzpicture}
	\begin{pgfonlayer}{nodelayer}
		\node [style=none] (0) at (1.75, -1.125) {};
		\node [style=none] (1) at (1.75, -0.875) {};
		\node [style=none] (2) at (1.75, 0.875) {};
		\node [style=none] (3) at (1.75, 1.125) {};
		\node [style=none] (4) at (3.75, 0.875) {};
		\node [style=none] (5) at (3.75, 1.125) {};
		\node [style=none] (6) at (3.75, -0.875) {};
		\node [style=none] (7) at (3.75, -1.125) {};
		\node [style=none] (8) at (1.75, -1.5) {$\B$};
		\node [style=none] (9) at (1.75, 1.5) {$\B$};
		\node [style=none] (10) at (3.75, 1.5) {$\S$};
		\node [style=none] (11) at (3.75, -1.5) {$\R$};
		\node [style=none] (12) at (0, 0) {$=$};
		\node [style=morphism] (13) at (-2.75, 0) {$\; M^{\rm swap} \;$};
		\node [style=none] (14) at (-3.5, 1.125) {};
		\node [style=none] (15) at (-2, 1.125) {};
		\node [style=none] (16) at (-3.5, 1.5) {$\B$};
		\node [style=none] (17) at (-2, 1.5) {$\S$};
		\node [style=none] (18) at (-3.5, -1.125) {};
		\node [style=none] (19) at (-2, -1.125) {};
		\node [style=none] (20) at (-3.5, -1.5) {$\B$};
		\node [style=none] (21) at (-2, -1.5) {$\R$};
	\end{pgfonlayer}
	\begin{pgfonlayer}{edgelayer}
		\draw (0.center) to (1.center);
		\draw [in=-90, out=90] (1.center) to (4.center);
		\draw (4.center) to (5.center);
		\draw [style=protected, in=-90, out=90] (6.center) to (2.center);
		\draw [style=protected] (7.center) to (6.center);
		\draw [style=protected] (2.center) to (3.center);
		\draw [style=protected] (18.center) to (14.center);
		\draw [style=protected] (19.center) to (15.center);
	\end{pgfonlayer}
\end{tikzpicture}
 \qquad \qquad \text{i.e.\ we have} \quad M^{\rm swap}(b \oplus r) = r \oplus b.
\end{equation}
This establishes a perfect correlation between the ontic state of the object before the interaction and that of the subject after the interaction, \emph{irrespective} of the ready state. 
But it also resets the object such that the final state of $B$ carries no information about its initial state.

Since the subject's record is constrained to be the value of a Poisson manifest variable, it can nevertheless encode only coarse-grained information about the object's initial state. 
We therefore obtain the same knowledge-balance epistemic horizon.

\begin{theorem}[trivial ready state]\label{thm:Poisson_MV_trivial_RS}
	For toy subjects whose manifest variable is Poisson and ready state is trivial, a variable can be retrodicted if and only if it is Poisson.
\end{theorem}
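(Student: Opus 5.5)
The proof has two directions: achievability and the converse.

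\emph{Achievability.} Every Poisson variable can be retrodicted. Fix a Poisson variable $Z \colon \B \to \Z$ and take a subject $S$ with the same state space as $B$. By \cref{lem:Poisson} and the extension of isotropic subspaces, I can choose a symplectic basis in which $\ker Z$ contains a Lagrangian, so there is a symplectic map $T$ on $\B$ with $Z = L \circ \pi_\Q \circ T$ for some surjective linear $L$. Give the subject the maximal Poisson manifest variable $E = \pi_\Q \circ T$ and let it interact with a ready system $R \cong \B$ in a trivial ready state through $M^{\rm swap}$. Then $\epsilon(b \oplus r) = \pi_\Q T b$. Hence $\ker\epsilon = \ker(\pi_\Q T) \oplus \R$ and $\I_\ret = \ker(\pi_\Q T)$. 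The retrodicted variable is therefore equivalent to $\pi_\Q T$, which determines $Z$. If "can be retrodicted" means retrodicted exactly rather than retrodicted as a coarse-graining, I will instead embed $Z$ as an isotropic quotient using a smaller subject. The simplest route is to take $S$ of dimension $2\dim\Z$ and compose $M^{\rm swap}$ with a symplectic map that carries $\ker Z$ onto the complement of a Lagrangian. I expect this to be routine bookkeeping with symplectic bases.

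\emph{Converse.} Every retrodicted variable is Poisson. The trivial ready state has $\V = \R$, which is the kernel of the zero variable. The zero variable $\R \to \{0\}$ is vacuously Poisson, since $Z\Omega Z^T = 0$ holds trivially. So a trivial ready state is a special case of a Poisson ready state, and the converse follows from \cref{thm:Poisson_MV_Poisson_RS}. If the paper's definition of Poisson variable excludes the zero-dimensional codomain, I will redo the argument directly. By \cref{thm:dilation}, write $M = \pi \circ U$ with $U$ symplectic (plus a shift, which does not affect kernels). Then $\epsilon = E \circ \pi_\S \circ U$ restricted to $\B \oplus \R$. Since $E\pi_\S$ is Poisson and $U$ is symplectic, the composite is a Poisson variable on $\B\oplus\R$. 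Its kernel $K$ is therefore coisotropic, i.e.\ $K^{\perp} \subseteq K$. I then need to show that $\I_\ret = \pi_\B(K)$ is coisotropic in $\B$. Take $x \in \I_\ret^{\perp_\B}$. Then $x \oplus 0$ is orthogonal to all of $K$, because the symplectic form is a direct sum and the $\R$-components pair with $0$. Hence $x \oplus 0 \in K^\perp \subseteq K$, which gives $x \in \pi_\B(K) = \I_\ret$. So $\I_\ret^\perp \subseteq \I_\ret$, and the quotient map $L_\ret$ is Poisson by \cref{lem:Poisson}.

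The main obstacle is matching the precise conventions. The converse is the short symplectic argument above. The delicate part is the achievability direction, where the retrodicted variable must equal a given Poisson variable up to the paper's notion of equivalence. It also requires checking that discarding maps and constant shifts are handled correctly by \cref{thm:dilation}.
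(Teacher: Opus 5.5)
Your proposal is correct and follows the paper's proof. For achievability, you use $M^{\rm swap}$ with a maximal Poisson manifest variable, which retrodicts any maximal Poisson variable. For non-maximal $Z$, you post-compose with a dimension-reducing Poisson map $N$ satisfying $\pi_\Z N = Z$; this is exactly what \cref{lem:Lag_rotation} supplies once your $T$ fixes a Lagrangian splitting with $\P \subseteq \ker Z$. For the converse, you view $\V = \R$ as the kernel of the zero Poisson variable and invoke \cref{thm:Poisson_MV_Poisson_RS}.

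One correction: this $N$ must be a Poisson map (a symplectic automorphism followed by discarding), not a ``symplectic map'' into a smaller space, which cannot exist. Your backup argument is also valid and neatly self-contained: $\pi_\B(\ker\epsilon)$ is coisotropic whenever $\epsilon$ is a Poisson variable on all of $\B \oplus \R$.
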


A subject with access to trivial ready states thus cannot learn more about the object than one with access to a Poisson ready state. 
Conversely, we would expect that a subject with access to complete ready states can learn at least as much as one with access to Poisson ready states.
Surprisingly, this is not the best one can do. 
As the following theorem shows, \emph{any} information can in fact be retrodicted in this case, even though the manifest variable is restricted to be Poisson.

\begin{theorem}[complete ready state]\label{thm:Poisson_MV_fixed_RS}
	For toy subjects whose manifest variable is Poisson and ready state is complete, every surjective linear variable can be retrodicted.
\end{theorem}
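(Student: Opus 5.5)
Given a surjective linear variable $L \colon \B \to \mathsf{L}$, we need a ready system $R$ with complete ready state $\V = \{0\}$, a subject $S$ with a maximal Poisson manifest variable $E$, and a physical transformation $M \colon \B \oplus \R \to \B \oplus \S$ such that $\I_\ret = \pi_\B(\ker \epsilon) = \ker L$. Since $\V=\{0\}$, we have $\epsilon(b \oplus 0) = E \pi_\S M(b \oplus 0)$, so $\ker\epsilon$ is just the kernel of $b \mapsto E\pi_\S M(b\oplus 0)$ and $\I_\ret$ equals that kernel. It therefore suffices to show that every surjective linear $L$ factors as $E \circ \pi_\S \circ M(\ph \oplus 0)$ up to an isomorphism of value spaces. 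By the quotient characterisation of the retrodicted variable (\cref{prop:concrete_measurable}), this will suffice. The strategy is to realise the \emph{full} ontic state of $B$ in the positions of a fresh system, via the measurement $M^{\rm full}$ referred to in \cref{tab:basic-interactions}, and then discard whatever is not needed.

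\textbf{Construction for the identity variable.} Take $\R = \S = \B \oplus \B$, where the second copy is a doubled system whose configuration space $\Q_\S$ has dimension $2n$ and is identified with $\B$. Let $E = \pi_{\Q_\S}$ be the maximal Poisson variable. Define $M(b \oplus r) = b \oplus (r + \iota(b)) $ with a suitable correction on $\B$, namely a ``CNOT-like'' symplectic map. Choose the symplectic transvection
\begin{equation*}
	(b, q, p) \mapsto \bigl(b + J p,\; q + b,\; p\bigr),
\end{equation*}
where $q \in \Q_\S \cong \B$, $p \in \P_\S \cong \B^*$, and $J$ is fixed so that the map preserves $\Omega_{\B} \oplus \Omega_\S$. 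Verifying this symplecticity is a short block-matrix computation. It is the familiar fact that the map generated by the quadratic Hamiltonian $H = \langle p, b\rangle$-type coupling is symplectic, with back-action only on $B$ through $p$.

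With ready state $r = 0$, the final record is $E\pi_\S M(b\oplus 0) = b$. Hence $\ker\epsilon = \{0\}$ and $\I_\ret = \{0\}$, so the identity is retrodicted.

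\textbf{General $L$.} Given surjective $L$, postcompose the subject output with the discarding/projection needed. Choose the subject to be a system whose configuration space is $\mathsf{L}$, that is, $\S' = \mathsf{L} \oplus \mathsf{L}^*$. Realise $M'$ as the physical transformation obtained from the previous $M$ followed by a symplectic map on $\S$ that sends $\Q_\S \to \mathsf{L}\oplus\ker L$ positions, and then discard the complementary subsystem. By \cref{thm:dilation}, this composite is a physical transformation. Alternatively, and more simply, keep $\S$ as is and couple with $q + \tilde L(b)$, where $\tilde L$ embeds $\mathsf{L}$ into positions. Either way, the record equals $L(b)$ up to isomorphism, so $\I_\ret = \ker L$.

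The main obstacle is making sure the coupling map is genuinely symplectic while the record is a Poisson (position-type) variable. Encoding a full phase-space point $b$, including non-commuting coordinates, into commuting positions is exactly what the knowledge-balance theorems forbid when the ready state is not fixed. Here the back-action $b \mapsto b + Jp$ is harmless only because $p$ is fixed to $0$ by the complete ready state. I would check this explicitly and note that it is precisely where completeness of $\V$ is used.
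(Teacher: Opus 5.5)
\textbf{Verdict: genuine gap.} Your overall architecture matches the paper's. You use a doubled subject whose positions $\Q_\S\cong\B$ record $b$, a complete ready state giving $\epsilon=\id_\B$, and post-processing by a Poisson map for general $L$. The problem is that your central coupling is not a valid transformation, so the key step fails as written.

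For $M(b\oplus q\oplus p)=(b+Jp)\oplus(q+b)\oplus p$, the $\B\Q_\S$ block of $M\Omega M^T$ is $\Omega_\B-J$, which forces $J=\Omega_\B$. But the $\Q_\S\Q_\S$ block equals $\Omega_\B\neq 0$ for \emph{every} $J$. The new positions $q+b$ inherit the non-vanishing brackets of $b$, and no back-action on $B$ enters that block to cancel them.

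It had to fail. Your record $q+b$ does not involve $p$. So take the Lagrangian (hence Poisson) ready state given by the momentum subspace of $\R$: positions fixed to $0$, momenta free. You would still get $\epsilon(b\oplus p)=b$ and retrodict $\id_\B$, contradicting \cref{thm:Poisson_MV_Poisson_RS}. For the same reason, your closing account of where completeness is used is mistaken: back-action on $B$ never affects retrodiction.

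The missing idea is a ready-state-dependent correction in the record itself. Your Hamiltonian intuition is sound, but the flow was integrated incorrectly. For $H=p^Tb$ the object moves while $q$ integrates it, so the time-one map is $(b,q,p)\mapsto(b+\Omega_\B p,\ q+b+\tfrac12\Omega_\B p,\ p)$. More generally, $(b,q,p)\mapsto(b+\Omega_\B p,\ q+b+Cp,\ p)$ is Poisson if and only if $C-C^T=\Omega_\B$. In a Darboux basis, $C=\begin{pmatrix}0&\id\\0&0\end{pmatrix}$ works, and this choice is valid in characteristic $2$ as well.

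This repaired map is exactly the paper's $M^{\rm pred}$ from \eqref{eq:M_pred}, with the map $L$ there set to $\id$ and $K=C$, now read with the position projection as a maximal Poisson manifest variable. The record $b+q+Cp$ genuinely depends on the ready system. It is the complete ready state $q=p=0$ that yields $\epsilon=\id_\B$, and that is where completeness is used. The paper's $M^{\rm full}$ achieves the same cancellation differently, through the $\pi_\G$ entry of $M^{\rm full}_{\S\R}$ and the identity $\pi_\G\Omega_\B+\Omega_\B\pi_\G=\Omega_\B$.

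Once this is repaired, your first route to general $L$ is valid. You post-compose with the symplectic map $\mathrm{diag}(A,A^{-T})$ on $\Q_\S\oplus\P_\S$, where $A$ is invertible with your $L$ as its first block row, and then discard. This is a clean substitute for \cref{lem:Lag_rotation} that avoids the singular value decomposition. Your ``simpler'' alternative of coupling $q+\tilde L(b)$ directly has the same defect as above unless $L$ is Poisson, since the positions acquire brackets $\tilde L\Omega_\B\tilde L^T$.
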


Since the identity variable $\id_\B$ is surjective and linear, such subjects have the ability to retrodictively learn the precise ontic state of any toy system. 
In other words, there is no epistemic horizon (cf.\ \cref{tab:retrodiction}).
We thus find that the epistemic horizon for retrodiction disappears when subjects have access to complete ready states, even if the manifest variable remains Poisson. 

It is intriguing to note that such subjects can retrodict arbitrary information about other toy systems even though they only have partial self-knowledge: 
Their records are given by Poisson manifest variables, which are necessarily coarse-grained. 
The apparent tension disappears once one accounts for the dimensions of the systems involved. 
For example, a subject with an ontic state space of dimension $4n$ can, by the construction of $M^{\rm full}$ in \eqref{eq:M_full}, retrodict the full ontic state of any toy object of dimension at most $2n$. 
The same subject cannot do this for larger objects, since the value of a Poisson manifest variable can encode at most half of the independent degrees of freedom of the subject's ontic state space. 
In particular, such a record cannot uniquely specify the subject's own ontic state.

\paragraph{Complete manifest variable.}

When the subject's manifest variable is the identity map, its record is in one-to-one correspondence with its ontic state; i.e.\ the subject has complete self-knowledge. 
In this case, we can induce a perfect correlation via the swapping interaction $M^{\rm swap}$ from \eqref{eq:swap}.
We thus obtain that there is no epistemic horizon, regardless of the ready states available (last column of \cref{tab:retrodiction}).

\begin{corollary}[trivial, Poisson, and complete ready states]\label{cor:complete_MV_retrodictive}
	For toy subjects with a complete manifest variable, the identity variable of any toy object can be retrodicted, independently of whether trivial, Poisson, or complete ready states are used. 
\end{corollary}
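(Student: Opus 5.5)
The plan is to exhibit, for an arbitrary toy object $B$, the swapping interaction $M^{\rm swap}$ of \eqref{eq:swap} as a witness. We take the ready system $R$ and the subject $S$ both to have ontic state space $\B$, so $\R = \S = \B$. The manifest variable is $E = \id_\S$. The first step is to check that $M^{\rm swap}$ is a legitimate physical transformation. The map $b \oplus r \mapsto r \oplus b$ on $\B \oplus \B$ is linear, and it preserves the symplectic form $\Omega_\B \oplus \Omega_\B$, since it only permutes two identical symplectic blocks. Hence it is a symplectic map and belongs to the allowed transformations of \cref{sec:toy_systems}.

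Next we compute the maps \eqref{eq:beta} and \eqref{eq:epsilon} for an arbitrary ready state variability $\V \subseteq \R$. We get $\beta(b \oplus v) = v$ and $\epsilon(b \oplus v) = E(b) = b$. Therefore $\ker \epsilon = \{0\} \oplus \V$, and so $\I_\ret = \pi_\B(\ker\epsilon) = \{0\}$. The retrodicted variable $L_\ret \colon \B \to \B/\{0\}$ is then canonically isomorphic to $\id_\B$. This holds uniformly in $\V$, so it covers the trivial case $\V = \R$, every Poisson (coisotropic) $\V$, and the complete case $\V = \{0\}$ at once.

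The only point requiring care is the notion of ``can be retrodicted''. We must confirm that having $L_\ret$ equal to $\id_\B$ up to the identification $\B/\{0\} \cong \B$ is what the formal definition in the appendix requires, presumably via \cref{prop:concrete_measurable}. A second concern is whether the statement ``access to ready states of a given type'' allows us to choose $R$ with $\R \cong \B$. By the convention in \cref{sec:toy_subjects}, every conceivable ready system carries a ready state of the given type, so we may pick $R$ freely. If the definition instead requires equality of variables up to post-composition with a bijection, the isomorphism above provides this directly. I expect this bookkeeping to be the main obstacle; the computation itself is immediate.
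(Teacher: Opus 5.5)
Your proposal is correct and matches the paper's proof (given via \cref{thm:complete_MV_retrodictive}): both take $\R=\S=\B$ with $E=\id_\S$ and the swap interaction $M^{\rm swap}$, observe $\ker\epsilon=\{0\}\oplus\V$ and hence $\I_\ret=\{0\}$ for every $\V$, which handles the trivial, Poisson and complete ready states at once. Your extra check that the swap is a bijective symplectic map, and hence a Poisson map, is a detail the paper takes for granted.
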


This follows from \cref{thm:complete_MV_retrodictive}, which shows that every surjective linear variable whose kernel is a symplectic subspace of $\B$ can be retrodicted. 
This includes the identity variable since $\ker(\id_\B)=\{0\}$ is a symplectic subspace. 
Hence, the subject can retrodict the precise initial ontic state of the object and every other variable of $B$ is available by post-processing (cf.\ \cref{prop:fixed_from_measured}).

Note that the swap interaction used above erases the ontic state of the object, i.e.\ after the interaction, the state of $B$ coincides with the initial state of the ready system. 
If the same measurement were applied again, it would therefore not reproduce the same outcome. 
Such non-repeatable measurements are unsatisfactory if a measurement is meant to produce a reliable record of the state of a system. 
We now turn to repeatable measurements, which impose this additional robustness requirement.

\subsection{Repeatable Measurements}\label{sec:repeatable}
Our aim is to identify properties that can be learned in a way that they stay true about the object also after the interaction.
That is, applying the same measurement in succession should yield the same outcome. 
\Cref{tab:repeatable} summarises the limitations on retrodiction in nomic toy theory, under the additional assumption that the measurement interaction used is repeatable.
All proofs of results from this subsection are given in \cref{sec:proofs_repeatable}.

\begin{table}[htb]
	\centering
	\renewcommand{\arraystretch}{1.5}
	\begin{tabular}{cl|cc}
		\multicolumn{2}{c|}{\multirow{2}{*}{\makecell{\textbf{Repeatable} \\ \textbf{measurements}}}} 
		& \multicolumn{2}{c}{\textbf{Manifest variable}} 
		\\
		\multicolumn{2}{c|}{} 
		& \textit{Poisson} 
		& \textit{Complete} 
		\\ \hline
		\multirow[m]{3}{*}{\begin{sideways}\makecell{\textbf{Ready} \\ \textbf{state}}\end{sideways}} 
		& \textit{Trivial} 
		& \tablered total (\ref{thm:repeatable:Poisson_MV_trivial_RS}) 
		& \tablered total (\ref{thm:repeatable:Poisson_MV_trivial_RS})  
		\\
		& \textit{Poisson} 
		& \tableyellow knowledge-balance (\ref{thm:repeatable_Poisson_MV_Poisson_RS})
		& \tableyellow knowledge-balance (\ref{thm:repeatable_complete_MV_Poisson_RS}) 
		\\
		& \textit{Complete} 
		& \tablegreen none (\ref{thm:repeatable_Poisson_MV_complete_RS})
		& \tablegreen none (\ref{thm:repeatable_complete_MV_complete_RS})
	\end{tabular}
	\caption{Epistemic horizons for learning by repeatable measurements in nomic toy theory with links to the theorems.}
	\label{tab:repeatable}
\end{table}

\paragraph{Poisson manifest variable.}
The swap interactions introduced in \eqref{eq:swap} are indeed not repeatable, unless the manifest variable is trivial.
To see this, note that for this interaction (and manifest variable $E$) we have 
\begin{equation}
	\beta(b \oplus v) = v  \qquad \text{and} \qquad L_{\rm ret} = E,
\end{equation}
so that \cref{eq:repeatable} can only be satisfied if $E$ is constant.
This suggests that subjects with trivial ready states may not be able to learn information repeatably. 
Indeed, the following result establishes that such subjects face a total epistemic horizon (cf.\ \cref{tab:repeatable}).

\begin{theorem}[trivial ready state]
	\label{thm:repeatable:Poisson_MV_trivial_RS}
	For toy subjects whose manifest variable is Poisson and ready state is trivial, a variable can be repeatably retrodicted if and only if it is constant.
	The same holds for subjects with complete manifest variables.
\end{theorem}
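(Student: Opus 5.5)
Since a complete manifest variable refines any Poisson one, it suffices to prove the statement for $E = \id_\S$. The ``if'' direction is immediate: the trivial measurement that leaves $B$ unchanged and outputs $R$ (with $\S = \R$) has $\I_\ret = \B$. It therefore repeatably retrodicts the constant variable. For the converse, we fix a measurement $M \colon \B \oplus \R \to \B \oplus \S$ with $\V = \R$, set $E = \id_\S$, and aim to show that repeatability forces $\I_\ret = \B$.

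The plan is to use the dilation result (\cref{thm:dilation}). By it, we may write $M$ as a symplectic map on $\B \oplus \R$ followed by a discarding map, possibly together with a constant shift, which does not affect kernels. Since the ready state is trivial, $\beta$ and $\epsilon$ are defined on all of $\B \oplus \R$, and the pair $(\beta, \epsilon)$ is a component of a symplectic, hence bijective, map followed by a projection. Repeatability \eqref{eq:repeatable} says that $\beta(b \oplus r) - b \in \I_\ret$ for all $b, r$. In particular, setting $b = 0$ gives $\beta(0 \oplus \R) \subseteq \I_\ret$.

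The key step is to show that the variations of the object not seen by the record are exactly those spanned by $\ker\epsilon$, and then to combine this with the previous inclusion. Let $T \colon \B \oplus \R \to \B \oplus \S \oplus \mathsf{D}$ be the symplectic dilation, where $\mathsf{D}$ is the discarded part, so that $\beta = \pi_\B T$ and $\epsilon = \pi_\S T$. I would compute the symplectic complement $(\ker\epsilon)^{\perp} = T^{-1}(\S)$, which is the preimage of a symplectic summand. Then I would use the fact that $\I_\ret^{\perp}$, taken within $\B$, consists of the $b$ with $b \oplus 0 \in (\ker\epsilon)^{\perp}$, i.e.\ $T(b \oplus 0) \in \S$. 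For such $b$ we get $\beta(b \oplus 0) = 0$, and repeatability then gives $b \in \I_\ret$. So $\I_\ret^{\perp} \subseteq \I_\ret$, meaning $\I_\ret$ is coisotropic. Moreover, $\I_\ret^{\perp}$ is mapped entirely into the record and is erased from $B$.

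To finish, I would argue that $\I_\ret^{\perp}$ must be zero. Take $b \in \I_\ret^{\perp}$ and pair it, via the symplectic form, with elements of $\B \oplus \R$. Symplecticity of $T$ gives $\omega(b \oplus 0, x) = \omega_\S(T(b\oplus 0), \pi_\S T x)$, and this vanishes whenever $x \in \ker\epsilon$. Since the record of $b$ is a bijective image and $\R$ is unconstrained, I expect to use $0 \oplus \R$ together with the inclusion $\beta(0 \oplus \R) \subseteq \I_\ret$ to derive that the erased part and the retained part must be disjoint. The conclusion should be $\I_\ret^{\perp} = 0$, i.e.\ $\I_\ret = \B$.

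This last step is the main obstacle. Concretely, the difficulty is showing that when a nonzero $b \in \I_\ret^{\perp}$ is recorded, a conjugate direction in $\B$ must be shifted by $\beta$ out of $\I_\ret$. If the direct symplectic argument stalls, I would fall back to a dimension count. The retained part of $\B$ after the interaction has dimension equal to $\dim \B - \dim \I_\ret^{\perp}$, while the reduced map $\B/\I_\ret \to \B/\I_\ret$ induced by $\beta$ must be the identity. Because the full map is symplectic, this forces the erased directions to be filled by $\R$-contributions landing outside $\I_\ret$, which contradicts $\beta(0 \oplus \R) \subseteq \I_\ret$ unless $\I_\ret^{\perp} = 0$.
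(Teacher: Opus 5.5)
Your preliminary steps are correct: $(\ker\epsilon)^\omega = T^{-1}(\S)$, the description of $\I_{\ret}^{\omega}$ as the set of $b$ with $b\oplus 0\in(\ker\epsilon)^\omega$, and the resulting inclusion $\I_{\ret}^{\omega}\subseteq\I_{\ret}$. The gap is the step that actually proves the theorem, $\I_{\ret}^{\omega}=\{0\}$. You only anticipate it, and your fallback dimension count would not deliver it. Repeatability says two things: $M_{\B\B}$ induces the identity on $\B/\I_{\ret}$, and $M_{\B\R}(\R)\subseteq\I_{\ret}$. The first gives $M_{\B\B}(\B)+\I_{\ret}=\B$, so a complement of $M_{\B\B}(\B)$ can always be chosen inside $\I_{\ret}$. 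Surjectivity of $\beta$ is therefore perfectly compatible with the $\R$-contributions staying in $\I_{\ret}$, and no contradiction can come from looking at $\beta$ alone. What is missing is the surjectivity of $M$ onto the \emph{joint} output $\B\oplus\S$ (\cref{lem:ops_surj}). Since $\V=\R$, this lets you reach final states in which the object is arbitrary while the record is zero.

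That observation is the paper's argument, and it needs no dilation or symplectic complements. For arbitrary $b'\in\B$, pick $b\oplus r$ with $M(b\oplus r)=b'\oplus 0$. Then $b\oplus r\in\ker\epsilon$, so $b\in\I_{\ret}$, and repeatability gives $L_{\ret}(b')=L_{\ret}\bigl(\beta(b\oplus r)\bigr)=L_{\ret}(b)=0$, i.e.\ $b'\in\I_{\ret}$.

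If you want to finish within your own framework, apply repeatability to $\ker\beta$ rather than to $0\oplus\R$. It gives $\pi_\B(\ker\beta)\subseteq\I_{\ret}$, so every $b\in\I_{\ret}^{\omega}$ also satisfies $b\oplus 0\in(\ker\beta)^\omega=T^{-1}(\B)$. Together with $b\oplus 0\in T^{-1}(\S)$, this forces $T(b\oplus 0)=0$ and hence $b=0$, which makes the coisotropy step superfluous.

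Finally, your opening reduction to $E=\id_{\S}$ is not automatic, because repeatability is defined through $L_{\ret}$ and so depends on $E$. The surjectivity argument avoids the issue: it only uses $E(0)=0$, and therefore applies verbatim to any linear manifest variable.
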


For Poisson ready states, the repeatable interaction $M^{\rm Pson}$ defined via \cref{eq:M_Pson} can be used to encode the value of an arbitrary Poisson variable of the object in the subject's record. 
Conversely, \cref{thm:Poisson_MV_Poisson_RS} shows that other variables cannot be retrodicted, which also means that they cannot be repeatably retrodicted. 

\begin{theorem}[Poisson ready state]
	\label{thm:repeatable_Poisson_MV_Poisson_RS}
	For toy subjects whose manifest variable is Poisson and ready state is Poisson, a variable can be repeatably retrodicted if and only if it is Poisson.
\end{theorem}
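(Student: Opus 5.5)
The statement has two directions, and I will treat them separately.

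\emph{Only Poisson variables.} This direction is immediate from \cref{thm:Poisson_MV_Poisson_RS}. A repeatably retrodicted variable is in particular retrodicted. With Poisson manifest variables and Poisson ready states, only Poisson variables can be retrodicted. No further work is needed here beyond noting that repeatability is an extra constraint on $M$.

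\emph{Every Poisson variable.} The work lies in showing that every Poisson variable $Z \colon \B \to \Z$ can be repeatably retrodicted, using the interaction $M^{\rm Pson}$ of \eqref{eq:M_Pson}. Since its explicit form is given only later, my plan is to construct it as follows.

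First, by \cref{lem:Poisson}, $\ker Z$ is coisotropic. I choose a symplectic basis $\B \cong \Q_B \oplus \P_B$ in which $Z$ is equivalent to the projection onto $k$ of the position coordinates. Concretely, I write $\B = \B_1 \oplus \B_2$, where $\B_1 = \Q_1 \oplus \P_1$ is $2k$-dimensional and $Z = \pi_{\Q_1}$ up to an invertible post-processing.

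Next, I take the ready system $R$ and the subject $S$ both to be copies of $\B_1$. The ready state has Poisson variability $\V = \P_R$, so positions are fixed to $0$, and the manifest variable is $\pi_{\Q_S}$.

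The interaction is a CNOT-like symplectic shear:
\begin{equation}
	q_B \mapsto q_B, \qquad q_R \mapsto q_R + q_B, \qquad p_B \mapsto p_B - p_R, \qquad p_R \mapsto p_R,
\end{equation}
acting on the $\B_1$ block and trivially on $\B_2$. The output $R$ is relabelled as $S$. This map is symplectic, which I will verify directly by computing the Poisson brackets.

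With this choice, $\epsilon(b \oplus v) = \pi_{\Q_1}(b)$, so $\ker\epsilon$ consists of everything except the $\Q_1$ component of $b$. Hence $\I_\ret = \ker Z$ and $L_\ret \simeq Z$.

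Repeatability follows because $\beta$ leaves $q_B$ untouched. Thus $L_\ret(\beta(b\oplus v)) = L_\ret(b)$, which is exactly \eqref{eq:repeatable}.

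\emph{Main obstacle.} The main difficulty is making the reduction to the standard form rigorous. One must show that every maximal Poisson manifest variable, and every coisotropic ready variability, can be brought into this canonical form on arbitrary systems. One must also handle post-processing equivalence of variables, meaning that $L_\ret$ equals $Z$ up to an isomorphism of value spaces. For both I would rely on \cref{lem:Poisson} together with a symplectic Gram--Schmidt extension of a basis of $(\ker Z)^{\perp}$, which is isotropic, to a Darboux basis. This should make the argument independent of the field, $\mathbb{R}$ or $\mathbb{Z}_\ell$.
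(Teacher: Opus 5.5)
Your proposal is correct and follows the paper's proof. Necessity is inherited from \cref{thm:Poisson_MV_Poisson_RS}, and your CNOT-like shear is precisely $M^{\rm Pson}$ from \eqref{eq:M_Pson} written in coordinates adapted to $Z$: its block $M_{\B\V} = \Omega_\B Z^T$ is your $p_B \mapsto p_B - p_R$, and repeatability follows from $M^{\rm Pson}_{\B\B} = \id_\B$ and $L_\ret \comp M^{\rm Pson}_{\B\V} = Z \Omega_\B Z^T = 0$. Because $M^{\rm Pson}$ is Poisson for any Poisson $Z$, the paper needs no normal form, and your ``main obstacle'' is not one. The claim is existential, so you may pick the subject and a Lagrangian $\V$ yourself rather than handle every Poisson manifest variable and every coisotropic variability; the latter would in fact fail for $\V = \R$ (cf.\ \cref{thm:repeatable:Poisson_MV_trivial_RS}).
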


Finally, we consider a subject with access to complete ready states. 
The situation is analogous to \cref{thm:Poisson_MV_fixed_RS}, since the measurement $M^{\rm full}$ used in its proof is also repeatable.

\begin{corollary}[complete ready state]
	\label{thm:repeatable_Poisson_MV_complete_RS}
	For toy subjects whose manifest variable is Poisson and ready state is complete, every surjective linear variable can be repeatably retrodicted.
\end{corollary}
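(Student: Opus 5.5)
The plan is to reuse the construction from the proof of \cref{thm:Poisson_MV_fixed_RS} and check repeatability. Fix a toy object $B$ and a surjective linear variable $Z \colon \B \to \Z$. Since every surjective linear variable is a post-processing of $\id_\B$ (cf.\ \cref{prop:fixed_from_measured}), it suffices to make $\id_\B$, or more directly $Z$ itself, repeatably retrodictable. I would use the measurement $M^{\rm full}$ from \eqref{eq:M_full}. Take the ready system $R$ to have state space $\B \oplus \B$, so that its positions and momenta form a copy $\B \oplus \B^{*}$ of dimension $4n$. Fix the complete ready state $\V = \{0\}$. Then let $M^{\rm full}$ be the symplectic map that shifts the position block of $R$ by the state of $B$, leaving $B$ unchanged, i.e.\ $b \oplus (q \oplus p) \mapsto b \oplus (q + b \oplus p + \mathrm{something})$. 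The correction term is chosen so that the map stays symplectic, which is possible since $\B$ embeds isotropically into the Lagrangian position subspace of $\R$. Take $S = R$ with the Poisson manifest variable $E = \pi_\Q$ projecting onto positions.

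The key steps are as follows. First, verify that $M^{\rm full}$ is symplectic; its generator is a ``controlled shift'' quadratic Hamiltonian, as in $M^{\rm Pson}$. Second, compute $\epsilon(b \oplus 0) = b$ (embedded), so $\ker\epsilon = \{0\}$ within $\B \oplus \V$. Hence $\I_\ret = \{0\}$ and $L_\ret \cong \id_\B$. Third, compute $\beta(b \oplus v) = b$ for $v \in \V = \{0\}$. Then \eqref{eq:repeatable} holds trivially, since $L_\ret(b) = L_\ret(\beta(b\oplus 0))$. So $\id_\B$ is repeatably retrodicted. Finally, post-process to obtain $Z$. To stay strictly within the definition, one can instead replace $E$ by a Poisson variable whose kernel structure yields $\I_\ret = \ker Z$, e.g.\ encode only $Zb$ into positions. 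Repeatability still holds because $\beta$ remains the identity on $\B$.

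The main obstacle is the symplectic correction in the first step: the pure shift $q \mapsto q + b$ must be compensated on the momenta of $B$. Typically one has $p_B \mapsto p_B - (\text{momenta of } R)$, which would spoil $\beta = \pi_\B$. I would first try to arrange that the compensation acts only on the object's \emph{momentum} components paired with the ready system's momenta. Because $\V = \{0\}$ fixes those momenta to $0$, the disturbance vanishes on $\B \oplus \V$, so $\beta$ is still the identity. If this works, it matches the remark in the paper that $M^{\rm full}$ is already repeatable. Otherwise, I would verify repeatability via the weaker condition that $\beta(b\oplus 0) - b \in \I_\ret$, which needs $\I_\ret$ to absorb the disturbance.
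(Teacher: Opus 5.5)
Your reduction is the paper's. With $\V=\{0\}$ one has $\beta(b\oplus 0)=M_{\B\B}b$, so any measurement with $M_{\B\B}=\id$ whose retrodicted variable is $\id_\B$ is automatically repeatable. If you simply take $M^{\rm full}$ as defined in \eqref{eq:M_full}, which is shown to be Poisson in the proof of \cref{thm:Poisson_MV_fixed_RS}, your second and third steps go through verbatim.

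\textbf{The interaction you describe is not Poisson.} The $\Q\Q$ block of $M\Omega M^T$ involves only the row of $M$ that produces the subject's positions. For $q'=q+b$ this block equals $\Omega_\B+(\id\;\,0)\,\Omega_\R\,(\id\;\,0)^T=\Omega_\B\neq 0$, whereas a Poisson map needs $0$ there. The recorded positions would inherit the non-vanishing brackets of $b$, and no correction to $p'$ or to the object's output can change this block. This is also why the analogy with $M^{\rm Pson}$ fails: that controlled shift is Poisson only because $Z\Omega_\B Z^T=0$, which is false for $Z=\id_\B$. Likewise, ``$\B$ embeds isotropically into the position subspace'' is the obstruction, not the justification.

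The compensation must enter $q'$ itself, through a term in $R$'s momenta. For $q'=q+b+Xp$ the block becomes $\Omega_\B+X^T-X$, so you need $X-X^T=\Omega_\B$. One choice is $X=\pi_\G\Omega_\B$ with $\pi_\G$ as in \eqref{eq:object_projection}, using $\pi_\G\Omega_\B+\Omega_\B\pi_\G=\Omega_\B$. Together with $b'=b+\Omega_\B p$ and $p'=p$, this gives a valid Poisson map. The paper's $M^{\rm full}$ instead uses $q'=b+\pi_\G q+\Omega_\B p$, $p'=-\Omega_\B q$ and $b'=b+q$. In either version the ready system's variables enter both the record and $B$, which is harmless only because $\V=\{0\}$.

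\textbf{The step from $\id_\B$ to a general $Z$ also needs repair.} A post-processing of $\id_\B$ only makes $Z$ retrodictively fixed, not retrodicted. Replacing $E$ by $Z\comp\pi_\Q$ gives a Poisson variable with kernel $\ker Z\oplus\P$. That kernel is not Lagrangian unless $Z$ is injective, so this is not a Poisson manifest variable in the sense of \cref{def:max_Poisson}. Writing $Zb$ directly into the subject's positions runs into the same bracket obstruction whenever $Z$ is not Poisson.

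The paper instead post-composes $M^{\rm full}$ with $\id_\B\oplus N$, where $N$ is the Poisson map from \cref{lem:Lag_rotation} whose restriction to $\Q$ is $Z$. The new subject has the maximal Poisson manifest variable $\pi_\Z$ and the retrodicted variable is $Z$. Since the $\B\B$ block is still $\id$, repeatability is preserved.
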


\paragraph{Complete manifest variable.}
We now focus on subjects with complete self-knowledge, i.e.\ subjects whose records uniquely specify their own ontic states.
Arguably, the most important result in this paper is that even such subjects can face a non-trivial epistemic horizon.
\Cref{thm:repeatable_complete_MV_Poisson_RS} below says that when they have access to Poisson ready states, only Poisson variables can be repeatably retrodicted. 
This means that the restriction on self-knowledge in \cref{thm:Poisson_MV_Poisson_RS} is not necessary to obtain an epistemic horizon, as long as we are interested in simultaneous learning about the past and the future.

\begin{theorem}[Poisson ready state]
	\label{thm:repeatable_complete_MV_Poisson_RS}
	For toy subjects whose manifest variable is complete and ready state is Poisson, a variable can be repeatably retrodicted if and only if it is a Poisson variable. 
\end{theorem}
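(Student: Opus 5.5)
The plan has two directions. For the ``if'' direction, I would reuse the interaction behind \cref{thm:repeatable_Poisson_MV_Poisson_RS}. A complete manifest variable $\id_\S$ is at least as fine as any Poisson manifest variable on the same $\S$: composing the record with a maximal Poisson projection recovers the coarser record. So whatever a Poisson subject can learn repeatably, a complete subject can learn repeatably as well. Concretely, I would take $M^{\rm Pson}$ from \eqref{eq:M_Pson} for the given Poisson variable $Z$ of $B$. I would then check that, with the record $\id_\S$, the kernel of $\epsilon$ still projects onto $\ker Z$. For this I would choose $M^{\rm Pson}$ (or compose it with a discarding map) so that $\S$ only carries the copied value of $Z$ plus degrees of freedom fixed by the ready state. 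Repeatability is inherited because $\beta$ leaves $Z$ invariant.

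For the ``only if'' direction, I would work with the kernel of $\epsilon$. Since $E=\id_\S$, we have $\ker\epsilon = \ker(\pi_\S M|_{\B\oplus\V})$. Using \cref{thm:dilation}, I would write $M$ as a symplectic map $U$ on $\B\oplus\R$ followed by discarding of a complement $\mathsf{D}$, so that $\B\oplus\R \cong \B\oplus\S\oplus\mathsf{D}$. Repeatability \eqref{eq:repeatable} says $L_\ret\circ\beta = L_\ret\circ\pi_\B$ on $\B\oplus\V$, i.e.\ $\beta(x)-\pi_\B(x)\in\I_\ret$ for all $x\in\B\oplus\V$. The goal is to show that $\I_\ret=\pi_\B(\ker\epsilon)$ is coisotropic, since by \cref{lem:Poisson} this is exactly the statement that $L_\ret$ is Poisson.

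The key step is a symplectic orthogonality computation. Take $b_1,b_2\in\I_\ret^{\perp}$ (symplectic complement in $\B$); I need $\I_\ret^\perp\subseteq\I_\ret$. Equivalently, I can show that elements of $\ker\epsilon$ pair trivially on $\B$ in the appropriate sense. I would use the following facts.
\begin{enumerate}
\item $\V$ is coisotropic, so $\V^\perp\subseteq\V$; in particular $\B\oplus\V$ is coisotropic in $\B\oplus\R$.
\item $U$ preserves $\Omega$.
\item For $x\in\ker\epsilon$, $U x$ lies in $\B\oplus\mathsf{D}$.
\item Repeatability constrains the $\B$-component of $Ux$ to agree with $\pi_\B x$ modulo $\I_\ret$.
\end{enumerate}
Pairing two kernel elements then gives $\omega(x,y)=\omega_\B(\beta x,\beta y)+\omega_{\mathsf{D}}(\cdot,\cdot)$. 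I would try to show that the discarded part contributes nothing, or can be absorbed, by considering the vectors of $\B\oplus\V$ that are mapped entirely into $\B$. The intended conclusion is that $\ker\epsilon$, restricted appropriately, behaves like an isotropic-complement object, forcing $\dim\I_\ret\ge\tfrac12\dim\B$ with $\I_\ret$ coisotropic.

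A simpler route I would try first is a counting argument. The variation of $\B\oplus\V$ visible in the record is $\dim\B+\dim\V-\dim\ker\epsilon$. Repeatability forces the object's output to carry the retrodicted information too, and the preparation-support/predicted-variable results (\cref{cor:retrodiction_implies_prediction}) together with conservation of symplectic form give a Poisson constraint on what is jointly copied into $\S$ and kept in $\B$. This mirrors a no-cloning argument: a non-Poisson variable would require two correlated copies of a pair of conjugate coordinates, one in $\B$ and one in $\S$. Those copies would have nonzero bracket with each other's complements, contradicting symplecticity on the coisotropic input $\B\oplus\V$. I expect the main obstacle to be making this no-cloning step rigorous in the presence of the discarded subsystem $\mathsf{D}$ and the non-isotropic freedom in $\V$. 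My first attempt would be to reduce to the case $\V=\V^\perp$ Lagrangian-plus-free-part by splitting off the symplectic part of $\V$.
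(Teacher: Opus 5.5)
\textbf{The ``if'' direction is correct; the ``only if'' direction has a genuine gap.}

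Your ``if'' direction is correct and is the paper's. With the record $\id_\S$, the interaction $M^{\rm Pson}$ gives $\epsilon(b\oplus p)=Zb\oplus p$, so $\I_\ret=\ker Z$. Moreover $M^{\rm Pson}_{\B\B}=\id_\B$ and $Z M^{\rm Pson}_{\B\V}=Z\Omega_\B Z^T=0$, which give repeatability. Do not rely on the monotonicity heuristic you open with, however. Refining the manifest variable, or shrinking $\V$, refines $L_\ret$ and can destroy repeatability: the swap is repeatable only for a trivial record. \cref{lem:subject_retrodictive_power} is about fixed variables, not repeatability. For the same reason, any reduction that shrinks $\V$ to a Lagrangian subspace is unavailable.

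The gap is that you never carry out the ``only if'' direction; you leave the decisive step as an open obstacle. The vector-side route you start on does not reach the goal as set up. By \cref{lem:Poisson}, coisotropy of $\I_\ret$ means $L_\ret\Omega_\B L_\ret^T=0$, i.e.\ the functionals readable from the record Poisson-commute. Equivalently, for each $a\in\I_\ret^\omega$ one must find $v\in\V$ with $a\oplus v\in\ker\epsilon$. Pairings $\omega(x,y)$ of vectors $x,y\in\ker\epsilon$ do not control this existence statement, and the dilation only adds uncontrolled $\mathsf{D}$-terms. Your fallback through prediction cannot succeed either. \cref{thm:prediction_repeatable_kernels} with \cref{thm:prediction_complete_MV_Poisson_RS} yields only $(\im\beta)^\omega\subseteq\I_\ret$, which is vacuous for full preparation support. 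Indeed, \cref{cor:prediction_complete_MV_Poisson_RS} shows there is no predictive horizon in this very cell.

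\textbf{The missing idea.} Work with functionals via $M\Omega M^T=\Omega$, as in the necessity part of \cref{thm:Poisson_MV_Poisson_RS}, but use the cross block $\B\F$ instead of the $\F\F$ block; the latter no longer vanishes when $E$ is complete. Choose a free manifest subspace $\F$ as in \cref{prop:concrete_measurable}, so that $L_\ret$ is $M_{\F\B}$ up to isomorphism and $M_{\F\V}=0$. Since object and record are distinct output summands,
\[
M_{\B\B}\Omega_\B M_{\F\B}^T+M_{\B\R}\Omega_\R M_{\F\R}^T=0 .
\]
\begin{enumerate}
\item Left-multiply by $M_{\F\B}$ and use repeatability, $M_{\F\B}M_{\B\B}=M_{\F\B}$. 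This turns the first term into $M_{\F\B}\Omega_\B M_{\F\B}^T$.
\item In the second term, no disturbance ($M_{\F\B}M_{\B\V}=0$) and $M_{\F\V}=0$ make both $M_{\F\B}M_{\B\R}$ and $M_{\F\R}$ factor through the projection $\pi_\W$ onto a complement $\W$ of $\V$.
\item Since $\ker\pi_\W=\V$ is coisotropic, $\pi_\W\Omega_\R\pi_\W^T=0$, so the second term vanishes.
\end{enumerate}
Hence $M_{\F\B}\Omega_\B M_{\F\B}^T=0$.

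This is the rigorous form of your no-cloning intuition. Take $\ell,\ell'$ in the row space of $L_\ret$. The copy of $\ell$ kept in the object and the copy of $\ell'$ in the record Poisson-commute as output functionals. Their pullbacks are $\ell\oplus\mu$ and $\ell'\oplus\mu'$, with $\mu,\mu'$ vanishing on $\V$. Hence $\{\mu,\mu'\}_\R=0$, and therefore $\{\ell,\ell'\}_\B=0$. This argument needs no dilation and no handling of $\mathsf{D}$, and it does not even use completeness of $E$.
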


However, the limitation disappears when the subject has access to complete ready states.

\begin{corollary}[complete ready state]
	\label{thm:repeatable_complete_MV_complete_RS}
	For toy subjects whose manifest variable is complete and ready state is complete, every surjective linear variable can be repeatably retrodicted.
\end{corollary}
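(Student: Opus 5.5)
The plan is to reduce the statement to \cref{thm:repeatable_Poisson_MV_complete_RS} by monotonicity in the manifest variable. Take any surjective linear variable $Z\colon \B \to \Z$ of a toy object $B$. By \cref{thm:repeatable_Poisson_MV_complete_RS}, there is a repeatable measurement $M\colon \B\oplus\R\to\B\oplus\S$, with complete ready state $\V=\{0\}$ and a Poisson manifest variable $E$ on $\S$, whose retrodicted variable $L_\ret$ is equivalent to $Z$. Concretely, $M^{\rm full}$ from \eqref{eq:M_full} should serve, followed if needed by post-processing as in \cref{prop:fixed_from_measured}. I would keep the same interaction $M$ and the same ready state, and replace $E$ by the complete manifest variable $\id_\S$.

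First, I compare the kernels. The new record map is $\epsilon'=\pi_\S\comp M|_{\B\oplus\V}$, while the old one is $\epsilon=E\comp\epsilon'$. Hence $\ker\epsilon'\subseteq\ker\epsilon$, and so $\I'_\ret=\pi_\B(\ker\epsilon')\subseteq\I_\ret$. Since $\beta$ does not depend on the manifest variable, the disturbance map is unchanged.

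Next, I need equality of the inaccessible subspaces, or at least a way to recover $Z$. There are two routes. The direct one is to check on the explicit form of $M^{\rm full}$ that $\ker\epsilon=\ker\epsilon'$ when $\V=\{0\}$. Because the ready state is fixed, $\epsilon'$ is a linear function of $b$ alone, and $M^{\rm full}$ writes $b$ into the Poisson-readable part of $\S$, so both kernels are $\{0\}$ and $L_\ret$ is the identity. The more robust route is to first repeatably retrodict the identity variable $\id_\B$, which the old corollary provides since $\id_\B$ is surjective. With $\I_\ret=\{0\}$, the inclusion gives $\I'_\ret=\{0\}$ as well, so the new measurement also retrodicts $\id_\B$.

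Then I verify repeatability. Condition \eqref{eq:repeatable} with $L'_\ret=\id_\B$ reads $b=\beta(b\oplus 0)$. The same condition held for the original measurement, which had $L_\ret=\id_\B$, and $\beta$ is unchanged, so it still holds.

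Finally, an arbitrary surjective linear $Z$ is obtained from $\id_\B$ by post-processing. I would invoke \cref{prop:fixed_from_measured}, or alternatively compose $M$ with a local symplectic/discarding map on the subject side so that the record becomes exactly $Z$. I would then check that repeatability is preserved: since $\beta(b\oplus0)=b$, every function of $b$ satisfies \eqref{eq:repeatable}.

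The main obstacle is this last step, namely making sure that "can be repeatably retrodicted" is closed under post-processing in the paper's precise definition. If it is not, I would redo the construction of $M^{\rm full}$ with $\S$ chosen so that the record literally equals $Z(b)$ plus fixed ready-state data.
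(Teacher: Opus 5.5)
Your proposal is correct and follows the paper's proof. You reuse $M^{\rm full}$ with $\V=\{0\}$ and observe that, with the complete manifest variable, $\epsilon = M^{\rm full}_{\S\B}$ is injective, so the identity is retrodicted. Repeatability follows from $M^{\rm full}_{\B\B}=\id_\B$, and a general surjective $Z$ is handled by the composite $(\id_\B\oplus N)\comp M^{\rm full}$ of \eqref{eq:comp_meas}, whose $\B\B$ block is still the identity.

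Of your two options for that last step, only the composite one works. \cref{prop:fixed_from_measured} merely makes $Z$ retrodictively fixed, whereas the statement needs $Z$ to be the retrodicted variable of some repeatable measurement. With $N$ taken from \cref{lem:Lag_rotation}, the complete record is $Zb\oplus 0$, so $\I'_\ret=\ker Z$ exactly. Your monotonicity inclusion $\I'_\ret\subseteq\I_\ret$ gives only one direction, so you need this explicit form of $N$ for the other.
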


As we have mentioned previously, the assumption of repeatability rules out interactions that merely produce a record of the past.
Instead, it requires the retrodicted property to be present both in the final state of the toy object $B$ and in the subject's record.
In this sense, repeatably retrodicted variables describe information that can be retrieved without disturbance.
That is why they are also related to information that can be copied \cite[Definition B.9]{Fa24}.
The results of \cref{sec:repeatable} can therefore be interpreted as showing that the ability to broadcast information in nomic toy theory is tightly linked to the ready state resource and is not determined by the manifest variable alone. 
In particular, even for a complete manifest variable{\,---\,}so that the subject's record is its ontic state{\,---\,}the precise state of the object can be learned without disturbance only when complete ready states are available.
Epistemic horizons for robust information acquisition therefore do not follow merely as a consequence of limited self-knowledge.
They arise from a joint constraint involving the manifest variable, ready state resources, and repeatability. 

\subsection{Prediction}\label{sec:prediction}
We now turn to the future-oriented counterpart of retrodictive learning, namely prediction.
The correlations of interest are those between the subject's record{\,---\,}the value of its manifest variable{\,---\,}and the state of the object \textit{after} the measurement interaction. 
This completes the comparison of epistemic horizons across the different temporal notions of learning in nomic toy theory.

In contrast to retrodiction, and similarly to the repeatable case, prediction is sensitive to the disturbance induced by the measurement process itself.
However, unlike for repeatable measurements, the scope of allowed interactions between the subject and the object is unrestricted.

\begin{table}[htb]
	\centering
	\renewcommand{\arraystretch}{1.5}
	\begin{tabular}{cl|cc}
		\multicolumn{2}{c|}{\multirow{2}{*}{\makecell{\textbf{Prediction}}}} 
		& \multicolumn{2}{c}{\textbf{Manifest variable}} 
		\\
		\multicolumn{2}{c|}{} 
		& \textit{Poisson} 
		& \parbox{\widthof{knowledge-balance (\ref{thm:repeatable_complete_MV_Poisson_RS})}}{\centering \textit{Complete}}
		\\ \hline
		\multirow[m]{3}{*}{\begin{sideways}\makecell{\textbf{Ready} \\ \textbf{state}}\end{sideways}} 
		& \textit{Trivial} 
		& \tablered total (\ref{thm:prediction_Poisson_MV_trivial_RS})
		& \tablered total (\ref{thm:prediction_Poisson_MV_trivial_RS}) 
		\\
		& \textit{Poisson} 
		& \tableyellow knowledge-balance (\ref{thm:prediction_Poisson_MV_Poisson_RS})
		& \tablegreen none (\ref{cor:prediction_complete_MV_Poisson_RS}) 
		\\
		& \textit{Complete} 
		& \tablegreen none (\ref{thm:prediction_Poisson_MV_complete_RS})
		& \tablegreen none (\ref{thm:prediction_complete_MV_complete_RS})
	\end{tabular}
	\caption{Epistemic horizons for prediction in nomic toy theory with links to the theorems.}
	\label{tab:prediction}
\end{table}

Comparing \cref{tab:prediction} to \cref{tab:repeatable}, one can see that this nevertheless leads to epistemic limitations similar to those in the repeatable case.
One key difference is the absence of an epistemic horizon for subjects with complete manifest variables and access to Poisson ready states.

The other, more subtle difference is that prediction is characterised by both non-empirical and empirical information about the object, i.e.\ by the preparation support $\im(\beta)$ and the predictively inaccessible subspace $\I_\pre$, respectively.
For retrodiction, only the empirical part is relevant, which is why we could characterise the information learned purely by means of the retrodictively inaccessible subspace $\I_\ret$ in \cref{sec:retrodiction,sec:repeatable}.
All proofs of the following theorems can be found in \cref{sec:proofs_predictive}.

\paragraph{Poisson manifest variable.}
We begin by showing that prediction is impossible in the case of trivial ready states, independently of the subject's manifest variable.
This fact actually holds beyond the specifics of nomic toy theory.
It is a feature of any physical theory whose dynamics are reversible (or indeed right-invertible, as here).
The intuitive reason is as follows.
For trivial ready states, the joint initial state of $\B \oplus \R$ is unconstrained.
For non-trivial prediction, there must be a correlation between the two systems $B$ and $S$ after the interaction, which constrains the possible joint states of $\B \oplus \S$.
If the interaction process is modelled by a surjective map, as is the case in nomic toy theory, this is impossible.

\begin{theorem}[trivial ready state]
	\label{thm:prediction_Poisson_MV_trivial_RS}
	For toy subjects with Poisson manifest variables and trivial ready states, only constant variables can be predicted.
	The same holds for subjects with complete manifest variables.
\end{theorem}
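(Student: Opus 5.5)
We show that for a trivial ready state the predicted variable $L_\pre$ is constant, i.e.\ $\I_\pre = \im(\beta)$. The argument uses only that $M$ is linear and surjective (\cref{lem:ops_surj}) and that the manifest variable $E$ is surjective. It therefore covers Poisson and complete manifest variables at once, and indeed any surjective linear $E$.

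Since $\V = \R$, the maps $\beta$ and $\epsilon$ are defined on all of $\B \oplus \R$. Write $\beta = \pi_\B \comp M$ and $\epsilon = E \comp \pi_\S \comp M$. The map $M \colon \B\oplus\R \to \B\oplus\S$ is a physical transformation, hence affine with surjective linear part; any constant shift only relabels records, so we work with the linear part. Since $\pi_\B$ is surjective, $\im(\beta) = \B$.

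The key step is to show $\beta(\ker\epsilon) = \B$. Given $b' \in \B$, the element $b' \oplus 0 \in \B \oplus \S$ has a preimage $x \in \B\oplus\R$ under $M$, by surjectivity of $M$. Then $\epsilon(x) = E(0) = 0$, so $x \in \ker\epsilon$, and $\beta(x) = b'$. Hence $\I_\pre = \beta(\ker\epsilon) = \B = \im(\beta)$. The quotient $\im(\beta)/\I_\pre$ is therefore a single point, and $L_\pre$ is constant.

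To conclude, we invoke \cref{prop:fixed_from_measured}: a variable is predictable by some measurement iff it is a post-processing of some $L_\pre$. Every $L_\pre$ here is constant with full preparation support, so only constant variables of the toy object can be predicted. Conversely, constant variables are trivially predicted, e.g.\ by the identity interaction composed with discarding.

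The main point requiring care is the affine shift in $M$ and matching the exact formal definition of ``can be predicted'' from the appendix. Once $M$ is known to be surjective, the rest is the one-line preimage argument above, which does not depend on $E$ being Poisson or complete.
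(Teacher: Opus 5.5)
Your proof is correct and matches the paper's argument. Both use surjectivity of $M$ (\cref{lem:ops_surj}) together with $\V=\R$ to get $\im(\beta)=\B$, and then take a preimage of $b'\oplus 0$ to show $\I_\pre=\B$; the only difference is that you treat an arbitrary linear $E$ directly, rather than first reducing to the complete manifest variable. One small correction: the characterisation of predictively fixed variables you need is \cref{prop:fix_predicted}, not \cref{prop:fixed_from_measured}, which is the retrodictive one.
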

Thus, such subjects cannot have any information about the post-measurement state of the object.
They face a total epistemic horizon (cf.\ \cref{fig:EH_pictures} (a)) for prediction.

For Poisson ready states, we have encountered the knowledge-balance epistemic horizon both for retrodiction (\cref{thm:Poisson_MV_Poisson_RS}) and for repeatable learning (\cref{thm:repeatable_Poisson_MV_Poisson_RS}).
For repeatable measurements, retrodicted and predicted information are closely related (\cref{thm:prediction_repeatable_kernels}). 
Moreover, the repeatable interaction $M^{\rm Pson}$ used in this case has full preparation support (cf.\ \cref{fig:preparation_support}). 
This suggests that every Poisson variable should be predictable.
The following theorem says that this is true and that, in the case of full preparation support (\cref{fig:preparation_support} (a)), this is the best one can do.
As such, we obtain the knowledge-balance epistemic horizon also for prediction.

\begin{theorem}[Poisson ready state]
	\label{thm:prediction_Poisson_MV_Poisson_RS}
	For toy subjects with Poisson manifest variables and Poisson ready states, a partial variable can be predicted if and only if it is Poisson.\footnotemark{} 
\end{theorem}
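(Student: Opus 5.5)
We prove the two directions separately. Throughout, a \emph{partial variable} is a surjective linear map $L \colon \mathsf{D} \to \mathsf{L}$ defined on a subspace $\mathsf{D} \subseteq \B$, and "$L$ can be predicted" means that there is a measurement $M$ with $\im(\beta) = \mathsf{D}$ and $L_\pre \cong L$. "Poisson" for a partial variable is understood relative to the restricted form $\Omega_\B|_{\mathsf{D}}$; equivalently, $\ker L \supseteq \mathsf{D}^{\perp_\Omega} \cap \mathsf{D}$, and $\ker L$ is coisotropic within $\mathsf{D}$ modulo this radical. I will work with the kernel formulation, in the spirit of \cref{lem:Poisson}. The claim to establish is then: $\I_\pre$ is coisotropic inside $\im(\beta)$ relative to the induced form, and conversely every such pair $(\mathsf{D},\I)$ arises.

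\textbf{Achievability.} First treat full support, $\mathsf{D} = \B$. Given a Poisson $L$, I use $M^{\rm Pson}$ from \eqref{eq:M_Pson}. It is repeatable, and by \cref{cor:retrodiction_implies_prediction} and \cref{thm:prediction_repeatable_kernels} the variable it retrodicts is also predicted, with $\I_\pre = \I_\ret = \ker L$ and $\im(\beta) = \B$.

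For restricted support, I compose. A Poisson ready state can first be used to prepare the object into an affine copy of $\mathsf{D}$: swap in part of a ready system whose variability is coisotropic, so that the object's degrees of freedom outside $\mathsf{D}$ are fixed. After this, apply $M^{\rm Pson}$ for the induced Poisson variable on the reduced space $\mathsf{D}/(\mathsf{D}\cap\mathsf{D}^{\perp})$. Two checks are needed here: that the image of $\beta$ is exactly $\mathsf{D}$, and that composing the two steps preserves the predicted quotient.

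\textbf{Converse.} This is the main obstacle. By \cref{thm:dilation}, write $M$ as a symplectic map $U$ on $\B\oplus\R$ followed by discarding, so that $\R \cong \S \oplus \mathsf{J}$. The object's final state and the record are then linear functions of $U(b\oplus v)$ with $v\in\V$, where $\V$ is coisotropic.

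The key reduction is that the final joint state ranges over $W \coloneq U(\B\oplus\V)$. This subspace is coisotropic in $\B\oplus\S\oplus\mathsf{J}$, since $\B\oplus\V$ is coisotropic and $U$ is symplectic. We then have
\[
\I_\pre = \pi_\B\bigl(W \cap \ker(E\circ\pi_\S)\bigr), \qquad \im(\beta) = \pi_\B(W).
\]

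I will show that $\I_\pre^{\perp} \cap \im(\beta) \subseteq \I_\pre$ within the induced form, i.e.\ that $\I_\pre$ is coisotropic relative to $\im(\beta)$. Take $x\in\im(\beta)$ that is $\Omega$-orthogonal to $\I_\pre$. Since $E$ is Poisson, $\ker E$ is coisotropic with $(\ker E)^{\perp} \subseteq \ker E$. Coisotropy of $W$ gives $W^{\perp}\subseteq W$. Combining these with the identity $(A\cap C)^{\perp} = A^{\perp} + C^{\perp}$, I would lift $x$ to an element of $W^{\perp} + (\B \oplus (\ker E)^{\perp} \oplus \mathsf{J}$-part$)$. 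This should land $x$ in $\pi_\B(W \cap \ker(E\pi_\S))$. The argument mirrors the retrodictive proof of \cref{thm:Poisson_MV_Poisson_RS}, with the roles of input and output exchanged via $U$.

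The delicate step is handling the discarded component $\mathsf{J}$, together with the fact that $\im(\beta)$ may be a proper subspace. I would first try to absorb $\mathsf{J}$ into the "ready" side by symplectic reduction of $W$ with respect to $W^{\perp}$, and then carry out the linear algebra on the reduced space.
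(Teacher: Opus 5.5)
Your proposal has a genuine gap: it works with the wrong notion of a Poisson partial variable, and under that notion the statement is false. In \cref{def:partial_variable} the kernel must be coisotropic as a subspace of $\B$, not merely relative to the form induced on the domain. Your relative condition amounts to $\I^\omega\cap\mathsf{D}\subseteq\I$. The paper's condition $\I^\omega\subseteq\I$ is equivalent to yours \emph{plus} $\mathsf{D}^\omega\subseteq\I$. So $\mathsf{D}=\im(\beta)$ must itself be coisotropic, and its complement must be invisible to the subject.

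With your notion the ``if'' direction fails. For example, $\mathsf{D}=\I=\{0\}$ (preparing a precise ontic state) satisfies your condition, and so does a symplectic plane $\mathsf{D}$ in a four-dimensional $\B$ with $\I$ a line in it. Yet, as the next paragraph shows, $\I_\pre$ is always coisotropic in $\B$ for these subjects, hence of dimension at least $\tfrac12\dim\B$. Your construction fails exactly here: a Poisson ready state has coisotropic variability, so swapping it in can only confine the object to a coisotropic subspace.

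For the correct class (coisotropic $\I\subseteq\mathsf{D}$), your swap-then-$M^{\rm Pson}$ idea does work, and more simply than the paper's splitting $\B=\B_1\oplus\B_2$. Swap the whole object with a ready copy of $\B$ of variability $\mathsf{D}$, discarding the swapped-out copy. Then apply $M^{\rm Pson}$ for the quotient $Z\colon\B\to\B/\I$. Since $\Omega_\B Z^T$ has image $\I^\omega\subseteq\I\subseteq\mathsf{D}$, this gives $\im(\beta)=\mathsf{D}$ and $\I_\pre=(\I\cap\mathsf{D})+\I^\omega=\I$. In the full-support case, note that \cref{thm:prediction_repeatable_kernels} only yields $\I_\ret=\I_\pre+\I_\ret^\omega$. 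The equality $\I_\pre=\I_\ret$ needs this same direct computation.

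For necessity, testing only $x\in\im(\beta)$ can at best give the relative statement; it never yields $\im(\beta)^\omega\subseteq\I_\pre$. The lifting step is also left open, and the space you lift into is not right. For $A=\B\oplus\ker E\oplus\mathsf{J}$ one has $A^\omega=\{0\}\oplus(\ker E)^\omega\oplus\{0\}$, with no $\B$- or $\mathsf{J}$-part; that is precisely why $\mathsf{J}$ causes no difficulty.

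With this correction, your set-up closes for \emph{every} $x\in\I_\pre^\omega$:
\begin{itemize}
\item For $y\in W\cap A$, we have $\omega(x\oplus0\oplus0,y)=\omega_\B(x,\pi_\B y)=0$. Hence $x\oplus0\oplus0\in(W\cap A)^\omega=W^\omega+A^\omega$.
\item So $x\oplus0\oplus0=w+(0\oplus k\oplus0)$, with $w\in W^\omega\subseteq W$ and $k\in(\ker E)^\omega\subseteq\ker E$.
\item Then $w=x\oplus(-k)\oplus0\in W\cap A$, so $x=\pi_\B(w)\in\I_\pre$.
\end{itemize}
This shows $\I_\pre$ is coisotropic in $\B$, which is what the paper establishes. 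The paper does it differently: it reduces $\V$ to a Lagrangian subspace via \cref{lem:subject_retrodictive_power} and then combines blocks of $M\Omega M^T=\Omega$. Your dilation route needs neither that reduction nor any special handling of $\mathsf{J}$ or of a proper $\im(\beta)$. As written, however, your proposal neither contains this step nor aims at the right statement.
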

\footnotetext{Poisson partial variables are introduced in \cref{def:partial_variable}.}%

In the general case, i.e.\ for a restricted preparation support (\cref{fig:preparation_support} (b)), \cref{thm:prediction_Poisson_MV_Poisson_RS} says that the predictively inaccessible subspace $\I_\pre$ must be coisotropic (\cref{def:complement}).
One of the consequences is that the dimension of $\I_\pre$ is necessarily at least half the dimension of $\B$.
Since this subspace describes the directions in the ontic state space $\B$ that cannot be distinguished by the subject, we obtain that subjects of this type face a non-trivial epistemic horizon for prediction regardless of the preparation support.

To gain intuition for which variables can be predicted by subjects as in \cref{thm:prediction_Poisson_MV_Poisson_RS}, let us consider a few examples. 
Both variables depicted in \cref{fig:preparation_support} can be predicted, since the kernel of each is given by the subspace $\P$, which is indeed coisotropic.
On the other hand, even though the variable depicted in \cref{fig:impossible_prediction} is merely a restriction of the variable from \cref{fig:preparation_support}~(a) to the subspace $\Q$, it cannot be predicted, because its kernel is trivial and so it is not a Poisson partial variable.
We thus see that restricting the preparation support may turn a variable that can be predicted into one that cannot.

\begin{figure}[htb]\centering
	\begin{tikzpicture}
		\begin{scope}[shift={(0,0)}]			
			\fill[gridblue] (0*\cellsize,0*\cellsize) rectangle ++(\cellsize,\cellsize);
			\fill[gridredlight] (1*\cellsize,0*\cellsize) rectangle ++(\cellsize,\cellsize);
			\fill[gridgreenlight] (2*\cellsize,0*\cellsize) rectangle ++(\cellsize,\cellsize);
			\drawgrid
			\panelaxes
		\end{scope}		
	\end{tikzpicture}
	\caption[prediction]{An illustration of a variable that cannot be predicted by a subject with access to Poisson ready states, in terms of diagrams from \cref{fig:variables}.
		It has a restricted preparation support, $\im(\beta) = \Q$, and trivial predictively inaccessible subspace, $\I_\pre = \{0\}$.
		Neither the relevant condition from \cref{thm:prediction_complete_MV_Poisson_RS} nor the one from \cref{thm:prediction_Poisson_MV_Poisson_RS} is satisfied.}      
	\label{fig:impossible_prediction}
\end{figure}

As already established, epistemic horizons for retrodiction disappear for complete ready states (\cref{thm:Poisson_MV_fixed_RS,thm:repeatable_Poisson_MV_complete_RS}).
Since the interaction $M^{\rm full}$, defined via \cref{eq:M_full}, used in these cases has full preparation support, \cref{thm:prediction_repeatable_kernels} implies the same holds for prediction.

\begin{corollary}[complete ready state]
	\label{thm:prediction_Poisson_MV_complete_RS}
	For toy subjects whose manifest variable is Poisson and ready state is complete, every surjective linear variable $\B \to \Z$ can be predicted.
\end{corollary}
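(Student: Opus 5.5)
The plan is to reuse the interaction $M^{\rm full}$ from \eqref{eq:M_full}, which the proof of \cref{thm:Poisson_MV_fixed_RS} (and \cref{thm:repeatable_Poisson_MV_complete_RS}) already uses. It works for a Poisson manifest variable and a complete ready state, is repeatable, and has full preparation support. Then transfer its retrodictive guarantee to prediction via \cref{thm:prediction_repeatable_kernels}.

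\textbf{Step 1: retrodict $Z$ repeatably.} Fix a surjective linear variable $Z \colon \B \to \Z$. It suffices to show that $Z$ is predicted up to post-processing, i.e.\ that the predictively inaccessible subspace $\I_\pre$ is contained in $\ker Z$ while $\im(\beta) = \B$; by \cref{prop:fixed_from_measured}, $Z$ is then obtained from $L_\pre$ by post-processing. By \cref{thm:repeatable_Poisson_MV_complete_RS}, I will take $M^{\rm full}$ with ready system $R$ of dimension $2\dim\B$, $\V = \{0\}$, and subject $S = R$ carrying a Poisson manifest variable (e.g.\ $\pi_\Q$ on $\S \cong \B\oplus\B$). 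Intuitively, $M^{\rm full}$ copies $b$ into the $\Q$-part of the subject via a symplectic ``CNOT-like'' shear whose conjugate back-action lands in the subject's $\P$-part. This leaves $B$ undisturbed, so $\beta(b\oplus 0)=b$. Hence $\I_\ret = \{0\}$, and the measurement is repeatable.

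\textbf{Step 2: transfer to prediction.} Because $\beta$ restricted to $\B\oplus\V = \B\oplus\{0\}$ is the identity, the preparation support is $\im(\beta) = \B$. By \cref{thm:prediction_repeatable_kernels} (or directly, since $\ker\epsilon = \{0\}$), we get $\I_\pre = \beta(\ker\epsilon) = \I_\ret = \{0\}$. Thus $L_\pre$ is the identity variable on $\B$, and any surjective linear $Z$ is obtained by post-composition, so $Z$ can be predicted.

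\textbf{Main obstacle.} The crux is confirming that the specific $M^{\rm full}$ of \eqref{eq:M_full} leaves $\B$ unchanged when the ready state is fixed (rather than merely repeatable), so that the preparation support is all of $\B$. If $M^{\rm full}$ instead acts on $\B$ by some symplectic map $\phi$, I will replace it by $(\phi^{-1}\oplus\id_\S)\circ M^{\rm full}$. This composite is still a physical transformation, it keeps $\ker\epsilon=\{0\}$, and it restores $\beta = \id$ on $\B\oplus\{0\}$.
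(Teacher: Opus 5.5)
Your treatment of the identity variable is the paper's argument. With $M^{\rm full}$ and $\V=\{0\}$, you get $\beta(b\oplus 0)=M^{\rm full}_{\B\B}b=b$ directly from \eqref{eq:M_full}, so the fallback with $\phi^{-1}$ is not needed. You also get $\epsilon=\pi_\Q\comp M^{\rm full}_{\S\B}=\id_\B$, hence $\im(\beta)=\B$ and $\I_\pre=\I_\ret=\{0\}$.

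The gap is your final step. In this paper, ``$Z$ can be predicted'' means that $Z$ is, up to an isomorphism of its codomain, the variable \emph{predicted by} some admissible measurement. Equivalently, some $M$ has $\im(\beta)=\B$ and $\I_\pre=\ker Z$. This is how ``can be retrodicted'' is used in the proof of \cref{thm:Poisson_MV_fixed_RS}, and the footnote to \cref{thm:complete_MV_retrodictive} explicitly separates it from merely being fixed. Under your weaker reading, the ``only if'' direction of \cref{thm:Poisson_MV_Poisson_RS} would be false, since nonlinear post-processings of a Poisson variable are fixed but not Poisson. Post-composing $L_\pre=\id_\B$ with a function only shows that $Z$ is predictively fixed by $M^{\rm full}$; the variable predicted by $M^{\rm full}$ is still the identity. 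Also, the relevant characterisation is \cref{prop:fix_predicted}, not the retrodictive \cref{prop:fixed_from_measured}.

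The missing ingredient is the one the paper uses, namely the composite measurement \eqref{eq:comp_meas}.
\begin{itemize}
\item Apply \cref{lem:Lag_rotation} to $Z$, viewed as a surjective map on the Lagrangian subspace $\Q\cong\B$ of $\R$. This gives a Poisson map $N\colon\R\to\T=\Z\oplus\W$ with $N(q)=Z(q)$.
\item Take the measurement $(\id_\B\oplus N)\comp M^{\rm full}$ with the maximal Poisson manifest variable $\pi_\Z$.
\item Since $N$ acts only on the subject, $\beta$ is unchanged. The preparation support is therefore still $\B$, and the measurement is still repeatable with $\V=\{0\}$.
\item Meanwhile $\epsilon(b\oplus 0)=\pi_\Z N\bigl(M^{\rm full}_{\S\B}b\bigr)=Z(b)$.
\end{itemize}
Hence $\I_\pre=\beta(\ker\epsilon)=\ker Z$ (equivalently $\I_\pre=\I_\ret$ by \cref{thm:prediction_repeatable_kernels}). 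The predicted variable is then the quotient $\B\to\B/\ker Z\cong\Z$, as the statement requires.
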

In particular, since identity maps are surjective, for every toy object $B$ there exists a toy subject of the above type that can have perfect knowledge of the post-measurement ontic state of $B$.
Thus, there is no epistemic horizon (cf.\ \cref{fig:EH_pictures}).

\paragraph{Complete manifest variable.}
Finally, we consider prediction for subjects with a complete manifest variable. 
We already saw in \cref{thm:prediction_Poisson_MV_trivial_RS} that no prediction is possible in the case of trivial ready states. 

The case of Poisson ready states and complete manifest variables shows a key distinction between general retrodiction and learning via repeatable measurements.
Indeed, there is a non-trivial epistemic horizon for the latter (\cref{thm:repeatable_complete_MV_Poisson_RS}), but not for generic interactions (\cref{cor:complete_MV_retrodictive}).
In \cref{thm:prediction_complete_MV_Poisson_RS}, we show that prediction is not subject to the same limitation as learning by repeatable measurements.
In particular, the theorem has the following immediate corollary, which captures the case of full preparation support.

\begin{corollary}[Poisson ready state]
	\label{cor:prediction_complete_MV_Poisson_RS}
	For toy subjects whose manifest variable is complete and whose ready state is Poisson, every surjective linear variable $\B \to \Z$ can be predicted.
\end{corollary}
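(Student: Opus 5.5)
The plan is to construct, for any surjective linear variable $Z\colon \B \to \Z$, an explicit measurement whose predicted variable is $Z$, or from which $Z$ is obtained by post-processing. It suffices to show that the identity variable $\id_\B$ can be predicted with full preparation support $\im(\beta)=\B$ and $\I_\pre=\{0\}$. Any other surjective linear $Z$ then follows either by post-processing the record (\cref{prop:fixed_from_measured}) or by directly modifying the construction so that $\I_\pre=\ker Z$, for instance by composing the subject's final state with a map that forgets the relevant coordinates. Since the corollary is stated as immediate from \cref{thm:prediction_complete_MV_Poisson_RS}, the cleanest route is to check that its condition holds when $\im(\beta)=\B$. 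I expect that condition to impose constraints only relative to the preparation support, for example requiring that $\I_\pre$ satisfy some compatibility with the symplectic structure restricted to $\im(\beta)$. With full support and $\I_\pre=\ker Z$, any kernel should then be allowed.

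To make the argument self-contained, I would also give the construction that presumably underlies $M^{\rm pred}$ in \eqref{eq:M_pred}. Take a ready system $R$ consisting of two copies of $B$, so $\R=\B\oplus\B'$, with Poisson ready state variability $\V=\{(x,x)\}$, the diagonal. This is Lagrangian for the form $\Omega_\B\oplus(-\Omega_\B)$, and after a symplectic sign flip on the second copy it becomes Lagrangian, hence coisotropic, so the ready state is Poisson. Alternatively take $\V=\Q_R$, the positions of a doubled system, which is the kernel of $\pi_{\P}$ and therefore Poisson. The interaction is the following. First swap $B$ with the first copy, so that the object now carries the ready value $x$ and the subject receives the object's old state $b$. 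The subject keeps $b$ together with the second copy carrying $x$. With a complete manifest variable the record contains $x$, which equals the object's final state. In addition, $\beta(b\oplus x)=x$ ranges over all of $\B$, so the preparation support is full, and $\ker\epsilon=\{0\}$ gives $\I_\pre=\{0\}$.

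The step needing care is verifying that this "correlated pair" ready state is genuinely of Poisson type, meaning the kernel of a Poisson variable on $\R$, and that the map $b\oplus(y,y')\mapsto (y)\oplus(b,y')$ is a legitimate symplectic map, which it is, being a permutation of blocks. If the diagonal is not coisotropic for the chosen form, I would replace it with the anti-diagonal, or apply the symplectic map $(y,y')\mapsto(y,\, \text{sign-flipped } y')$. Such a map exists in the finite-field case as well, for instance by swapping $q\leftrightarrow p$ with a sign. Afterwards I would absorb this map into $M$, checking that $\V$ is the kernel of $(y,y')\mapsto y-\tilde y'$, whose rows Poisson-commute. This is the main potential obstacle, and it is resolved by the Lagrangian-graph argument.
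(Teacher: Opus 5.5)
The gap is the step from $\id_\B$ to an arbitrary surjective $Z$. Your identity case works, but not with the ready states you name. For the form $\Omega_\B\oplus\Omega_\B$ (in characteristic $\neq 2$), the diagonal is a symplectic subspace whose symplectic complement is the anti-diagonal. So neither is coisotropic, and the symplectic flip $(y,y')\mapsto(y,-y')$ merely exchanges the two. What you need is the graph $\Gamma_\phi=\{(x,\phi x) : x\in\B\}$ of an \emph{anti}-symplectic map, $\phi^T\Omega_\B\phi=-\Omega_\B$ (e.g.\ $(q,p)\mapsto(q,-p)$), which is Lagrangian; your closing ``Lagrangian-graph'' remark implicitly requires this. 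Taking $\V=\Q_R$ with the same block swap would give $\im(\beta)=\Q$ rather than $\B$, unless $\Q_R$ is again such a graph. With $\V=\Gamma_\phi$, the block permutation $b\oplus(y,y')\mapsto y\oplus(b,y')$ gives $\im(\beta)=\B$ and $\I_\pre=\{0\}$. This is a valid, and arguably more transparent, alternative to $M^{\rm pred}$ with $L=\id$.

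In the paper, ``$Z$ can be predicted'' means that $Z$ is, up to isomorphism, the predicted variable of some measurement, i.e.\ $\im(\beta)=\B$ and $\I_\pre=\ker Z$. Compare the footnote to \cref{thm:complete_MV_retrodictive}, which separates ``can be retrodicted'' from ``retrodictively fixed''.

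\begin{itemize}
\item Post-processing the record only shows that $Z$ is predictively fixed, and the relevant result would be \cref{prop:fix_predicted}, not \cref{prop:fixed_from_measured}.
\item ``Forgetting the relevant coordinates'' must be done by a Poisson map, whose kernel is necessarily a symplectic subspace (\cref{lem:ops_forms}). A map acting on the second register alone can therefore discard $\phi(\ker Z)$ only when $\ker Z$ is symplectic. This is precisely the obstruction that limits \cref{thm:complete_MV_retrodictive}. Getting around it needs a kernel that also involves the $b$-register, which you would have to construct.
\item You also cannot rely on \cref{thm:prediction_complete_MV_Poisson_RS} by guessing its hypothesis. Its actual condition is $\bigl(\im(\beta)\bigr)^\omega\subseteq\I_\pre$, which is vacuous when $\im(\beta)=\B$. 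Its Part 2, with $\F=\B$ (hence $\B_1=\{0\}$) and $\ker L=\ker Z$ in $M^{\rm pred}$, already yields the required measurement for every $Z$. That is the paper's entire proof.
\end{itemize}

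Your own construction can be repaired in one line by adding unrecorded noise along $\ker Z$. Enlarge the ready state to $\V=\Gamma_\phi+(\ker Z\oplus 0)$, which is coisotropic because it contains the Lagrangian $\Gamma_\phi$. Then $\beta\bigl(b\oplus(x+u,\phi x)\bigr)=x+u$ still has image $\B$. Meanwhile $\epsilon\bigl(b\oplus(x+u,\phi x)\bigr)=(b,\phi x)$ gives $\ker\epsilon=\{0\oplus(u,0) : u\in\ker Z\}$, and hence $\I_\pre=\ker Z$.
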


In particular, for every toy object $B$, there exists such a subject and a physical transformation that generates a perfect correlation between the final ontic state of $B$ and the subject's record.
The interaction that achieves this is $M^{\rm pred}$ given by \cref{eq:M_pred} (after setting $L = \id$ therein). 

The above corollary only tells us about variables that can be predicted in the case of full preparation support, i.e.\ in the case of no non-empirical knowledge (cf.\ \cref{sec:learning_types}).
The more intricate result (\cref{thm:prediction_complete_MV_Poisson_RS}) does, however, imply some constraints on prediction.
For example, it says that the preparation support must have dimension at least half the dimension of $\B$.
Thus it cannot, in general, be $\{0\}$ and so such subjects cannot prepare toy objects in a precise ontic state.

Another consequence of \cref{thm:prediction_complete_MV_Poisson_RS} is a trade-off between empirical and non-empirical knowledge achievable for prediction by such subjects. 
In order to have perfect knowledge of the final ontic state of the object, i.e.\ in order to achieve $\I_\pre = \{0\}$, we must  have full preparation support, i.e.\ $\im(\beta) = \B$.
The subject then has maximal empirical knowledge of the object and no non-empirical knowledge thereof.
On the other hand, to maximise the non-empirical knowledge under the constraint derived by the theorem, the preparation support must be a Lagrangian subspace of $\B$ (\cref{def:complement}), which implies $\I_\pre = \im(\beta)$.
Consequently, the subject has no empirical knowledge of the post-measurement state of the object.
The predicted variables in the two extreme cases described in this paragraph are like those depicted in \mbox{\cref{fig:variables} (d)} and \mbox{\cref{fig:preparation_support} (b)}, respectively.
An example of a variable that cannot be predicted according to \cref{thm:prediction_complete_MV_Poisson_RS} is shown in \cref{fig:impossible_prediction}.

These constraints may be relevant to the capabilities of such subjects. 
If we imagine a world in which ready states and measurement interactions can be engineered by the subject, then $\im(\beta) = \{0\}$ means that the subject can prepare a specific ontic state (e.g.\ $0 \in \B$) of the object with certainty.
In the above context where the preparation support has to be coisotropic, this is not possible.
Using the interaction $M^{\rm pred}$ allows the subject to infer the precise ontic state prepared from its record.
Nevertheless, it cannot manipulate the prepared state to match a given target ontic state. 
Indeed, if this were possible, it would effectively achieve a measurement with preparation support given by $\{0\}$, in contradiction to \cref{thm:prediction_complete_MV_Poisson_RS}.

Finally, we consider complete ready states. 
As expected, given the previous results, there is no epistemic horizon in this setting. 
The proof is essentially identical to that of \cref{thm:prediction_Poisson_MV_complete_RS}.

\begin{corollary}[complete ready state]
	\label{thm:prediction_complete_MV_complete_RS}
	For toy subjects whose manifest variable and ready state are both complete, every surjective linear variable $\B \to \Z$ can be predicted.
\end{corollary}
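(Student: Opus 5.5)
The plan is to reduce this statement to the case already settled for Poisson manifest variables, in the same way the paper derived \cref{thm:prediction_Poisson_MV_complete_RS}. Fix a toy object $B$ and a surjective linear variable $L \colon \B \to \Z$. By \cref{thm:prediction_Poisson_MV_complete_RS}, there is a measurement $M \colon \B \oplus \R \to \B \oplus \S$ with complete ready state $\V = \{0\}$ and a Poisson manifest variable $E \colon \S \to \E$ that predicts $L$. I would not rely on the bare statement of that corollary, but on the specific interaction $M^{\rm full}$ from \eqref{eq:M_full} used in its proof. It has full preparation support and is repeatable, so \cref{thm:prediction_repeatable_kernels} identifies its predictively inaccessible subspace with $\ker L$.

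The key step is to replace the subject's manifest variable by a complete one without destroying the prediction. There are two natural options.

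\textbf{Option 1: keep $M$ and use $\id_\S$.} Since $\id_\S$ is finer than $E$, we have
\begin{equation}
	\ker(\id_\S \circ \pi_\S \circ M|_{\B\oplus\V}) \subseteq \ker(\epsilon).
\end{equation}
Hence the new $\I_\pre$ is contained in $\ker L$, while $\im(\beta)$ is unchanged and still equal to $\B$. The subject with a complete manifest variable then learns at least $L$. To obtain exactly $L$, I would appeal to the post-processing statement \cref{prop:fixed_from_measured}: if $\I_\pre' \subseteq \ker L$, then $L$ is a function of the predicted variable and therefore counts as predicted.

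\textbf{Option 2: compose $M$ with a discarding map.} Alternatively, I would compose $M$ with the discarding operation $\S \to \E$ corresponding to $E$, viewing $\E$ as a subsystem. The new subject is then $\E$ itself, equipped with the complete manifest variable $\id_\E$. Its record is literally $E(\pi_\S M(b \oplus v))$, so $\epsilon$ and $\beta$ are unchanged, and so is the predicted variable. The requirement here is that the codomain $\E$ of a maximal Poisson variable be a legitimate toy system, i.e.\ symplectic. This is plausible once $\E$ is realised as a copy of a Lagrangian complement, for instance via a symplectic change of basis making $E = \pi_\Q$.

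Option 1 is the cleaner route, because it keeps $\S$ as the subject's system.

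The main obstacle is whether the paper's definition of ``$L$ can be predicted'' requires the predicted variable to equal $L$ exactly, rather than merely to refine it. If exact equality is required, Option 1 with $\id_\S$ gives a predicted variable that may be strictly finer than $L$. I would then fall back on a direct construction: take $M^{\rm pred}$ from \eqref{eq:M_pred} with the given $L$, or equivalently $M^{\rm full}$ followed by a symplectic map on $\S$ and discarding of everything except a copy of $\Z$. The subject's complete record is then exactly $L(b)$, and the ready state component fixed at $0$ guarantees that $\beta$ is the identity on $\B$. Under this fallback, checking that the resulting map is a valid physical transformation (\cref{thm:dilation}) is the routine remaining step.
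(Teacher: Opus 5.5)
Your fallback is correct and is essentially the paper's proof, which reuses the argument for \cref{thm:prediction_Poisson_MV_complete_RS}. Take the composite measurement $(\id_\B \oplus N)\circ M^{\rm full}$ from \eqref{eq:comp_meas}, now with the complete manifest variable $\id_\T$ on $\T = \Z \oplus \W$. Since $\V = \{0\}$, one has $\epsilon(b \oplus 0) = N(b \oplus 0) = Zb \oplus 0$, so $\I_\ret = \ker Z$. The $\B\B$ block is still $\id_\B$, so the measurement is repeatable with full preparation support, and \cref{thm:prediction_repeatable_kernels} gives $\I_\pre = \I_\ret = \ker Z$. Note that it is this composite, not $M^{\rm full}$ itself, whose $\I_\pre$ is $\ker L$; for $M^{\rm full}$ alone, $\I_\pre = \{0\}$.

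This one-line computation is exactly what makes your Option~1 exact. Refining $\pi_\Z$ to $\id_\T$ does not shrink $\ker\epsilon$ here, because the $\W$-component of the subject's final state is identically $0$. So your worry that the predicted variable ``may be strictly finer than $L$'' does not arise for this measurement. Your $M^{\rm pred}$ variant also works by direct computation: with $\V = \{0\}$ one gets $\beta = \id$ and $\epsilon(b \oplus 0) = Lb \oplus 0$. You must first replace $L$ by a map of the form $\begin{pmatrix} 0 & \id \end{pmatrix}$ with the same kernel, so that $LL^T = \id$ holds and $M^{\rm pred}$ is Poisson.

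Three points elsewhere need correcting.

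First, in the paper ``can be predicted'' means being, up to isomorphism of the codomain, the predicted variable of some measurement, i.e.\ $\im(\beta) = \B$ and $\I_\pre = \ker L$. Being merely predictively fixed is weaker; the footnote to \cref{thm:complete_MV_retrodictive} separates the two notions explicitly. So the post-processing appeal in Option~1 does not by itself prove the statement. In any case the relevant result there is \cref{prop:fix_predicted}, not \cref{prop:fixed_from_measured}.

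Second, Option~2 fails outright. A Poisson manifest variable satisfies $E\Omega_\S E^T = 0$, which can never equal the invertible $\Omega_\E$ of a toy system. Equivalently, its kernel is coisotropic rather than symplectic, so $E$ is not a Poisson map by \cref{lem:ops_forms}. Hence $\E$ cannot be split off by discarding, whatever basis you choose.

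Third, the same structural issue affects ``discarding everything except a copy of $\Z$'' in your fallback. $\Z$ need not be even-dimensional, so the subject must retain $\Z \oplus \W$ as in \cref{lem:Lag_rotation}. Its record is then $L(b) \oplus 0$ rather than $L(b)$. This has the same kernel, so your conclusion is unaffected.
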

	
\section{Discussion and Outlook}\label{sec:conclusions}

In this section, we summarise our main conclusions and discuss their broader significance. 

\paragraph{Conditions for learning.}
An important lesson of this work is that epistemic horizons in nomic toy theory depend jointly on the subject's manifest variable, the available ready states, and the type of learning under consideration. 
The specific form of this dependence is summarised in \cref{tab:retrodiction,tab:repeatable,tab:prediction}. 
They show that an epistemic horizon is not a single phenomenon associated with a given physical theory, but also depends on the informational resources attributed to the embedded observer.

The epistemic horizon derived in our previous work \cite[Theorem 3.1]{Fa24} assumes partial self-knowledge as well as imperfect initial information about the ready system{\,---\,}the manifest variable and the ready state are both Poisson.
Whether this limitation persists under weaker assumptions was therefore left open. 
In this work we show that this restriction on self-knowledge is in fact not necessary:
A non-trivial epistemic horizon persists even for subjects with complete manifest variables, at least under the assumption of repeatability (\cref{thm:repeatable_complete_MV_Poisson_RS}).
By contrast, some restriction on the ready state is needed. 
For the two types of manifest variables considered, the limitation disappears whenever subjects have access to complete ready states. 
We therefore cannot regard partial self-knowledge alone as responsible for the epistemic horizon derived in \cite{Fa24}. 

Our model of learning also features a distinction between two modes of knowledge, both of which play an important role. 
On the one hand, the value of a manifest variable constitutes the empirical information available to the subject. 
On the other hand, information about the ready system, the measurement interaction, and the other features of the learning setup need not be physically encoded in the subject's record; we treat it as non-empirical background knowledge. 
A measurement interaction may provide the subject with new empirical information, but this constitutes learning about the world only in the presence of the relevant background knowledge.
Without it, the record could not be connected to a property of another system participating in the interaction.

\paragraph{Comparison of learning types.}
Our results reveal an asymmetry between knowledge of the past and knowledge of the future \cite{Al00,Wo24}: 
Retrodiction is generally less constrained than prediction. 
When a subject predicts a property of the object, there are effectively two copies of this information after the interaction. 
One is present in the final state of the object, which has the predicted property, and the other is encoded in the subject's manifest variable, whose value is correlated with that property. 
From this perspective, prediction amounts to broadcasting. 
In causal terms, successful prediction relies on correlations generated by common causes. 
A disturbance of the object can thus act as a resource for prediction, provided that it is a correlated disturbance that simultaneously affects the subject's record.
This is utilised in the interaction $M^{\rm pred}$ constructed in \cref{eq:M_pred}, which facilitates perfect prediction.

By contrast, retrodiction does not require the relevant information to be present in two systems at the same time. 
The correlation is instead between the initial state of the object and the final state of the subject. 
Retrodiction can therefore be implemented through a direct causal link even though the corresponding property is erased by the interaction, as for the swap operation $M^{\rm swap}$.

Repeatable measurements interpolate between the two extremes by connecting the retrodicted information to the measurement's effect on the object. 
The extra requirement of \cref{eq:repeatable} generally rules out both $M^{\rm pred}$ and $M^{\rm swap}$.
For the former, the disturbance which powers its predictive capability is the culprit, as it changes the retrodicted information of a successive measurement and thus spoils repeatability.
Similarly, while the disturbance of the swap is innocuous for retrodiction, it means that swapping cannot be repeatable.

As they form a restricted class of interactions, repeatable measurements impose constraints on learning at least as strong as retrodiction does. 
Since every variable retrodicted by a repeatable measurement is predictively fixed by the same interaction (\cref{cor:retrodiction_implies_prediction}), the same holds for prediction.

Notice also that for repeatable measurements and the two non-trivial types of manifest variables considered here, the epistemic horizon is determined entirely by the available ready states. 
Trivial ready states give rise to a total epistemic horizon, Poisson ready states to a knowledge-balance epistemic horizon, and complete ready states to no epistemic horizon (cf.\ \cref{tab:repeatable}).

However, one case stands out to highlight that we cannot think of epistemic horizons under repeatability as a simple intersection of the constraints for retrodiction and prediction respectively. 
Subjects with complete manifest variables and Poisson ready states face no epistemic horizon, both for retrodiction (\cref{cor:complete_MV_retrodictive}) and prediction (\cref{cor:prediction_complete_MV_Poisson_RS}). 
As shown in \cref{thm:repeatable_complete_MV_Poisson_RS}, such subjects can nevertheless repeatably retrodict only Poisson variables.
Therefore, they face a knowledge-balance epistemic horizon for learning via repeatable measurements. 

\paragraph{Relation to quantum theory.}
For certain classes of subjects, nomic toy theory recovers the knowledge-balance principle and thereby reproduces the epistemic features of Spekkens' toy model. 
In particular, at a knowledge-balance epistemic horizon, the maximal empirical content recoverable from the subject's record is the value of a Poisson variable of the object, which corresponds precisely to an epistemic state in Spekkens' toy theory. 
For odd-dimensional quantum systems, this in turn yields the stabiliser subtheory of quantum theory \cite{Ca17e}. 
The epistemic horizons derived here may therefore be viewed as giving rise to quantum-like features within a deterministic ontology without postulating the knowledge-balance principle directly \cite{Sp07}. 
This horizon does not come for free, however, since it depends on additional assumptions, most notably the fact that only Poisson ready states are available.

Extending the present analysis to full quantum theory is not straightforward. 
In nomic toy theory, the ontic state space and physical transformations are specified explicitly, and manifest variables and ready state resources are defined in terms of this structure. 
In quantum mechanics, by contrast, the nature of the ontic state space is itself a matter of interpretation, and the separability of subjects and objects cannot simply be taken for granted.

Bohmian mechanics provides a useful comparison. 
There, the joint configuration space of two systems is the Cartesian product of their individual configuration spaces, whereas their joint wave function need not be separable. 
Moreover, the standard arguments for Heisenberg uncertainty relations (so-called absolute uncertainty) and for the emergence of POVMs assume that information is ultimately grounded in configurations of physical systems \cite{DurrGoldsteinZanghi1992,DurrGoldsteinZanghi2004,Okon2025BohmianMechanics}. 
More generally, the distinction between the underlying Bohmian ontology and what can be accessed through measurement is central to the generic undetectability of Bohmian trajectories, as discussed in \cite{Fankhauser2025Undetectability}.
From the perspective of this work, grounding information in configurations amounts to selecting configurations as the physical variables carrying empirical records.
It therefore plays a role analogous to choosing a the position projection $\pi_\Q$ as the manifest variable in nomic toy theory. 

This comparison shows that the present definitions rely on ontological and epistemic aspects that are explicit in nomic toy theory but interpretation-dependent in quantum mechanics. 
Any generalisation must therefore specify how these structures are represented or replace them with more theory-independent notions.

\paragraph{Relation to perspectivalism.}
Nomic toy theory contains at least two levels of description of physical reality. 
At any given time, a physical system has a precise ontic state; this is the objective, third-person description provided by the theory. 
Relative to a subject's empirical record and background information, the system may also be assigned an epistemic state{\,---\,}the totality of subject's knowledge about the system. 
This gives a first-person description of how the system appears from the perspective of that subject.
Considering multiple subjective perspectives would require a further, intersubjective description that combines them \cite[Chapter 10]{Ad25b}. 
The present framework does not yet provide such a description, since our analysis considers arbitrary but only single subjects.
See also our discussion of multiple subjects and agency below.

On the other hand, thanks to the explicit way in which subjects and their knowledge are modelled, nomic toy theory provides a setting for studying how reality can appear from the perspective of a subject while retaining that a perspective-neutral description of the world exists in principle. 
It therefore falls under what Adlam calls ``moderate physical perspectivalism'' \cite{Ad24b}, as opposed to stronger forms of physical perspectivalism in which all facts are relativised.

Another approach to perspectives internal to a physical theory is provided by quantum reference frames \cite{An11,Va20}, where each reference frame is itself treated as a quantum system. 
However, the advantage of this approach, namely that it can be specified purely using notions from quantum theory at the operational level, comes at a cost.
The analogues of manifest variables and of ready state resources cannot be identified with the same granularity as in nomic toy theory. 
This may contribute to the underdetermination highlighted by inequivalent approaches and conceptual puzzles in the quantum-reference-frame literature \cite{Ca25,An11,Kr21,Br26}. 
Comparing these approaches with nomic toy theory may clarify which features of embedded observers necessitate ontological concepts and which can be formulated operationally. 
One concrete question is whether the classical version of the paradox of the third particle \cite{Ha26} can arise for subjects in nomic toy theory as well.

\paragraph{Justifying epistemic resources.}
One open question is why, for the manifest variables considered here, the resulting epistemic horizon for repeatable measurements is determined entirely by the available ready states. 
More generally, manifest variables and ready states are treated here as independent assumptions, and we derive the epistemic horizons that follow from different choices. 
It remains to be explored how these choices can be justified, and whether such justifications render the learning resources interdependent.

Independent grounds for particular ready states might come from thermodynamics, for example through low-entropy initial conditions, or from the evolutionary stability of systems whose records can reliably guide their subsequent behaviour (cf.\ \cref{rem:ready_state_resolution}). 
On the other hand, physically distinguished manifest variables may be identified by whether their values can serve as stable and reliable records of information, e.g.\ whether they can be copied.
A similar justification is given for the emergence of classical records in quantum Darwinism \cite{Bl06}.

\paragraph{Multiple subjects and agency.} \label{par:multiple_subjects}
Our analysis focused on a single embedded observer learning about an external system. 
Extending the framework to multiple subjects would allow one to ask under what conditions can different subjects consistently compare and combine their records and how information acquired by one subject should be represented from the perspective of another.

These questions become particularly salient in nested-observer scenarios such as Wigner's friend thought experiments \cite{schmid2023review, fankhauser2023epistemic}, where one subject is treated as part of the system observed by another. 
Nomic toy theory may provide a setting in which to examine how the epistemic restrictions of different observers relate to one another and to clarify how such restrictions should be applied. 

A further extension would be required to model agency more intricately.
Nomic toy theory, as described here, provides the infrastructure to study conditional operations and stable records, which have been argued to be a part of the minimal requirements for agency \cite{adlam2025agency}.
However, the act of choosing a measurement interaction in which the subject itself is partaking cannot be modelled within our current set-up.
See also \cite{Is25b} for the distinction between epistemic and agential attitudes.

\paragraph{Beyond nomic toy theory.}
A major direction for future work is the development of a more general framework. 
Category-theoretic approaches to measurement, such as the one developed in \cite{Fr26}, may provide a setting in which to reformulate the present concepts and compare epistemic horizons across different physical theories.

In this context, an ambitious question is whether quantum predictions could emerge as an effective description of the world according to some embedded observer.
This fits into a broader investigation of how epistemic limitations specified by an operational theory (e.g.\ quantum theory) jointly constrain the admissible fundamental physical theory (the ontological level), manifest variables, and ready states.
This differs from the usual task of constructing a hidden-variable model because the physical constitution of the observer and the informational resources available feature as a part of the explanation. 
Given a class of physical theories, a failure to recover the operational predictions motivates a revision of deeper assumptions of the model, such as deterministic dynamics or subject--object separability. 
The present paper does not answer such questions, but it provides a clear example in which some of the relevant concepts can be distinguished.

To extend the analysis of epistemic limitations beyond the current model of learning, which assumes a strict subject-object independence at the ontological level due to its roots in physicalism, one may take inspiration from the ideas of enactivism \cite{Va91}.
This relates to our discussion of agency, as enaction refers to the possibility that there are aspects of the subjective point of view are not merely passive observations.
Insofar as the metaphysics proposed by enactivism is related to that of QBism \cite{Ge24}, this may also be relevant for the path towards quantum epistemic horizons.

\section*{Acknowledgements}
J.F.\ gratefully acknowledges support from the European Union (ERC Advanced Grant, QuantAI, No. 101055129) and the Wittgenstein Award granted to H.\ J.\ Briegel [WIT9503323, DOI: 10.55776/WIT9503323]. 
The views and opinions expressed in this article are however those of the author(s) only and do not necessarily reflect those of the European Union or the European Research Council - neither the European Union nor the granting authority can be held responsible for them. 
T.G.\ has been funded by a grant from the Austrian Science Fund (FWF) 10.55776/ESP3451824.
We thank Robert W.\ Spekkens, Matthew Leifer, Y\`{\i}l\`{e} Y\={\i}ng, Phil A.\ LeMaitre, Antoine Soulas, and Lukas J.\ Fiderer for insightful discussions that have motivated and informed this work. 

Several aspects of the mathematical framework and results were discussed and brainstormed using Google Gemini 3.1 Pro. 
In particular, the construction of $M^{\rm pred}$ from \eqref{eq:M_pred} was suggested by Gemini. 

\bibliographystyle{abbrvnat}
\bibliography{biblio-epihorizons2}

\appendix

\section{Nomic Toy Theory: Formal Details}\label{sec:definitions}

To define a physical theory for the purposes of our study of epistemic horizons, we need to specify how to model 
\begin{itemize}
	\item individual physical systems,
	\item transformations between any two systems,
	\item a way to compose transformations in sequence and in parallel, and
	\item a special class of transformation that corresponds to discarding of (sub)systems.
\end{itemize} 
In mathematical terms, this amounts to a semicartesian (symmetric) monoidal category \cite{Ch10,Co11,Co14,Ho21,Co10}.

When we refer to \newterm{nomic toy theory}, we mean a specific choice of such a physical theory, which we introduce in \cref{sec:NTT} (see also \cref{sec:stage} for a less formal introduction).
In the terminology of \cref{fig:domains_for_epistemic_horizons}, nomic toy theory specifies the top left aspect of epistemic horizons{\,---\,}the fundamental physical theory.

Additionally, when we speak of a given nomic toy theory, we also understand that a choice of other aspects is given{\,---\,}namely, the supervenience domain and the ready state resource.
These two aspects correspond to different possible models of embedded observers with the physical theory (\cref{sec:subjects}).

For the purposes of studying epistemic horizons{\,---\,}limitations on learning for embedded observers{\,---\,}we need to further fix a specific type of learning for which to derive such a constraint.
We introduce the two main types of learning, about past and about future, in \cref{sec:learning}.

\Cref{sec:measured,sec:predicted} then include further details and results on retrodictive and predictive learning respectively. 
Finally, \cref{sec:bitemporal} establishes technical results about learning via repeatable measurements.

\subsection{Mathematical Preliminaries}\label{sec:math}

	In the interest of keeping the manuscript largely self-contained for readers unfamiliar with the terminology of symplectic linear algebra, we introduce a few standard concepts here.
	
	\begin{definition}[Symplectic vector space]\label{def:SVS}
		A symplectic vector space $\B$ over a field $\mathbb{F}$ is a vector space equipped with a symplectic form $\omega_{\B}$, i.e.\ a bilinear map $\B \times \B \to \mathbb{F}$ satisfying, for all $b \in \B$,
		\begin{equation}
			\omega_{\B}(b,b) = 0 \qquad \quad \text{and} \qquad \quad \bigl( \forall c \in \B \quad \omega_{\B}(b,c) = 0 \bigr) \;\; \implies \;\; b = 0.
		\end{equation} 
	\end{definition}
	A symplectic form is always skew-symmetric:
	\begin{equation}
		\omega_{\B}(b,c) = - \omega_{\B}(c,b),
	\end{equation}
	which also means that the dimension of $\B$ is even (whenever it is finite).
	Further, we sometimes use a chosen Darboux basis of $\B$, which provides a decomposition into two $n$-dimensional subspaces $\B = \Q \oplus \P$, such that the symplectic form $\omega_{\B}$ has the matrix representation $\Omega_{\B}$ from \cref{eq:sympl_form}.\footnote{Because of this choice, we generally do not distinguish between the symplectic form $\omega$ and its matrix representation $\Omega$.} 
	In \cref{eq:sympl_form}, we relate the symplectic form to a Poisson bracket between elements of $\B$.
	Strictly speaking, the Poisson bracket is defined on functionals, but we gloss over this distinction in this paper, because the chosen basis gives an isomorphism between $\B$ and its dual space.
	The subspaces $\Q$ and $\P$ are both Lagrangian (\cref{def:complement}).
	Moreover, the chosen Darboux basis provides a vector space isomorphism $\Q \cong \P$.
	
	\begin{definition}[Symplectic complement]\label{def:complement}
		Let $\V$ be a linear subspace of a symplectic vector space $\B$.
		We define the symplectic complement $\V^\omega$ to be the subspace
		\begin{equation}
			\V^\omega \coloneq \Set{ b \in \B  \given  \omega(b, v) = 0 \text{ for all } v \in \V }.
		\end{equation}
		The subspace $\V$ is 
		\begin{itemize}
			\item symplectic if $\V \cap \V^\omega = \{0\}$ holds,
			\item isotropic if $\V \subseteq \V^\omega$ holds,
			\item coisotropic if $\V^\omega \subseteq \V$ holds, and
			\item Lagrangian if we have $\V = \V^\omega$.
		\end{itemize}
	\end{definition}
	Note that if the dimension of $\B$ is $2n$, then:
	\begin{itemize}
		\item $\dim{\V} \leq n$ whenever $\V$ is isotropic.
		\item $\dim{\V} \geq n$ whenever $\V$ is coisotropic.
		\item $\dim{\V} = n$ whenever $\V$ is Lagrangian.
	\end{itemize}
	Symplectic subspaces are precisely those for which $\omega_\B$ restricts to a valid symplectic form on $\V$, so that they are symplectic vector spaces in their own right.
	
	Recalling the definition of a Poisson variable\footnotemark{} from \cref{eq:Poisson}, we can characterise such variables in terms of their kernel properties.
	\footnotetext{\label{foo:Poisson} The name stems from the connection to Poisson geometry.
	If we think of the symplectic vector space $\B$ as a Poisson manifold \cite[Definition B.3]{Cr21} and take the vector space $\Z$ to be a Poisson manifold with a vanishing Poisson tensor, then a linear map $\B \to \Z$ is a Poisson map in the sense of \cite[Definition 1.1]{Cr21} if and only if it is a Poisson variable in the sense of \cref{eq:Poisson}.}%
	\begin{lemma}[Characterisation of Poisson variables]\label{lem:Poisson}
		A linear map $Z \colon \B \to \Z$ is a Poisson variable if and only if its kernel $\K \coloneq \ker(Z)$ is a coisotropic subspace of $\B$.
	\end{lemma}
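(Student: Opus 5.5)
We must show: $Z$ linear, with $Z\Omega Z^T = 0$ (and surjective, as required for a Poisson variable), holds if and only if $\ker Z$ is coisotropic.

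The plan is to reinterpret the condition $Z\Omega_\B Z^T=0$ in terms of the row space of $Z$. Represent $Z\colon\B\to\Z$ as a matrix whose rows $z_1,\dots,z_k$ lie in the dual of $\B$. Using the chosen Darboux basis, we identify these rows with vectors of $\B$, as the paper does when glossing over the functional/vector distinction. The $(i,j)$ entry of $Z\Omega Z^T$ is $z_i^T\Omega z_j=\omega(z_i,z_j)$. So the condition says exactly that the subspace $\mathsf{W}:=\mathrm{span}\{z_i\}$ (the row space) is isotropic.

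Next I relate $\mathsf{W}$ to $\K=\ker Z$. By definition, $b\in\K$ iff $z_i^T b=0$ for all $i$. Since $\Omega$ is invertible, set $w_i:=\Omega^{-T} z_i$ (up to sign conventions). Then $z_i^Tb = w_i^T\Omega b=\omega(w_i,b)$, so $\K=\mathsf{W}'^{\,\omega}$ with $\mathsf{W}'=\Omega^{-T}\mathsf{W}$. I should check that $\omega(w_i,w_j)$ equals $\pm\omega(z_i,z_j)$. This follows from $\Omega^{-1}=-\Omega=\Omega^T$ in a Darboux basis, so $\Omega^{-T}=\Omega$ and $w_i^T\Omega w_j = z_i^T\Omega^T\Omega\Omega z_j = z_i^T\Omega z_j$. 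Hence $\mathsf{W}$ is isotropic iff $\mathsf{W}'$ is isotropic.

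It then remains to use the standard fact, valid over any field in finite dimension because $\omega$ is nondegenerate: $\dim \mathsf{U}^\omega = 2n-\dim\mathsf{U}$ and $(\mathsf{U}^\omega)^\omega=\mathsf{U}$. From these, $\mathsf{U}$ is isotropic iff $\mathsf{U}^\omega$ is coisotropic: $\mathsf{U}\subseteq\mathsf{U}^\omega$ iff $(\mathsf{U}^\omega)^\omega\subseteq \mathsf{U}^\omega$. Applying this to $\mathsf{U}=\mathsf{W}'$ gives the equivalence with $\K$ coisotropic.

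Surjectivity plays no role in the equivalence itself; it is part of the definition on both sides, or simply carried along. The only delicate point is the bookkeeping of transposes and signs when passing between functionals and vectors via $\Omega$, together with the double-complement identity over finite fields. That identity follows from nondegeneracy, since $\mathsf{U}^\omega$ is the kernel of the surjective map $\B\to\mathsf{U}^*$, $b\mapsto\omega(b,\cdot)|_{\mathsf{U}}$. Its surjectivity holds because the map $\B\to\B^*$ induced by $\omega$ is an isomorphism, and the restriction $\B^*\to\mathsf{U}^*$ is onto.
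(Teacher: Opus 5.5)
Your proof is correct and is essentially the paper's argument: both identify $\K$ and $\im(\Omega Z^T)=\Omega\,\mathrm{row}(Z)$ as mutual symplectic complements, so that coisotropy of $\K$ becomes isotropy of $\im(\Omega Z^T)$, i.e.\ $Z\Omega Z^T=0$. The only difference is in the bookkeeping. The paper computes $\K^\omega=\im(\Omega Z^T)$ from the fact that the orthogonal complement of $\ker Z$ is the row space. You instead read off $\K=(\Omega\,\mathrm{row}(Z))^\omega$ directly from the definition and then invoke the symplectic double-complement identity, which you justify over arbitrary fields.
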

	\begin{proof}
		To show this, we can work in a Darboux basis of $\B$, which then also gives us an inner product on $\B$.
		We can use this choice to compute the symplectic complement $\K^\omega$ of $\K$, which is the set of those $b \in \B$ that satisfy 
		\begin{equation}\label{eq:complement_element}
			0 = b^T \Omega k = (-\Omega b)^T k
		\end{equation}
		for all $k \in \K$.
		This means that every $\Omega b$ is orthogonal to $\K$.
		That is, for each $b \in \K^\omega$, there exists a $w \in \Z$ such that we have
		\begin{equation}
			-\Omega b = Z^T w, \qquad \text{which implies} \qquad b = \Omega Z^T w.
		\end{equation} 
		Since every such $b$ satisfies \cref{eq:complement_element}, we can express $\K^\omega$ as the image of the linear map $\Omega Z^T$.
		
		By definition, the kernel of $Z$ is coisotropic, which means that every $b \in \K^\omega$ satisfies $Zb = 0$.
		Using the above characterisation, this is equivalent to
		\begin{equation}
			Z \Omega Z^T w = 0
		\end{equation}
		for all $w \in \Z$, which is precisely the condition that $Z$ is a Poisson variable.
	\end{proof}
	We can use this result to generalise the definition of Poisson variables to the case of properties that only pertain to a part of the state space.
	This is useful when studying prediction, because the predicted variable as introduced in \cref{eq:predicted_main} is only defined on a subspace of $\B$ and thus is not a variable in the strict sense of \cref{sec:variables}.
	\begin{definition}[Partial variable]\label{def:partial_variable}
		For a given toy object $B$, a \newterm{partial variable} is a function $Z \colon \F \to \Z$, where $\F$ is a subset of the ontic state space $\B$.
		It is a \newterm{Poisson partial variable} if $Z$ is surjective, linear, and its kernel is coisotropic as a subspace of $\B$.
	\end{definition}
	Clearly, for every Poisson partial variable, the domain of definition $\F$ is necessarily also a coisotropic subspace.
	The smallest possible coisotropic subspace is a Lagrangian one. 
	We use this to define a subclass of Poisson variables that retain the most information about the system $\B$ in their value.
	\begin{definition}[Maximal Poisson variable]\label{def:max_Poisson}
		If $Z$ is a surjective linear variable whose kernel is Lagrangian, we say that $Z$ is a maximal Poisson variable.
	\end{definition}
	
	Finally, maps that preserve the symplectic form can be defined as follows.
	\begin{definition}[Symplectic maps]\label{def:symplectic_map}
	Given two symplectic vector spaces $\B$ and $\C$, a linear map $M \colon \B \to \C$ is symplectic if it satisfies
	\begin{equation}\label{eq:symplectic}
		M^T \Omega_\C M = \Omega_\B.
	\end{equation}
	\end{definition}
	Note that symplectic maps are injective and thus they necessitate $\dim(\B) \leq \dim(\C)$.
	This is in contrast to Poisson maps (\cref{def:Poisson_map}), which are surjective (\cref{lem:ops_surj}) and include discarding transformations (\cref{ex:Poisson_map}).
	In particular, the category of symplectic vector spaces with symplectic maps between them does not constitute a suitable physical theory in our sense.
	One of the reasons is that it lacks discarding maps.
	The other is that every inclusion map $\{0\} \to \C$ is symplectic and including such maps as allowed transformations would provide free access to complete ready states (\cref{sec:toy_subjects}).
	
\subsection{Physical Systems and Transformations}\label{sec:NTT}
	
	A \newterm{physical system} $B$ in nomic toy theory is specified by a finite-dimensional symplectic vector space $\B$ with dimension $2n$. 
	
	Given two systems $B$ and $R$, their \newterm{composite} is given by $\B \oplus \R$, whose symplectic form is $\Omega_\B \oplus \Omega_\R$. 
	For any such composite, we have a projection map sending $b \oplus r \mapsto b$ for $b \in \B$ and $r \in \R$, which we denote by
	\begin{equation}\label{eq:projection}
		\pi_{\B} \colon \B \oplus \R \to \B.
	\end{equation}
	These can be interpreted as discarding operations in nomic toy theory.
	
	In order to describe general physical processes in nomic toy theory, we use the following concept from Poisson geometry \cite{Da99,Cr21}.
	\begin{definition}[Poisson maps]
		\label{def:Poisson_map}
		Let $\B$ and $\C$ be symplectic vector spaces.
		A linear map $M \colon \B \to \C$ is \newterm{Poisson}\footnotemark{} if it satisfies
		\footnotetext{\label{foo:Poisson_map}Just as for Poisson variables (\cref{foo:Poisson}), the terminology comes from Poisson geometry.
		If we think of the symplectic vector spaces $\B$ and $\C$ as Poisson manifolds, then a linear map between them is a Poisson map if and only if it satisfies \cref{eq:Poisson_map}.
		On a related note, the differential (at a point) of an arbitrary Poisson map between Poisson manifolds is a linear map of the tangent spaces, which restricts to the respective symplectic leaves \cite[Section 5]{Da99}, on which it further satisfies \cref{eq:Poisson_map}.}%
		\begin{equation}\label{eq:Poisson_map}
			M \Omega_\B M^T = \Omega_\C
		\end{equation}
		where $\Omega_\B$ and $\Omega_\C$ are the matrices of the respective symplectic forms.
	\end{definition}
	\begin{example}\label{ex:Poisson_map}
		There are two fundamental types of Poisson maps.
		The first one is a surjective symplectic transformation, i.e.\ a linear map $M \colon \B \to \B$ satisfying \cref{eq:symplectic}.
		Such an $M$ is necessarily a bijection since the symplectic form is non-degenerate.
		Moreover, its inverse is also symplectic and so it is reversible.
		The second one is a discarding map, i.e.\ a projection as in \eqref{eq:projection}.
		These are clearly not reversible.
		We show in \cref{thm:dilation} that all Poisson maps can be generated by a sequential composition of maps of these two types.
	\end{example}
	Even though we need to choose bases of $\B$ and $\C$ to make sense of \cref{eq:Poisson_map}, it is clear that this does not influence which maps are Poisson.
	This is made explicit also by the basis-independent formulation in \cref{lem:ops_forms}.
	
	A general \newterm{physical transformation} $m \colon \B \to \C$ in nomic toy theory is given by an affine map 
	\begin{equation}
		m(b) = M b + c
	\end{equation}
	where $c$ is any element of $\C$ and $M \colon \B \to \C$ is a Poisson map.
	To study learning in nomic toy theory, we can set $c = 0$ without loss of generality and work with linear maps (see \cref{rem:zero_affine_shift}).
	
	As we mention briefly in \cref{sec:toy_systems}, reversible transformations in nomic toy theory can be interpreted as describing the time evolution of classical particles with respect to an at most \emph{quadratic} Hamiltonian.
	Alternatively, we may also interpret the ontic state space as a tangent space to a symplectic manifold, thus representing infinitesimal fluctuations of a general phase space in classical mechanics. 
	In this case, dynamics generated by an \emph{arbitrary} Hamiltonian leads to a symplectic linear map between the tangent spaces, namely the pushforward of the Hamiltonian flow.
	General physical transformations, which also allow for discarding of subsystems, can be then interpreted as modelling linearised open-system dynamics of classical systems.

	Given two transformations $m \colon \B \to \C$ and $n \colon \S \to \T$, their parallel composite is the direct sum $m \oplus n \colon \B \oplus \S \to \C \oplus \T$, which acts as expected via
	\begin{equation}
		(m \oplus n)(b \oplus s) = \bigl( m(b) \bigr) \oplus \bigl( n(s) \bigr).
	\end{equation}

	It is clear from the definition that physical transformations are closed under both sequential and parallel composition, and that they include the discarding maps as introduced in \eqref{eq:projection}.
	
	The following result establishes that the dimension of the output of a Poisson map can be no larger than the dimension of its input.
	\begin{lemma}\label{lem:ops_surj}
		Every Poisson map is surjective.
	\end{lemma}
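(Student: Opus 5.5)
The plan is to read surjectivity off the defining identity \eqref{eq:Poisson_map}, $M \Omega_\B M^T = \Omega_\C$, using the non-degeneracy of the symplectic form on $\C$. Since we work in finite dimensions with a Darboux basis, $\Omega_\C$ is an invertible square matrix. In fact $\Omega_\C^2 = -\id$, so $\Omega_\C^{-1} = -\Omega_\C$. A linear map $M \colon \B \to \C$ is surjective if and only if its matrix has full row rank, i.e.\ $\rank(M) = \dim(\C)$. It therefore suffices to bound the rank of $M$ from below.

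The key step is the rank inequality for products, $\rank(XY) \leq \min\{\rank(X), \rank(Y)\}$. Applied to $M \Omega_\B M^T = \Omega_\C$, it gives
\[
  \dim(\C) = \rank(\Omega_\C) = \rank\bigl(M \Omega_\B M^T\bigr) \leq \rank(M) \leq \dim(\C).
\]
Hence $\rank(M) = \dim(\C)$ and $M$ is surjective.

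An alternative, more explicit route avoids rank arguments. Multiplying \eqref{eq:Poisson_map} on the right by $\Omega_\C^{-1}$ gives $M \bigl(\Omega_\B M^T \Omega_\C^{-1}\bigr) = \id_\C$. So $N \coloneq \Omega_\B M^T \Omega_\C^{-1} \colon \C \to \B$ is a right inverse of $M$. Any $c \in \C$ is then the image of $Nc$, which proves surjectivity directly and also yields the right-invertibility mentioned in \cref{sec:prediction}.

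The proof has no genuine obstacle. The one point to check is that the argument holds over finite fields $\mathbb{Z}_\ell$ as well as over $\mathbb{R}$. Neither the rank inequality nor the existence of $\Omega_\C^{-1}$ depends on the field, so it does. For a basis-independent presentation, one can instead phrase the right inverse through the isomorphisms $\B \cong \B^*$ and $\C \cong \C^*$ induced by $\omega_\B$ and $\omega_\C$, with $M^T$ read as the dual map.
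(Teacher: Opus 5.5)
Your main argument is exactly the paper's proof. The rank chain $\dim(\C) = \rank(\Omega_\C) = \rank(M\Omega_\B M^T) \leq \rank(M) \leq \dim(\C)$ forces $M$ to have full row rank, and the right inverse $\Omega_\B M^T \Omega_\C^{-1}$ you add is a correct, more explicit restatement of the same observation.
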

	\begin{proof}
		Let $2n$ be the dimension of $\C$ for a Poisson map $M \colon \B \to \C$.
		Then we have
		\begin{equation}
			2n = \rank(\Omega_{\C}) = \rank(M \Omega_\B M^T) \leq \rank(M).
		\end{equation} 
		Since the rank of $M$ cannot be larger than the dimension of its output, we must have ${\rank(M)=2n}$ and thus $M$ is surjective.
	\end{proof}
	Therefore, whenever $\B$ and $\C$ are isomorphic, a Poisson map must be a linear isomorphism. 
	By the standard argument which establishes that an invertible matrix is symplectic if and only if its transpose is symplectic, we get that in this case Poisson and symplectic maps coincide.
	For inequivalent input and output spaces, Poisson maps further include the possibility of information erasure, an idea that we formalise in \cref{thm:dilation} below.
	
	First, however, let us provide an alternative equivalent characterisation of Poisson maps.
	
	\begin{proposition}[Characterisation of Poisson maps]\label{lem:ops_forms}
		For a surjective linear map ${N \colon \R \to \S}$, the following are equivalent:
		\begin{enumerate}
			\item\label{it:ops} $N$ is Poisson.
			
			\item\label{it:ops_kernel} The kernel of $N$, denoted by $\K$, is a symplectic subspace of $\R$ and the restriction of $N$ to its symplectic complement, $\K^\omega$, is a symplectic isomorphism.
			
			\item\label{it:ops_forms} For all $a_1, a_2$ in $\K^\omega$, we have
			\begin{equation}\label{eq:ops_forms}
				\omega_\S\bigl(N(a_1), N(a_2) \bigr) = \omega_\R(a_1, a_2).
			\end{equation}
			Here, $\omega_\B$ and $\omega_\R$ are the symplectic forms of the two respective vector spaces.
		\end{enumerate}
	\end{proposition}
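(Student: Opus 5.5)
I will prove the cycle \ref{it:ops} $\Rightarrow$ \ref{it:ops_kernel} $\Rightarrow$ \ref{it:ops_forms} $\Rightarrow$ \ref{it:ops}. The main tool is a basis-free reformulation of \cref{eq:Poisson_map}. Write $\Omega_\R^{-1}$ and $\Omega_\S^{-1}$ for the Poisson tensors, viewed as maps from the dual spaces. Then $M\Omega_\R M^T=\Omega_\S$ says that the transpose $N^T\colon \S^*\to\R^*$ intertwines the Poisson tensors. Concretely, I will use the musical isomorphisms $\flat_\R\colon r\mapsto \omega_\R(r,\cdot)$ and define $\sharp_\R$ as its inverse. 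In these terms, $N$ is Poisson iff for every $\xi\in\S^*$,
\begin{equation}
N\bigl(\sharp_\R(N^T\xi)\bigr)=\sharp_\S(\xi).
\end{equation}
Checking this against the Darboux matrix formula is a one-line computation, up to the sign conventions for $\Omega$ vs.\ $\Omega^{-1}=-\Omega$.

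\textbf{First step.} I will identify the image of $\sharp_\R\circ N^T$ with $\K^\omega$, where $\K=\ker N$. Indeed, $\omega_\R(\sharp_\R N^T\xi,k)=\xi(Nk)=0$ for all $k\in\K$, up to sign. So the image lies in $\K^\omega$. Surjectivity of $N$ makes $N^T$ injective, so the image has dimension $\dim\S=\dim\R-\dim\K=\dim\K^\omega$, and hence equals $\K^\omega$.

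\textbf{\ref{it:ops} $\Rightarrow$ \ref{it:ops_kernel}.} Set $a=\sharp_\R N^T\xi\in\K^\omega$. By the reformulation, $Na=\sharp_\S\xi$. This lets me compute, for $a_i=\sharp_\R N^T\xi_i$,
\begin{equation}
\omega_\S(Na_1,Na_2)=\omega_\S(\sharp_\S\xi_1,\sharp_\S\xi_2)=\pm\xi_2(\sharp_\S\xi_1)=\pm\xi_2(Na_1)=\pm(N^T\xi_2)(a_1)=\omega_\R(a_1,a_2),
\end{equation}
with the signs matching consistently on both sides. So $N|_{\K^\omega}$ preserves the forms. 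It remains to see that $\K^\omega$ is symplectic and that $N|_{\K^\omega}$ is an isomorphism. Injectivity of $N|_{\K^\omega}$ follows because $N(\sharp_\R N^T\xi)=\sharp_\S\xi=0$ forces $\xi=0$. Since $\dim\K^\omega=\dim\S$, the restriction is a linear isomorphism onto $\S$. A form-preserving isomorphism onto a symplectic space makes $\omega_\R|_{\K^\omega}$ nondegenerate. Thus $\K^\omega$ is symplectic, and hence so is $\K=(\K^\omega)^\omega$, since $\K\cap\K^\omega=\{0\}$ is symmetric in the two.

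\textbf{\ref{it:ops_kernel} $\Rightarrow$ \ref{it:ops_forms}.} This is immediate from the definition of a symplectic isomorphism.

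\textbf{\ref{it:ops_forms} $\Rightarrow$ \ref{it:ops}.} This is the step I expect to need the most care, because hypothesis \ref{it:ops_forms} does not assume a priori that $\K$ is symplectic.

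First, I claim $\K\cap\K^\omega=\{0\}$. Take $k\in\K\cap\K^\omega$. Then $\omega_\R(k,a)=\omega_\S(Nk,Na)=0$ for all $a\in\K^\omega$, and also $\omega_\R(k,\K)=0$ since $k\in\K^\omega$. Since $\R=\K+\K^\omega$ is not yet known, I argue via the image computed in the first step instead. For every $\xi$, the vector $\sharp_\R N^T\xi$ lies in $\K^\omega$, so
\begin{equation}
0=\omega_\R(\sharp_\R N^T\xi,k)=\pm\xi(Nk),
\end{equation}
which is trivially true and gives no information about $k$. So I will use a different argument: \ref{it:ops_forms} applied with $a_1=k$ gives $\omega_\R(k,\K^\omega)=0$, i.e.\ $k\in(\K^\omega)^\omega=\K$, which is again no new information. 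So I will instead derive nondegeneracy directly. Suppose $a\in\K^\omega$ satisfies $\omega_\R(a,\K^\omega)=0$. Then $\omega_\S(Na,N\K^\omega)=0$. I need $N\K^\omega=\S$, which follows from the first step if $N|_{\K^\omega}$ is onto. From $\R=\K^\omega+\K$... this circularity is exactly the obstacle. I will resolve it by a dimension count: $\dim N(\K^\omega)=\dim\K^\omega-\dim(\K\cap\K^\omega)$. On $\K^\omega/(\K\cap\K^\omega)$, the radical of $\omega_\R|_{\K^\omega}$ is exactly $\K\cap\K^\omega$, so the induced form is nondegenerate and, by \ref{it:ops_forms}, equals the pullback of $\omega_\S$. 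I will then show the radical is zero using surjectivity together with the identity $\omega_\R(\sharp N^T\xi,r)=\pm\xi(Nr)$ for all $r$.

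Once $\K$ is symplectic, $\R=\K\oplus\K^\omega$, and $N|_{\K^\omega}\colon\K^\omega\to\S$ is a bijective form-preserving map. In Darboux bases adapted to this splitting, $N$ becomes the matrix $(0\;\; A)$ with $A$ symplectic, and then $N\Omega_\R N^T=A\Omega A^T=\Omega_\S$, as noted after \cref{lem:ops_surj}. This gives \ref{it:ops}.
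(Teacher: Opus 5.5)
Your arguments for (i)$\Rightarrow$(ii) and (ii)$\Rightarrow$(iii) are correct. Identifying $\im(\sharp_\R N^T)=\K^\omega$ and then using $N\sharp_\R N^T=\sharp_\S$ is a coordinate-free version of the paper's row-space argument.

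The gap is (iii)$\Rightarrow$(i), and it cannot be closed, because that implication is false as stated. Take $\R=\mathbb{F}^4$ with Darboux coordinates $(q_1,q_2,p_1,p_2)$, $\S=\mathbb{F}^2$ with its standard symplectic form, and $N(q_1,q_2,p_1,p_2)=(q_1,q_2)$. Then $N$ is surjective, and $\K=\ker N$ is the Lagrangian subspace spanned by the momenta, so $\K^\omega=\K$. For $a_1,a_2\in\K^\omega$ both sides of \eqref{eq:ops_forms} vanish, so (iii) holds. Yet $N\Omega_\R N^T=0\neq\Omega_\S$ and $\K\cap\K^\omega=\K\neq\{0\}$. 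More generally, any surjection onto a nonzero $\S$ with coisotropic kernel satisfies (iii) trivially.

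So the radical $\K\cap\K^\omega$ you were trying to show is zero can in fact be all of $\K^\omega$. Surjectivity of $N$ on $\R$ says nothing about $N(\K^\omega)$. As you noticed yourself, the identity $\omega_\R(\sharp_\R N^T\xi,k)=\pm\xi(Nk)$ only returns $k\in(\K^\omega)^\omega=\K$.

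The paper does not supply the missing step either. Its proof of (iii)$\Rightarrow$(ii) asserts that $\K$ and $\K^\omega$ span $\R$ by non-degeneracy. But $\K+\K^\omega=(\K\cap\K^\omega)^\omega$, so this presupposes exactly the conclusion $\K\cap\K^\omega=\{0\}$.

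To get a true equivalence, (iii) has to be strengthened. Two options work:
\begin{enumerate}
\item Require \eqref{eq:ops_forms} for all $a_1\in\K^\omega$ and all $a_2\in\R$. This follows from (ii), and it is the form actually invoked in the proof of \cref{thm:prediction_complete_MV_Poisson_RS}. Your first attempt then goes through verbatim: for $k\in\K\cap\K^\omega$ you get $\omega_\R(k,a_2)=\omega_\S(0,Na_2)=0$ for every $a_2\in\R$, hence $k=0$.
\item Add the hypothesis $N(\K^\omega)=\S$, or simply that $\K$ is symplectic. Your count $\dim N(\K^\omega)=\dim\K^\omega-\dim(\K\cap\K^\omega)$, together with $\dim\K^\omega=\dim\R-\dim\K=\dim\S$, then forces $\K\cap\K^\omega=\{0\}$. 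Your last paragraph finishes the proof of (i).
\end{enumerate}
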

	\begin{proof}
		\textbf{\ref{it:ops} $\bm{\Rightarrow}$ \ref{it:ops_kernel}:}
		Recall that a subspace, such as $\K$, is a symplectic subspace if the intersection with its symplectic complement
		\begin{equation}
			\K^{\omega} \coloneq \Set{ a \in \R  \given  \omega_\R(a, k) = 0 \text{ for all } k \in \K }
		\end{equation}
		only contains the zero vector.
		
		To prove this, consider an element $a \in \K \cap \K^{\omega}$.
		It satisfies
		\begin{equation}
			Na = 0  \qquad \text{ and } \qquad  a^T \Omega_\R k = 0 \text{ for all } k \in \K.
		\end{equation}
		The latter says that $a^T \Omega_\R = (\Omega_\R^T a)^T$ is a functional that vanishes on the kernel of $N$.
		By the Fundamental Theorem of Linear Algebra, $a^T \Omega_\R$ must be in the ``row space'' of $N$, i.e.\ there exists a $s \in \S$ such that we have
		\begin{equation}\label{eq:row_space}
			a^T \Omega_\R = s^T N.
		\end{equation}
		Using the Darboux basis for $\R$, we have $\Omega_\R^T = \Omega_\R^{-1}$, and thus we get $a =  \Omega_\R N^T s$ from the transpose of \cref{eq:row_space}.
		Then, by $Na=0$, we obtain the first equality in 
		\begin{equation}
			0 = N \Omega_\R N^T s = \Omega_{\S} s,
		\end{equation}
		while the second uses the fact that $N$ is Poisson. 
		Since $\Omega_{\S}$ is invertible, we obtain $s = 0$ and conclude $a=0$.
		This shows that $\K$ is indeed a symplectic subspace.
		
		Let us denote the restriction of $N$ to  $\K^\omega$ by $\tilde{N} \colon \K^\omega \to \S$.
		$\tilde{N}$ is clearly a vector space isomorphism by \cref{lem:ops_surj} and $\K \cap \K^{\omega} = \{0\}$ proven above.
		Note that we have the splitting ${\R = \K \oplus \K^\omega}$, with respect to which $\Omega_{\R}$ is block diagonal.
		In particular, the map $\tilde{N}$ is Poisson and thus also symplectic, since it is invertible.
		
		\textbf{\ref{it:ops_kernel} $\bm{\Rightarrow}$ \ref{it:ops}:}
		Using the notation and splitting of $\R$ from the previous paragraph, we can write the left hand side of \cref{eq:Poisson_map} as
		\begin{equation}\label{eq:ops_split}
			\begin{pmatrix}
				0 & \tilde{N}
			\end{pmatrix}
			\begin{pmatrix}
				\Omega_{\K} & 0 \\
				0 & \Omega
			\end{pmatrix}
			\begin{pmatrix}
				0 \\ \tilde{N}^T
			\end{pmatrix} = \tilde{N} \Omega \tilde{N}^T,
		\end{equation}
		where $\Omega$ is a short-hand notation for $\Omega_{\K^{\omega}}$.
		The right-hand side of \cref{eq:ops_split} equals $\Omega_{\S}$ since $\tilde{N}$, being a symplectic isomorphism, is also Poisson.
		Thus, $N$ itself is Poisson.
		
		\textbf{\ref{it:ops_kernel} $\bm{\Rightarrow}$ \ref{it:ops_forms}:}	
		We denote the inverse of $\tilde{N}$ by $P \colon \S \to \K^\omega$, which is also a symplectic map, since $\tilde{N}$ a symplectic isomorphism.
		That is, we have
		\begin{equation}
			P^T \Omega P = \Omega_{\S}.
		\end{equation}
		Thus, for any $s_1, s_2 \in \S$, we can define $a_i \coloneq P s_i$, so that we have
		\begin{align}
			\omega_\S\bigl(s_1, s_2 \bigr) &= s_1^T \Omega_{\S} s_2 \\
			&= s_1^T P^T \Omega P s_2 \\
			&= a_1^T \Omega_{\R} a_2 = \omega_\R(a_1, a_2)
		\end{align}
		Since $N$ is a left inverse of $P$, we obtain \cref{eq:ops_forms}.
		
		\textbf{\ref{it:ops_forms} $\bm{\Rightarrow}$ \ref{it:ops_kernel}:}	
		Consider a $k \in \K \cap \K^{\omega}$.
		By \cref{eq:ops_forms} and $Na_1 = 0$, we have
		\begin{equation}
			\omega_\R (k, a) = \omega_{\S}(0,Na) = 0  \qquad \forall \, a \in \K^{\omega}.
		\end{equation} 
		Moreover, since $k$ is in $\K^{\omega}$, it also satisfies $\omega_\R (k, a) = 0$ for all $a \in \K$.
		Since the symplectic form is non-degenerate, $\K$ and $\K^{\omega}$ together span the whole vector space, and so we get $k = 0$ by the non-degeneracy of $\omega_\R$ again.
		Thus, the kernel of $N$ is indeed a symplectic subspace.
		
		Consequently, the restriction of $N$ to $\K^{\omega}$ has a trivial kernel and is thus a vector space isomorphism.
		By \cref{eq:ops_forms}, it is also a symplectic map.
	\end{proof}
	
	This helps us show that every Poisson map can be expressed as a symplectic isomorphism followed by a discarding map. 
	
	\begin{theorem}[Dilations of Poisson maps]\label{thm:dilation}
		For every Poisson map $N \colon \R \to \S$, there exists a symplectic vector space $\T$ and a symplectic isomorphism $M \colon \R \to \S \oplus \T$ such that 
		\begin{equation}
			N = \pi_\S \comp M
		\end{equation}
		holds, where $\pi_\S$ is the projection map $\S \oplus \T \to \S$.\footnote{This result shows that Poisson maps are basically linear analogues of symplectic reductions by the action of a Lie group on a symplectic manifold.}
	\end{theorem}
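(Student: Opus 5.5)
The plan is to build the dilation directly from the structure supplied by \cref{lem:ops_forms}. Let $\K \coloneq \ker(N)$. By \cref{lem:ops_surj}, $N$ is surjective, and by \cref{lem:ops_forms} (implication \ref{it:ops} $\Rightarrow$ \ref{it:ops_kernel}), $\K$ is a symplectic subspace of $\R$ and the restriction $\tilde{N} \colon \K^\omega \to \S$ is a symplectic isomorphism. I will take $\T \coloneq \K$, equipped with the restriction of $\omega_\R$. This is a symplectic vector space precisely because $\K$ is a symplectic subspace, since then $\omega_\R$ restricted to $\K$ is non-degenerate.

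Next I use the splitting $\R = \K^\omega \oplus \K$. Because $\K \cap \K^\omega = \{0\}$ and $\dim \K + \dim \K^\omega = \dim \R$ (non-degeneracy of $\omega_\R$), this is a direct sum of vector spaces. It is also symplectically orthogonal by the definition of $\K^\omega$, so $\omega_\R = \omega_{\K^\omega} \oplus \omega_\K$ in block-diagonal form. Let $p_1 \colon \R \to \K^\omega$ and $p_2 \colon \R \to \K$ be the associated projections. I define
\begin{equation}
	M(r) \coloneq \tilde{N}\bigl(p_1(r)\bigr) \oplus p_2(r) \;\in\; \S \oplus \T .
\end{equation}
This $M$ is the composite of the symplectic identification $\R \cong \K^\omega \oplus \K$, which is symplectic by the block-diagonal form, with $\tilde{N} \oplus \id_\K$, a direct sum of symplectic isomorphisms. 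Hence $M$ is a symplectic isomorphism, and in matrix form $M^T(\Omega_\S \oplus \Omega_\K)M = \Omega_\R$ follows blockwise.

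Finally I check that $\pi_\S \comp M = N$. We have $\pi_\S M(r) = \tilde{N}(p_1 r)$. Writing $r = p_1 r + p_2 r$ with $p_2 r \in \K = \ker N$ gives $N r = N(p_1 r) = \tilde{N}(p_1 r)$. The only step needing care is verifying that the decomposition $\R = \K^\omega \oplus \K$ is genuinely symplectic-orthogonal, so that the identification with the external direct sum is symplectic. This follows immediately from the definition of $\K^\omega$ together with $\K \cap \K^\omega = \{0\}$, so I expect no real obstacle beyond invoking \cref{lem:ops_forms}.
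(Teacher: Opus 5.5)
Your proposal is correct and follows essentially the same route as the paper: take $\T \coloneq \ker(N)$, use \cref{lem:ops_forms} to split $\R = \ker(N)^\omega \oplus \ker(N)$ into symplectically orthogonal pieces, and set $M = \tilde{N} \oplus \id_\T$ under this identification. Your explicit check that $\pi_\S \comp M = N$, via $r = p_1 r + p_2 r$ with $p_2 r \in \ker(N)$, is, if anything, cleaner than the paper's block-matrix verification of the same fact.
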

	\begin{proof}
		Note that by \cref{lem:ops_forms}, we have $\S \cong \ker(N)^\omega$ and $\R \cong \ker(N)^\omega \oplus \ker(N)$.
		We let $\T \coloneq \ker(N)$, so that we have 
		\begin{equation}
			\R \cong \S \oplus \T.
		\end{equation}
		With respect to this decomposition, we then define $M$, in block matrix notation, to be
		\begin{equation}
			M = \begin{pmatrix}
				\tilde{N} & 0 \\
				0 & \id_{\T}
			\end{pmatrix}.
		\end{equation}
		By \cref{lem:ops_forms}, it is bijective and also symplectic:
		\begin{equation}
			\begin{split}
				M^T \Omega M &= 
				\begin{pmatrix}
					\tilde{N}^T & 0 \\
					0 & \id 
				\end{pmatrix}
				\begin{pmatrix}
					\Omega_{\S} & 0 \\
					0 & \Omega_{\T}
				\end{pmatrix}
				\begin{pmatrix}
					\tilde{N} & 0 \\
					0 & \id_{\T}
				\end{pmatrix} \\
				&= 
				\begin{pmatrix}
					\Omega_{\S} & 0 \\
					0 & \Omega_{\T}
				\end{pmatrix}
			\end{split}
		\end{equation}
		Moreover, projecting onto $\S$ gives back $N$:
		\begin{equation}
			\pi_\S \comp M = 
			\begin{pmatrix}
				\id_{\S} & 0
			\end{pmatrix}
			\begin{pmatrix}
				0 & \tilde{N} \\
				\id_{\T} & 0
			\end{pmatrix} = N
		\end{equation}
		and thus the proof is complete.
	\end{proof} 

	In essence, \cref{thm:dilation} provides a dilation of a Poisson map by expressing it as a composition of a reversible Poisson map followed by discarding.
	It is thus similar to Stinespring dilation theorem, which says that every completely positive map can be expressed as an isometry followed by discarding (see \cite{Pa02} for instance).

	The following lemma establishes that any surjective linear map $\Q \to \Z$, where both the input and output are Lagrangian subspaces of symplectic vector spaces, can be extended to a Poisson map.
	There is a variant of this construction for non-surjective linear maps, in which case the off-diagonal blocks in \eqref{eq:Lag_rotation} cannot vanish.
	We do not need this extra generality and thus restrict to the simpler case.

	\begin{lemma}[Transforming a Lagrangian subspace]\label{lem:Lag_rotation}
		Let $\R = \Q \oplus \P$ and $\S = \Z \oplus \W$ be symplectic vector spaces whose symplectic form has the canonical matrix representation of \eqref{eq:sympl_form} with respect to these decompositions.
		For any surjective linear map $Z \colon \Q \to \Z$, there is a Poisson map $N \colon \R \to \S$ satisfying  
		\begin{equation}
			N(q) = Z(q)
		\end{equation}
		for all $q \in \Q$.
	\end{lemma}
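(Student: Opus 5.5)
The plan is to build $N$ from the given Darboux decompositions and check the Poisson condition with block matrices. Write $\dim \Q = n$ and $\dim \Z = m$. Then $m \le n$, because $Z$ is surjective. Let $\K \coloneq \ker(Z) \subseteq \Q$, which has dimension $n-m$. Choose a complement $\C$ of $\K$ in $\Q$, so that $\Q = \C \oplus \K$. Then $Z$ restricts to a linear isomorphism $\tilde Z \colon \C \to \Z$.

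Next I would adapt the Darboux basis of $\R$ to this splitting. Pick bases of $\C$ and $\K$, which together give a basis of $\Q$. Take the dual basis in $\P$ with respect to $\omega_\R$, giving $\P = \C' \oplus \K'$, where $\C'$ is paired with $\C$ and $\K'$ with $\K$. Any basis change $A$ on $\Q$ extends to the symplectic map $A \oplus A^{-T}$, so we may assume without loss of generality that $\R = (\C \oplus \C') \oplus (\K \oplus \K')$. Here both summands are symplectic and orthogonal, and each has canonical form.

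With respect to this splitting I would define
\begin{equation}
	N \coloneq \begin{pmatrix} \tilde Z & 0 & 0 & 0 \\ 0 & \tilde Z^{-T} & 0 & 0 \end{pmatrix},
\end{equation}
mapping onto $\Z \oplus \W$. The columns are ordered $\C, \K, \C', \K'$ and the rows $\Z, \W$. Here $\tilde Z^{-T}$ is computed using the identifications $\C' \cong \C^*$ and $\W \cong \Z^*$ supplied by the Darboux bases. Equivalently, $N$ is the projection onto $\C \oplus \C'$ followed by the symplectic isomorphism $\tilde Z \oplus \tilde Z^{-T}$. The first factor is a discarding map, since $\K \oplus \K'$ is a symplectic summand. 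The second factor satisfies $M^T \Omega M = \Omega$, because $\begin{pmatrix} A & 0 \\ 0 & A^{-T}\end{pmatrix}$ is symplectic for any invertible $A$. By \cref{ex:Poisson_map}, both factors are Poisson, and the composite of Poisson maps is Poisson: $M_2 M_1 \Omega M_1^T M_2^T = M_2 \Omega M_2^T = \Omega$. Alternatively, one can verify $N \Omega_\R N^T = \Omega_\S$ directly; the computation reduces to $\tilde Z \tilde Z^{-1} = \id$ in the off-diagonal blocks. Finally, for $q = c + k \in \Q$ we get $N(q) = \tilde Z c = Z(c) + Z(k) = Z(q)$, as required.

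The only step needing care is the adapted Darboux basis. Specifically, one must confirm that extending a basis change on the Lagrangian $\Q$ by the contragredient on $\P$ preserves the canonical form. The lemma's hypothesis ensures that $\P$ is the complementary Lagrangian in canonical form, and that is what makes the contragredient extension valid. Once this is in place, everything else is routine block-matrix algebra.
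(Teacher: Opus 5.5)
Your proof is correct, but it takes a different route from the paper. The paper stays in the given Darboux coordinates. It writes $Z = U\Lambda V^T$ as a singular value decomposition, takes the $\P \to \W$ block to be $U\Gamma V^T$ with $\Gamma$ carrying the diagonal entries $1/\lambda_i$, and checks $N\Omega_\R N^T = \Omega_\S$ by one block multiplication whose only input is $\Lambda\Gamma^T = \id$. You instead split $\Q = \C \oplus \ker(Z)$, adapt the Darboux basis by the contragredient extension, and write $N$ as a discarding map followed by the symplectic isomorphism $\tilde Z \oplus \tilde Z^{-T}$, in the spirit of \cref{thm:dilation}. Both constructions give $N = \mathrm{diag}(Z, Y)$ with respect to $\Q \oplus \P \to \Z \oplus \W$. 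For such maps the Poisson condition reduces to $ZY^T = \id$, i.e.\ $Y^T$ must be a right inverse of $Z$. The paper's $U\Gamma V^T$ is the transpose of the Moore--Penrose pseudoinverse of $Z$, which is exactly your $Y$ for the choice $\C = \ker(Z)^\perp$.

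Your version uses only complements and dual bases, so it works over any field. That includes the finite fields $\mathbb{Z}_\ell$ the paper explicitly allows, where an SVD with orthogonal factors need not exist. For example, with the standard dot product over $\mathbb{Z}_5$, the surjective $Z = (1\;\;2)$ has $ZZ^T = 0$, which rules out $Z = U\Lambda V^T$ with $U, V$ orthogonal and $\lambda \neq 0$. The paper's version, in turn, gives one explicit matrix in the original coordinates with no change of basis to justify.

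One small slip: with columns ordered $\C, \K, \C', \K'$ as you state, the block $\tilde Z^{-T}$ belongs in the $\C'$ column, not the $\K$ column. Your displayed matrix corresponds to the ordering $\C, \C', \K, \K'$, and your verbal description of $N$ is the correct one.
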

	\begin{proof}
		First, we use the singular value decomposition\footnote{Here, we can choose arbitrary inner products on $\Q$ and $\Z$, and then use the same inner products for $\P$ and $\W$ respectively.} to write 
		\begin{equation}
			Z = U \Lambda V^T,
		\end{equation}
		where $U$ and $V$ are two orthogonal matrices and $\Lambda$ is a (rectangular) diagonal matrix with its diagonal entries denoted by $\lambda_i$.
		Note that since $Z$ is surjective, all the diagonal entries are non-zero.
		
		We also define a linear map $\Gamma \colon \P\to \W$, which is given by a (rectangular) diagonal matrix with its diagonal entries equal to $1/\lambda_i$. 
		Note that the transpose of $\Gamma$ is a right-inverse of $\Lambda$, i.e.\ we have $\Lambda \Gamma^T = \id$.
		
		We then construct the matrix $N$ as (we suppress the  isomorphisms $\Q \cong \P$ and $\W \cong \Z$):
		\begin{equation}\label{eq:Lag_rotation}
			N \coloneq 
			\begin{pmatrix}
				U \Lambda V^T & 0 \\
				0 & U \Gamma V^T
			\end{pmatrix} 
		\end{equation}
		
		Finally, let us show that $N$ is Poisson by direct computation:
		\begin{equation}
			\begin{split}
				N \Omega_{\R} N^T &= 
				\begin{pmatrix}
					U \Lambda V^T & 0 \\
					0 & U \Gamma V^T
				\end{pmatrix} 
				\begin{pmatrix}
					0 & \id \\
					-\id & 0
				\end{pmatrix}
				\begin{pmatrix}
					V \Lambda^T U^T & 0 \\
					0 & V \Gamma^T U^T
				\end{pmatrix}  \\
				&= 
				\begin{pmatrix}
					0 & U \Lambda V^T V \Gamma^T U^T \\
					- U \Gamma V^T V \Lambda^T U^T & 0
				\end{pmatrix} = \Omega_{\S}
			\end{split}
		\end{equation}
		where we have used the orthogonality of $U$ and $V$ as well as $\Lambda \Gamma^T = \id$.
	\end{proof}
	
\subsection{Toy Subjects}\label{sec:subjects}

	As explained in \cref{sec:toy_subjects}, a subject in nomic toy theory is a physical system equipped with a specification of self-knowledge.
	We denote the subject's ontic state space by $\S = \Q \oplus \P$, with symplectic form $\Omega_\S$ given by \eqref{eq:sympl_form} with respect to this decomposition. 
	In principle, one could consider any variable as specifying the information directly accessible to the subject.
	However, since we do not have anything specific to say about this general case, we restrict our attention to \newterm{manifest variables} that are given by a linear map 
	\begin{equation}
		E \colon \S \to \E.
	\end{equation}
	Its kernel can be interpreted as the degrees of freedom that are inaccessible to the subject{\,---\,}they are part of the physical system, but they do not directly encode subject's knowledge.
	On the other hand, the quotient space $\linefaktor{\S}{\ker(E)}$ corresponds to the supervenience domain of the toy subject (cf.\ \cref{fig:domains_for_epistemic_horizons}).
	It carries the physical record of the (empirical) information the subject has about the world and thus makes it into a memory system \cite{Wo92}.

	As depicted in \cref{fig:interaction}, the manifest variable is relevant after a measurement interaction, where it encodes the information learned by the subject.
	Additionally, we can express prior information by specifying a constraint on the initial state of a ready system $R$, which enters the measurement interaction as an input.
	
	Such constraint is given by a \newterm{ready state variability}, which is a subspace $\V$ of the ontic state space $\R$ of the ready system.
	It refers to the uncontrollable degrees of freedom of $R$.
	That is, the ready system can occupy any of the ontic states $v \in \V$ before the measurement interaction takes place.
	During a measurement interaction, these can disturb the measured system and spoil the correlation with the measurement outcome.
	On the other hand, the quotient space $\linefaktor{\R}{\V}$ corresponds to properties that are fixed among the possible initial ontic states.
	
	In general, one could consider affine subspaces of $\R$ (as opposed to linear ones) as ready states, but a constant shift in the initial state changes nothing about the subject's learning capabilities (see \cref{rem:zero_affine_shift}) and thus we omit these from our analysis.
	
	As described in \cref{sec:toy_subjects}, we focus on three types of manifest variables and three types of ready states: trivial, Poisson, and complete.
	
	\begin{remark}[Issues with the old interpretation of ready states]\label{rem:ready_state_issues}
		Note that in this paper we interpret ready states differently than in \cite{Fa24}.
		There, we justified the assumption of fixing the initial value of the subject's manifest variable by an argument roughly saying that the subject, by virtue of having direct access to its own manifest variable, can choose to perform the learning procedure conditional on this value.
		If such conditional operations were valid physical processes in nomic toy theory, one could indeed implement a protocol which goes through only for a specific initial value of the manifest variable, thus effectively rendering its initial value to be fixed.
		
		There are two issues with this argument.
		The first, and somewhat less important, is that pre-selecting the initial value of the manifest variable need not succeed. 
		This means that the whole learning protocol only goes through in a fraction of instances, with probability approaching $0$ as the space of possible values of the manifest variable increases in dimension.
		
		The second, and more severe, issue is the following. 
		To be consistent with our intention to model the learning process explicitly by physical processes,\footnotemark{} we ought to also model the conditioning on the initial value of the manifest variable by a physical process.
		\footnotetext{For example, our choice to model subjects as physical systems and their knowledge as a property of their physical state is one consequence.}%
		A careful analysis, however, reveals that conditioning the nature of the physical interaction on the initial value of (a Poisson variable of) one of the systems partaking in the interaction can result in a map that is not symplectic, and thus not physical in nomic toy theory.
		A similar conclusion was found in \cite{Ha23b}.
		
		The `correct' way to model the above protocol would be to assume no ready state (i.e.\ that the initial ontic state of the subject may be arbitrary) and consider all valid physical interactions (which would include any conditioning operation as mentioned above, as long as it gives rise to a valid physical process on arbitrary input).
		As we see in \cref{tab:retrodiction}, the resulting epistemic horizon in the case of retrodictive learning is not changed when the ready state is changed from Poisson to trivial.
		However, it becomes a crucial distinction in the case of repeatable learning (\cref{tab:repeatable}).
	\end{remark}
	
	\begin{remark}[New interpretation of ready states]\label{rem:ready_state_resolution}
		If we do not use the previous justification for the constraint on initial ontic states, the question then arises how should one think of ready states here.
		We leave this question to a large part unanswered.
		
		Indeed, no physical process in nomic toy theory can turn an unconstrained initial state into a constrained one (cf.\ \cref{thm:prediction_Poisson_MV_trivial_RS}).
		Therefore, these resources of initial information have to be justified from outside of nomic toy theory itself.
		
		There are numerous possibilities.
		For example, according to a ``past hypothesis'' \cite{Re57,Al00}, the universe itself starts with a low entropy initial state, which then provides a resource of information to be harvested by subjects in such a universe. 
		A specific version of this possibility, relevant for nomic toy theory (which we think of as a deterministic physical theory), is that there is a fact of the matter about what the precise initial ontic state is.
		In this case the variability in the initial state is merely a way to talk about the fact that one does not know what this state is.
		
		The ready state assumption is then precisely that{\,---\,}an assumption or a working hypothesis, which cannot be proven by deduction, but for which one may obtain supportive evidence by induction.
		A reliable source of low energy states (such as sunlight) can be then used by subjects (such as living organisms) to learn about other systems in their environment.
		Any knowledge obtained like this is not absolute; it is conditional on the ready system $R$ truly being in a state within its assumed variability $\V$.
		As the first row of \cref{tab:repeatable} suggests, giving up on absolute knowledge may just be a necessary price to pay in order to have any reliable knowledge at all.
		
		Nevertheless, the presence of epistemic horizons we derive is independent of how the ready state assumption is justified.
	\end{remark}
	
\subsection{Learning}\label{sec:learning}

As explained in \cref{sec:learning_types}, learning in nomic toy theory is facilitated by a physical interaction.
In our previous work we called such interactions measurements \cite[Definition 2.6]{Fa24}.
Since we consider more general subjects in the present manuscript (\cref{rem:comparison}), we also need to generalise our definition of measurements to accommodate such subjects.

\begin{definition}[Measurement]\label{def:measurement}
	A \newterm{measurement} in nomic toy theory is a Poisson map $M \colon \B \oplus \R \to \B \oplus \S$ together with a manifest variable $E \colon \S \to \E$ and a ready state variability $\V \subseteq \R$.
\end{definition}
That is, $\S$ is a toy subject and $\R$ is a ready system (see \cref{sec:subjects}).

In the following proofs, we often use a specific notation for the matrix blocks of $M$.
For example, with respect to the decomposition of its input via $\B \oplus \R$ and output via $\B \oplus \S$, we split the matrix $M$ into submatrices denoted as
\begin{equation}\label{eq:blocks}
	M = \begin{pmatrix}
			M_{\B\B} & M_{\B \R} \\
			M_{\S\B} & M_{\S \R}
		\end{pmatrix}
\end{equation}
where the first and second indices label the output and input of the submatrix respectively.
For example, $M_{\S\B}$ is a linear map $\B \to \S$, which expresses the effect of the initial ontic state of the object on the final ontic state of the subject.
Furthermore, if we restrict these to a subspace, such as the ready state variability $\V \subseteq \R$, we denote the restricted version of $M_{\B \R}$, for instance, by $M_{\B \V}$.

\begin{remark}[Omitting the affine shifts]\label{rem:zero_affine_shift}
	Note that the interaction $M$ in the definition of a measurement is not the most general physical transformation in nomic toy theory, because it is a linear map with vanishing constant shift.
	
	This is a mere matter of convenience to simplify the terminology.
	Indeed, one could consider measurements given by arbitrary physical transformations in nomic toy theory and obtain equivalent results on epistemic horizons to the ones we derive here.
	The constant shift does not enable any additional learning, as it merely moves the learned information around.
	For example, one effect that including the affine shifts would have on our results is that in \cref{thm:prediction_complete_MV_Poisson_RS} the preparation support could be not just an arbitrary coisotropic subspace $\F$, but an arbitrary affine subspace given by $\F + b$ for a coisotropic subspace $\F$ and an ontic state $b \in \B$.
\end{remark}

As explained in \cref{sec:learning_types}, a given interaction can facilitate different kinds of learning. 
\newterm{Retrodiction} refers to learning about the state of the object prior to the interaction (i.e.\ learning about its past). 
The correlation of interest is one between the \emph{initial} state of the toy object $B$ and the final value of the manifest variable of the toy subject $S$.
\newterm{Prediction} refers to learning about the state of the object after the interaction (i.e.\ learning about its future). 
The relevant correlation is between the \emph{final} state of the toy object $B$ and the final value of the manifest variable of the toy subject $S$.
The last option we consider is learning by \newterm{repeatable measurements}, which in some sense includes retrodiction and prediction simultaneously.
The reason is that for repeatable measurements, the information learned about the past must also be present in the final state of the toy object and is thus also predicted.
We discuss repeatable measurements further in \cref{sec:bitemporal}. 

For both retrodiction and prediction, we can distinguish between properties of the object which can be inferred from the value of subject's manifest variable, and those which cannot. 
We call the former fixed variables, because they are defined by the criterion that, given the value of $E$, their value is unique (i.e.\ fixed).
In the following definition, we use the two maps $\beta \colon \B \oplus \V \to \B$ and $\epsilon \colon \B \oplus \V  \to \E$ defined via \eqref{eq:beta} and \eqref{eq:epsilon} respectively.
Moreover, $\pi_\B \colon \B \oplus \V \to \B$ is the projection map $b \oplus v \mapsto b$, as usual.

\begin{definition}[Fixed variables]\label{def:fix_var}
	Consider a  measurement $M \colon \B \oplus \R \to \B\oplus \S$ with manifest variable $E \colon \S \to \E$ and a ready state $\V$. 
	A variable $Z \colon \B \to \Z$ of $B$ is said to be \newterm{retrodictively fixed} if there exists a function $f \colon \E \to \Z$ satisfying
	\begin{equation}\label{eq:meas_var}
		Z \comp \pi_\B = f \comp \epsilon.
	\end{equation}
	$Z$ is said to be \newterm{predictively fixed} if there exists a function $g \colon \E \to \Z$ satisfying
	\begin{equation}\label{eq:var_fix_prep}
		Z \comp \beta = g \comp \epsilon.
	\end{equation}
\end{definition}
The left-hand side of \cref{eq:meas_var} gives the value of  $Z$ before the measurement took place, while $\epsilon$ on the right-hand side specifies the value of the subject's manifest variable after the measurement. 
This condition thus expresses that the \emph{initial value} of $Z$ can be inferred from the value of the subject's manifest variable.

\Cref{eq:var_fix_prep} specifies a condition on the values of $Z$ in the image of $\beta$, 
namely that the \emph{final value} of $Z$ (the left-hand side) can be inferred from the value of the manifest variable.
Note that no requirement is placed on the values of $Z$ outside of the preparation support $\im(\beta) \subseteq \B$, which consists of those ontic states that cannot occur as final states after the measurement.

Fixed variables correspond to properties of the toy object $B$, which the toy subject can in principle learn by performing the measurement $M$ while having access to the ready system $R$ with variability $\V$.

The next result formalises the idea that if, among two toy subjects, one has either a more informative manifest variable or access to more informative ready states, then it can learn at least as much as the other subject.

\begin{proposition}[Relating learnable properties for distinct subjects]\label{lem:subject_retrodictive_power}
	Consider a measurement interaction $M$ and two subjects, $S$ and $S'$, with identical ontic state space given by $\S$.
	Assume that their manifest variables satisfy 
	\begin{equation}\label{eq:manifest_inclusion}
		\begin{tikzcd}
			\S \arrow[d, "E'"'] \arrow[rd, "E", bend left] &  \\
			\E' \arrow[r, "c"]                              & \E              
		\end{tikzcd}
	\end{equation}
	for some function $c$, and their ready states satisfy $\V \supseteq \V'$.
	Then every variable retrodictively fixed for subject $S$ is also retrodictively fixed for subject $S'$.
	The same holds for predictively fixed variables.
\end{proposition}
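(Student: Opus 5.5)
The argument is a direct factorisation through the definitions, with no symplectic structure needed. Write $\epsilon, \beta$ for the maps \eqref{eq:epsilon} and \eqref{eq:beta} associated with subject $S$, and $\epsilon', \beta'$ for those of $S'$. These are all defined from the same $M$, but on the domains $\B \oplus \V$ and $\B \oplus \V'$ respectively. Since $\V' \subseteq \V$, there is an inclusion $\iota \colon \B \oplus \V' \hookrightarrow \B \oplus \V$. The first step is to record the two identities
\begin{equation*}
	\beta' = \beta \comp \iota, \qquad \pi_\B' = \pi_\B \comp \iota, \qquad E \comp \pi_\S \comp M \comp \iota = c \comp E' \comp \pi_\S \comp M \comp \iota = c \comp \epsilon'.
\end{equation*}
The first two hold because both sides apply the same $M$ (respectively the same projection) to an element of $\B \oplus \V'$. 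The third uses the commuting triangle \eqref{eq:manifest_inclusion}, $E = c \comp E'$. Together they give $\epsilon \comp \iota = c \comp \epsilon'$.

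For retrodiction, suppose $Z$ is retrodictively fixed for $S$, so that $Z \comp \pi_\B = f \comp \epsilon$ on $\B \oplus \V$ for some function $f$. Precomposing with $\iota$ gives
\begin{equation*}
	Z \comp \pi_\B' = Z \comp \pi_\B \comp \iota = f \comp \epsilon \comp \iota = (f \comp c) \comp \epsilon'.
\end{equation*}
Thus $f' \coloneq f \comp c$ witnesses \eqref{eq:meas_var} for $S'$. Prediction is handled identically. From $Z \comp \beta = g \comp \epsilon$, precomposing with $\iota$ yields $Z \comp \beta' = (g \comp c) \comp \epsilon'$, which is \eqref{eq:var_fix_prep} for $S'$.

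There is no real obstacle here; the only point requiring care is bookkeeping. The maps $\beta$ and $\epsilon$ depend on the ready state through their domain, so one must make explicit that the primed versions are restrictions along $\iota$. One must also note that $c$ is an arbitrary function rather than a linear map, which is harmless because \cref{def:fix_var} only asks for functions $f$ and $g$. I would also remark that the statement concerns the same interaction $M$ throughout, so no properties of Poisson maps are used.
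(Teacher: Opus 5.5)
Your proposal is correct and is essentially the paper's proof: precompose the defining equation for $S$ with the inclusion $\B \oplus \V' \hookrightarrow \B \oplus \V$, use $E = c \circ E'$ to obtain $\epsilon \circ \iota = c \circ \epsilon'$, and take $f \circ c$ (respectively $g \circ c$, with $\beta' = \beta \circ \iota$) as the witness for $S'$. As you note, this uses only the definitions and no symplectic structure, and the same is true of the paper's diagram-chasing argument.
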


\begin{proof}
	Consider a variable $Z \colon \B\to \Z$ retrodictively fixed for subject $S$, i.e.\ such that the diagram
	\begin{equation}\label{eq:retr_fixed_cd}
		\begin{tikzcd}
			\B \oplus \V \arrow[r, "\pi_\B"] \arrow[d, "\epsilon"'] & \B \arrow[d, "Z"] \\
			\E \arrow[r, "f"]                              & \Z               
		\end{tikzcd}
	\end{equation}
	commutes.
	Denote the inclusion map $\V' \hookrightarrow \V$ by $\iota$. 
	Precomposing \eqref{eq:retr_fixed_cd} by $\iota$, we obtain the commutative diagram
	\begin{equation}
		\begin{tikzcd}
			\B \oplus \V' \arrow[rd, "\id \oplus \iota"] \arrow[rrd, "\pi_\B", bend left] \arrow[d, "\epsilon'"] &                                                &                   \\
			\E' \arrow[rd, "c"']                                                                       & \B \oplus \V \arrow[r, "\pi_\B"] \arrow[d, "\epsilon"'] & \B \arrow[d, "Z"] \\
			& \E \arrow[r, "f"]                              & \Z               
		\end{tikzcd}
	\end{equation}
	The trapezoid on the left commutes by \eqref{eq:manifest_inclusion}. 
	From the outermost paths we infer 
		\begin{equation}
		Z \comp \pi_\B = f \comp c \comp \epsilon',
	\end{equation}
	namely $Z$ is retrodictively fixed for subject $S'$. 
	
	The proof for predictively fixed variables is analogous upon replacing the projection ${\pi_\B \colon \B \oplus \V \to \B}$ by $\beta$, and replacing the projection $\pi_\B \colon \B \oplus \V' \to \B$ by $\beta'$. 
\end{proof}

In the following two subsections, we characterise the sets of retrodictively and predictively fixed variables.  
While \cref{sec:measured} generalises \cite[Section 2.5]{Fa24}, the material in \cref{sec:predicted} is new since it pertains to prediction.

\subsection{Retrodicted variables}\label{sec:measured}

If we restrict \cref{eq:meas_var} to the kernel of $\epsilon$, we find $Z \comp \pi_\B|_{\ker(\epsilon)} = f(0)$.
In other words, every retrodictively fixed variable is  constant on the image of this kernel under the projection map $\pi_\B \colon \B \oplus \V \to \B$, which is a subspace of $\B$ given by those states $b \in \B$ for which there exists a $v \in \V$ such that $\epsilon(b \oplus v) = 0$.

\begin{definition}[Retrodictively inaccessible subspace]\label{def:retr_inaccessible}
	Given a measurement, the subspace $\pi_\B\bigl( \ker (\epsilon) \bigr)$ is termed the \newterm{retrodictively inaccessible subspace} and denoted by $\I_\ret$.
\end{definition}

The retrodictively inaccessible subspace is made up of those degrees of freedom of the initial state of the toy object which cannot influence the value of the subject's manifest variable.  
It carries the information that cannot be retrodicted based on the value of the manifest variable. 
This intuition is made precise in \cref{prop:fixed_from_measured} below.

\begin{definition}[Retrodicted variable]
	\label{def:measured}
	Given a measurement, we define the \newterm{retrodicted variable} to be the canonical quotient map
	\begin{equation}
		L_\ret \colon \B \to \newfaktor{\B}{\I_\ret}
	\end{equation}
	which maps each ontic state $b \in \B$ to its equivalence class $b + \I_\ret$.
\end{definition}
Note that every retrodicted variable is linear and surjective.
Moreover, it depends on both the manifest variable as well as the ready state. 
Indeed, one of the key goals of this article is to study this dependence (see \cref{tab:retrodiction,tab:repeatable}) in order to understand the epistemic limitations faced by different kinds of subjects.

In \cref{rem:comparison} we mention that the notion of measured variables from our earlier work \cite[Definition 2.9]{Fa24} is in fact the same as that of retrodicted variables.
Let us connect these two now.
In \cite[Definition 2.8]{Fa24}, the free manifest subspace $\F$ is defined as the orthogonal complement of the subspace $\im(E \comp M_{\S \V}) \subseteq \E$, which depends on a choice of an inner product. 
However, as we show next, choosing $\F$ to be any complement yields a variable that is equivalent to the retrodicted variable from \cref{def:measured}.

\begin{proposition}[Retrodicted is measured]\label{prop:concrete_measurable}
	Consider a measurement $M$ and let $\C$ be the subspace $\im(E \comp M_{\S \V})$.
	Choose any subspace $\F \subseteq \E$ satisfying ${\E = \F \oplus \C}$ and $\F \cap \C = \{0\}$.\footnote{If $\E$ is an inner product space, we can choose $\F$ to be the orthogonal complement of $\C$.}
	Then the kernel of\/\footnotemark{}
	\footnotetext{Recall that $M_{\S \B}$ is a notation from \cref{eq:blocks}.}%
	\begin{equation}
		\pi_\F \comp E \comp M_{\S \B} \colon \B \to \F
	\end{equation}
	is the retrodictively inaccessible subspace $\I_\ret$.
\end{proposition}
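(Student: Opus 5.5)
We prove the two inclusions $\ker(\pi_\F \comp E \comp M_{\S\B}) \subseteq \I_\ret$ and $\I_\ret \subseteq \ker(\pi_\F \comp E \comp M_{\S\B})$ directly. Throughout, $\pi_\F \colon \E \to \F$ denotes the projection along $\C$, which is well defined since $\E = \F \oplus \C$. First we write $\epsilon$ in block form using \eqref{eq:blocks}: for $b \oplus v \in \B \oplus \V$,
\begin{equation}
	\epsilon(b \oplus v) = E M_{\S\B}\, b + E M_{\S\V}\, v .
\end{equation}
Hence $b \in \I_\ret = \pi_\B(\ker\epsilon)$ holds if and only if there is a $v \in \V$ with $E M_{\S\B} b = - E M_{\S\V} v$. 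This is equivalent to $E M_{\S\B}\, b \in \im(E \comp M_{\S\V}) = \C$, because $\V$ is a subspace and so $-v$ ranges over $\V$ whenever $v$ does. This gives the intermediate characterisation
\begin{equation}
	\I_\ret = \Set{ b \in \B \given E M_{\S\B}\, b \in \C } = (E \comp M_{\S\B})^{-1}(\C).
\end{equation}

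Second, we use the fact that the kernel of $\pi_\F$ is exactly $\C$. The hypotheses $\E = \F \oplus \C$ and $\F \cap \C = \{0\}$ give a unique decomposition $e = f + c$ of every $e \in \E$, and $\pi_\F(e) = f$ vanishes precisely when $e \in \C$. It follows that $\pi_\F(E M_{\S\B} b) = 0$ if and only if $E M_{\S\B} b \in \C$. Comparing with the characterisation of $\I_\ret$ above yields $\ker(\pi_\F \comp E \comp M_{\S\B}) = \I_\ret$.

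There is no real obstacle in this argument. The only care needed is to use the linear structure of $\V$ when absorbing the sign of $v$, and to note that the result does not depend on which complement $\F$ is chosen, since only $\ker \pi_\F = \C$ enters the argument. This is precisely what makes the conclusion insensitive to the inner product used in \cite[Definition 2.8]{Fa24}.
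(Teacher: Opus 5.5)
Your proof is correct and follows essentially the same argument as the paper. Both reduce $b \in \I_\ret$ to the condition $E M_{\S\B}\, b \in \C$ and then use $\ker \pi_\F = \C$. Your explicit use of the fact that $\V$ is a subspace to absorb the sign of $v$ makes precise a step the paper passes over when it writes $\epsilon(b) = \epsilon(v)$.
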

That is, if we denoted $\pi_\F \comp E \comp M_{\S \B}$ by $M_{\F\B}$, then restricting this map to its image gives a linear map $\overline{M_{\F\B}} \colon \B \to \im(M_{\F\B})$ that satisfies 
\begin{equation}
	L_\ret = \varphi \comp \overline{M_{\F\B}}
\end{equation}
for a vector space isomorphism $\varphi \colon \im(M_{\F\B}) \to \newfaktor{\B}{\I_\ret}$. 
In other words, up to a vector space isomorphism, we can compute the retrodicted variable $L_{\rm ret}$ without finding the retrodictively inaccessible subspace $\I_{\rm ret}$.
All it takes is to compute $M_{\F\B}$ for \emph{any} chosen free manifest subspace $\F$ and restrict it to its image.
\begin{proof}
	To show this, first note that by definition of $\C$, we have
	\begin{equation}
		\pi_\F (s) = 0  \quad \iff \quad  s \in \im(E \comp M_{\S \V})  \quad \iff \quad  \exists v \in \V \; : \; s = \epsilon (v)
	\end{equation}
	for every $s \in \E$.
	Furthermore, by the definition of $\epsilon$, we also have
	\begin{equation}
		\begin{split}
			b \in \ker(\pi_\F \comp E \comp M_{\S \B})  \quad &\iff \quad  \pi_\F \comp \epsilon (b) = 0 \\
				&\iff \quad  \exists v \in \V \; : \; \epsilon(b) = \epsilon(v) \\
				& \iff \quad  b \in \pi_{\B}(\ker \epsilon),
		\end{split}
	\end{equation}
	which completes the proof.
\end{proof}

We now generalise \cite[Proposition 2.10]{Fa24}, which identifies the retrodicted variable as the most informative retrodictively fixed variable.
It thus characterises the set of retrodictively fixed variables for a given measurement as the set of all functions that can be obtained from $L_\ret$ by post-processings.

\begin{proposition}[Characterisation of retrodictively fixed variables]\label{prop:fixed_from_measured}
	A variable $Z$ is retrodictively fixed if and only if there is a function ${f \colon \linefaktor{\B}{\I_\ret} \to \Z}$ such that we have
	\begin{equation}\label{eq:fixed_from_measured_1}
		Z = f \comp L_\ret .
	\end{equation}
\end{proposition}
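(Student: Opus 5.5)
The plan is to prove the two directions separately, using only the definitions of $\epsilon$, $\I_\ret = \pi_\B(\ker\epsilon)$ and $L_\ret$, together with the linearity of $\epsilon$ and $\pi_\B$.

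\emph{The ``if'' direction.} Suppose $Z = f \comp L_\ret$. I need a function $g \colon \E \to \Z$ with $Z \comp \pi_\B = g \comp \epsilon$. I will first build a map $h \colon \im(\epsilon) \to \linefaktor{\B}{\I_\ret}$ with $L_\ret \comp \pi_\B = h \comp \epsilon$. On $e = \epsilon(x)$, set $h(e) \coloneq L_\ret(\pi_\B(x))$. To check this is well defined, suppose $\epsilon(x) = \epsilon(x')$. Then $x - x' \in \ker\epsilon$, so $\pi_\B(x) - \pi_\B(x') \in \I_\ret$, and the two values agree under $L_\ret$. Next I extend $h$ arbitrarily from $\im(\epsilon)$ to all of $\E$; for instance, fix a value outside the image, or choose a linear complement and send it to $0$. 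Then $g \coloneq f \comp h$ satisfies \eqref{eq:meas_var}.

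\emph{The ``only if'' direction.} Suppose $Z \comp \pi_\B = f \comp \epsilon$. Define $\tilde f \colon \linefaktor{\B}{\I_\ret} \to \Z$ by $\tilde f(b + \I_\ret) \coloneq Z(b)$. The key step, and the main obstacle, is showing this is well defined, i.e.\ that $Z(b) = Z(b + i)$ for every $i \in \I_\ret$. Note that $Z$ need not be linear, so it is not enough to know that $Z$ is constant on $\I_\ret$ itself; I need invariance under translation by $\I_\ret$.

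I will get this invariance from the linearity of $\epsilon$. Given $i \in \I_\ret$, pick $v \in \V$ with $\epsilon(i \oplus v) = 0$. Then
\[
Z(b+i) = Z\bigl(\pi_\B((b \oplus 0) + (i \oplus v))\bigr) = f\bigl(\epsilon(b \oplus 0) + \epsilon(i \oplus v)\bigr) = f\bigl(\epsilon(b\oplus 0)\bigr) = Z(b).
\]
Here I use that $0 \in \V$, since $\V$ is a subspace. Surjectivity of $L_\ret$ then makes $\tilde f$ a function on the whole quotient, and $Z = \tilde f \comp L_\ret$ holds by construction.
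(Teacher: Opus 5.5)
Your proof is correct and follows essentially the same route as the paper: in the ``if'' direction you show that $L_\ret \comp \pi_\B$ is constant on the fibres of $\epsilon$ and so factors through $\epsilon$, and in the ``only if'' direction you prove that $Z$ is invariant under translation by $\I_\ret$, using linearity of $\epsilon$ and a kernel element $i \oplus v$, just as the paper does with $b_1 \oplus v - x$. Your explicit extension of $h$ off $\im(\epsilon)$, and your remark that translation invariance rather than mere constancy on $\I_\ret$ is what's needed, clarify points the paper leaves implicit.
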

\begin{proof}
	Let us first show that any variable of the form $f \comp L_\ret$ is retrodictively fixed.
It suffices to consider $f$ to be the identity, so that we have to show $L_\ret \comp \pi_\B = h \comp \epsilon$ for some function $h \colon \E \to \linefaktor{\B}{\I_\ret}$.
	Let $v_1, v_2 \in \B \oplus \V$ be two arbitrary initial states that yield identical observations, namely $\epsilon(v_1) = \epsilon(v_2)$. 
	Since $L_\ret$ is linear, we have 
	\begin{equation}
		L_\ret \bigl( \pi_\B(v_1)\bigr) - L_\ret \bigl( \pi_\B(v_2)\bigr) = L_\ret \bigl( \pi_\B(v_1 - v_2)\bigr) = 0,
	\end{equation}
	as $\pi_\B(v_1 - v_2)$ is an element of $\I_\ret$ and so equivalent to $0$ under the inaccessibility equivalence relation.
	This means that $L_\ret \comp \pi_\B$ is constant on the fibers of $\epsilon$. 
	Thus, there is a function $h \colon \E \to \linefaktor{\B}{\I_\ret}$ satisfying the required $L_\ret \comp \pi_\B = h \comp \epsilon$. 
	
	Conversely, assume that a variable $Z \colon \B \to \Z$ satisfies $Z \comp \pi_\B = h \comp \epsilon$. 
	To show that $Z = f \comp L_\ret$ for some function $f$, we must establish that $Z$ is constant on the inaccessibility equivalence classes.
	To this end,  consider two ontic states $b_1, b_2 \in \B$ satisfying $b_1 - b_2 \in \I_\ret$.
	This means that there is a $x \in \ker(\epsilon)$ such that $b_1 - b_2 = \pi_\B(x)$. 
	The fact that $Z$ is retrodictively fixed implies, for an arbitrary $v \in \V$, that 
	\begin{align}
		Z (b_2) &= Z \comp \pi_\B(b_1 \oplus v - x)  \\
		&= h \comp \epsilon(b_1 \oplus v - x) = h \comp \epsilon(b_1 \oplus v) \\
		&= Z \comp \pi_\B (b_1 \oplus v) = Z(b_1) .
	\end{align}
	Together, we have shown that $b_1 - b_2 \in \I_\ret$ implies $Z(b_1) = Z(b_2)$.
By the universal property of quotients, there exists a well-defined function $f \colon \linefaktor{\B}{\I_\ret} \to \Z$ given by $f([b]_\ret) = Z(b)$ such that $Z = f \comp L_\ret$ holds.
\end{proof}

\subsection{Predicted variables}\label{sec:predicted}

Let us now turn to predictively fixed variables and their characterisation.
Notice that \cref{eq:pred_var_fix} provides a non-trivial constraint on the variable $Z$ only on a subset of its domain, namely the image of $\beta$. 
While the initial ontic state of $B$ is unconstrained, the set of possible states after the measurement is $\im(\beta)$. 
This fact is independent of the final value of subject's manifest variable, and is accessible to anyone who knows 
\begin{itemize}
	\item that the initial state of the ready system $R$ is within $\V$, and
	\item the details of the measurement interaction. 
\end{itemize}
This is why we refer to it as \emph{non-empirical} knowledge (see \cref{sec:learning_types}), as opposed to the \emph{empirical} knowledge the subject gains by having access to its own manifest variable.

\begin{definition}[Preparation support]\label{def:support}
	Given a measurement, the image of $\beta$ defined in \eqref{eq:beta} is called the \newterm{preparation support}.
\end{definition}

Restricting \cref{eq:var_fix_prep} to the kernel of $\epsilon$, we find $Z \comp \beta|_{\ker(\epsilon)} = g(0)$.
That is, every predictively fixed variable is constant on the image of the kernel of $\epsilon$ under $\beta$. 
Note that this image is a subspace of the preparation support.

\begin{definition}[Predictively inaccessible subspace]\label{def:pre_inaccessible}
	For a given measurement, the subspace $\beta(\ker \epsilon) \subseteq \B$ is termed the \newterm{predictively inaccessible subspace} and denoted by $\I_{\rm pre}$.
\end{definition}

The predictively inaccessible subspace is made up of those final ontic states of the toy object which originate from joint initial states leaving no trace within subject's manifest variable.  
We can thus interpret it as the carrying information that cannot be predicted based on the value of the manifest variable alone. 
This intuition is made precise in \cref{prop:fix_predicted}. 
	
\begin{definition}[Predicted variable]\label{def:predicted}
	Given a measurement $M$, we define the variable \newterm{predicted by} $M$ to be the canonical quotient map
	\begin{equation}\label{eq:predicted}
		L_\pre \colon \im(\beta) \to \newfaktor{\im(\beta)}{\I_\pre},
	\end{equation}
	which maps each ontic state $b \in \B$ to its equivalence class $b + \I_\pre$.
\end{definition}
\begin{remark}\label{rem:predicted_lin}
	We can think of the predicted variable as a pair of a subspace of $\B$ (the preparation support) and a (surjective) linear variable $L_\pre$ defined on the preparation support.
	Given a choice of an inner product on $\im(\beta)$, we can compute $L_\pre$ as the orthogonal projection $\im(\beta) \to (\I_\pre)^\perp$ up to a vector space isomorphism $(\I_\pre)^\perp \cong \linefaktor{\im(\beta)}{\I_\pre}$.
\end{remark}

Just like the concept of retrodicted variables, predicted variables also depend on the manifest variable $E$ and ready state variability $\V$. 
The scope of predicted variables, depending on a choice of a class of toy subjects, is summarised in \cref{tab:prediction}.
Moreover, as before, the predicted variable characterises the set of all predictively fixed variables for a given measurement.
Namely, the latter are given by arbitrary post-processings of a variable $L$ given by $L_\pre$ on the preparation support and the identity map outside of the preparation support. 

\begin{proposition}[Characterisation of predictively fixed variables]\label{prop:fix_predicted}
	Consider a measurement $M$ and the variable $L$ given by
	\begin{equation}\label{eq:predicted+}
		L(b) \coloneq 
			\begin{cases}
				L_\pre(b)  &  \text{if } b \in \im(\beta) \\
				b  &  \text{if } b \in \B \setminus \im(\beta),
			\end{cases}
	\end{equation}
	whose codomain is denoted by $\T$.  
	A variable $Z \colon \B \rightarrow \Z$ is predictively fixed by $M$ if and only if we have
	\begin{equation}
		Z = f \comp L,
	\end{equation}
	where $f \colon \T \to \Z$ is an arbitrary function.
\end{proposition}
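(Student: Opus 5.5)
The plan is to mirror the proof of \cref{prop:fixed_from_measured}, working throughout with the defining condition $Z \comp \beta = g \comp \epsilon$ from \cref{eq:var_fix_prep}. Two features of the statement guide the argument. First, this condition only constrains $Z$ on the preparation support $\im(\beta)$, so off the support $L$ should carry the full ontic state. Second, on $\im(\beta)$ the relevant equivalence is ``differing by an element of $\I_\pre = \beta(\ker\epsilon)$''. I will therefore prove the two directions separately.

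For the ``if'' direction, it suffices to take $f$ to be the identity on $\T$. Post-composing \cref{eq:var_fix_prep} with any $f$ preserves the property of being predictively fixed, since $f \comp Z \comp \beta = (f \comp g) \comp \epsilon$. So I need a function $h \colon \E \to \T$ with $L \comp \beta = h \comp \epsilon$. Since $\beta$ lands in $\im(\beta)$, we have $L \comp \beta = L_\pre \comp \beta$. Take $x_1, x_2 \in \B \oplus \V$ with $\epsilon(x_1) = \epsilon(x_2)$. Then $x_1 - x_2 \in \ker\epsilon$, so by linearity of $\beta$ the difference $\beta(x_1) - \beta(x_2)$ lies in $\beta(\ker\epsilon) = \I_\pre$. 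Hence $L_\pre(\beta(x_1)) = L_\pre(\beta(x_2))$, because $L_\pre$ is the linear quotient map. Thus $L \comp \beta$ is constant on the fibres of $\epsilon$, and $h$ exists; on $\E \setminus \im(\epsilon)$ it may be defined arbitrarily. In fact $\epsilon$ is surjective whenever $E$ is and $M$ is Poisson, but this is not needed.

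For the ``only if'' direction, assume $Z \comp \beta = g \comp \epsilon$. I must show that $Z$ is constant on the fibres of $L$. On $\B \setminus \im(\beta)$ these fibres are singletons, because $L$ is the identity there. The small point to check is that the codomain $\T$ is a disjoint union of $\linefaktor{\im(\beta)}{\I_\pre}$ and $\B \setminus \im(\beta)$, so that no value of $L$ taken off the support coincides with one taken on it. I will define $\T$ as this tagged disjoint union, so the point holds by construction. It remains to consider $b_1, b_2 \in \im(\beta)$ with $b_1 - b_2 \in \I_\pre$. Write $b_1 = \beta(y)$ for some $y \in \B \oplus \V$, and $b_1 - b_2 = \beta(x)$ with $x \in \ker\epsilon$. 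Then $b_2 = \beta(y - x)$, and
\[
Z(b_2) = g\bigl(\epsilon(y - x)\bigr) = g\bigl(\epsilon(y)\bigr) = Z\bigl(\beta(y)\bigr) = Z(b_1).
\]
By the universal property of the quotient, together with the identity on the complement, $f\bigl(L(b)\bigr) \coloneq Z(b)$ is a well-defined function $f \colon \T \to \Z$ with $Z = f \comp L$.

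The only step I expect to need care is bookkeeping rather than mathematics: making $L$ a genuine function into a single set $\T$ via the disjoint union, and justifying that elements of $\im(\beta)$ admit preimages under $\beta$ whose differences can be adjusted by elements of $\ker\epsilon$. The latter is exactly what the computation above uses, and it follows from linearity of $\beta$.
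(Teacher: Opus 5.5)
Your proof is correct and follows the paper's own argument essentially step for step: for the ``if'' direction you show that $L \comp \beta$ is constant on the fibres of $\epsilon$ because differences land in $\I_\pre = \beta(\ker\epsilon)$, and for the ``only if'' direction you show that $Z$ is constant on the fibres of $L$ using the same two-case split and the same computation $Z(b_2) = g(\epsilon(y-x)) = g(\epsilon(y)) = Z(b_1)$. You should drop your parenthetical remark that $\epsilon$ is surjective whenever $E$ is surjective and $M$ is Poisson, since it is false in general (take $\S = \R$, $M = \id$, $\V = \{0\}$ and $E = \id_\S$, which gives $\epsilon = 0$); as you say, the argument never uses it, so nothing else is affected.
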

Note that the codomain $\T$ of $L$ can be chosen to be the (disjoint) union of $\linefaktor{\im(\beta)}{\I_\pre}$ (i.e.\ the codomain of $L_\pre$) and $\B \setminus \im(\beta)$ (i.e.\ the set of states outside of the preparation support).
\begin{proof}
	We first show that any variable of the form $f \comp L$ satisfies \cref{eq:var_fix_prep}.
	It suffices to show that $L$ satisfies it.
	Let $v_1, v_2 \in \B \oplus \V$ be two arbitrary valid initial states that they yield identical observations in the sense that we have $\epsilon(v_1) = \epsilon(v_2)$. 
	Since $L$ is linear when restricted to the image of $\beta$, we have 
	\begin{equation}
		L \bigl( \beta(v_1)\bigr) - L \bigl( \beta(v_2)\bigr) = L \bigl( \beta(v_1 - v_2)\bigr) = 0
	\end{equation}
	by the definition of the predicted variable, since $\beta(v_1 - v_2)$ is an element of $\I_\pre$.
	This means that the composite map $L \comp \beta$ is constant on the fibers of $\epsilon$. 
	Thus, there is a function $h \colon \E \to \T$ satisfying $L \comp \beta = h \comp \epsilon$. 
	Setting $g \coloneq f \comp h$ establishes 
	\begin{equation}
		Z \comp \beta = f \comp L \comp \beta = f \comp h \comp \epsilon = g \comp \epsilon,
	\end{equation}
	i.e.\ that $Z$ is predictively fixed.
	
	Conversely, assume that a variable $Z \colon \B \to \Z$ satisfies $Z \comp \beta = g \comp \epsilon$. 
	To show that $Z = f \comp L$ holds for some function $f$, we must establish that $Z$ is constant on the fibers of $L$.
	
	To this end, we consider two ontic states $b_1, b_2 \in \B$ satisfying $L(b_1) = L(b_2)$ and distinguish two cases which correspond to the two regimes in \eqref{eq:predicted+}.

	If at least one of them is outside the image of $\beta$, this means they are equal and thus we trivially have $Z(b_1) = Z(b_2)$.
	
	Otherwise, we can assume they are both in the image of $\beta$ and so we have $s_i = \beta(v_i)$ for some $v_1, v_2 \in \B \oplus \V$.
	Furthermore, them being in the same fiber means $b_1 - b_2 \in \I_\pre$, which is equivalent to $b_1 - b_2 = \beta(v_0)$ for some $v_0 \in \ker(\epsilon)$. 
	The fact that $Z$ is predictively fixed  implies
	\begin{align}
		Z (b_2) &= Z \comp \beta(v_1 - v_0)  \\
			&= g \comp \epsilon(v_1 - v_0) = g \comp \epsilon(v_1) \\
			&= Z \comp \beta (v_1) = Z(b_1) .
	\end{align}
	Together, we have shown that $L(b_1) = L(b_2)$ implies $Z(b_1) = Z(b_2)$.
	Therefore, there exists a well-defined function $f \colon \T \to \Z$, given by $f(L(b)) \coloneq Z(b)$, such that $Z = f \comp L$. 
\end{proof}

\subsection{Consequences of Repeatability}\label{sec:bitemporal}

Having defined retrodicted variables in \cref{sec:measured}, we can now specify in detail what we mean by a measurement being repeatable, thus elaborating on \cref{sec:learning_types}.
Intuitively, the idea is that the retrodicted variable is not changed if the same measurement has been applied just before.
This requirement can be expressed as $L_{\rm ret} \comp \pi_\B = L_{\rm ret} \comp \pi_\B \comp M$, restricted to the set of valid initial states $\B \oplus \V$.
This leads to the following definition.

\begin{definition}[Repeatable measurement]\label{def:simple_repeatable_meas}
	A measurement is \newterm{repeatable} if and only if it satisfies
	\begin{equation}\label{eq:repeatable_def}
		L_{\rm ret}(b) = L_{\rm ret}\bigl( \beta(b \oplus v) \bigr)
	\end{equation}
	for all initial ontic states of the object $b \in \B$ and all $v \in \V$.
\end{definition}

Since the retrodicted variable $L_{\rm ret}$ is linear by definition, an equivalent way to express repeatability is via two matrix equations:
\begin{align}
	\label{eq:repeatability}	L_{\rm ret} \comp M_{\B\B} &= L_{\rm ret} \\
	\label{eq:no_disturbance}	L_{\rm ret} \comp M_{\B\V} &= 0.
\end{align}
Here we use the notation for blocks of the interaction matrix as in \cref{eq:blocks} and the fact that $\beta$ is given by
\begin{equation}
	\beta(b \oplus v) = M_{\B\B} b + M_{\B\V} v .
\end{equation}
To interpret these equations, recall that $L_{\rm ret}$ encodes all the information that the subject can learn about the initial state of the object through this interaction.
Moreover, the ready state variability $\V$ is the subspace of $\R$ that contains all valid initial ontic states of the ready system $R$. 
The specific initial state of $\V$ is not known to the subject, even if the fact that the initial state of $R$ lies in $\V$ is known.

\Cref{eq:no_disturbance} ensures that the uncontrollable variability in $\V$ does not influence the information learned by the second measurement.\footnote{In particular, it implies that $L_{\rm ret} \comp M_{\B\B}$ is a variable retrodictively fixed (\cref{def:fix_var}) by the second measurement.}
Thanks to this, we can interpret \cref{eq:repeatability} as stating that the value of $L_{\rm ret}$ for the pre-measurement state of $B$ (the right-hand side) is the same as its value for the post-measurement state (the left-hand side).
In other words, the two identical measurements applied in succession necessarily give rise to the same retrodictive information learned by their respective subjects.

We now show that repeatable measurements allow the toy subject to simultaneously learn about the past and the future, in the sense that $\I_\ret \supseteq \I_\pre$ holds.
In particular, this means that the subject can predict everything about the object which it can also retrodict (cf.\ \cref{cor:retrodiction_implies_prediction}).
However, since there is not an equality in general, there may be additional information that can be predicted on top of what can be retrodicted.

\begin{theorem}[Repeatability entails bitemporal learning]
	\label{thm:prediction_repeatable_kernels}
	For every repeatable measurement we have\footnotemark{}
	\footnotetext{Here, the notation in the right-hand side refers to the span of the union of the two subspaces.}%
	\begin{equation}\label{eq:prediction_repeatable_1}
		\I_\ret \cap \im(\beta) = \I_\pre + \beta \bigl( \{0\} \oplus \V \bigr).
	\end{equation}
\end{theorem}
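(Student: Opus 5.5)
The plan is to prove the two inclusions of \eqref{eq:prediction_repeatable_1} directly. The only tool is the reformulation of repeatability as the pair of matrix equations \eqref{eq:repeatability} and \eqref{eq:no_disturbance}. Since $L_\ret$ is the quotient map by $\I_\ret$, these say that $M_{\B\B} b - b \in \I_\ret$ for every $b \in \B$, and that $M_{\B\V} v \in \I_\ret$ for every $v \in \V$. Throughout I will use $\beta(b \oplus v) = M_{\B\B} b + M_{\B\V} v$ and the definitions $\I_\ret = \pi_\B(\ker \epsilon)$ and $\I_\pre = \beta(\ker\epsilon)$.

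\emph{The inclusion $\supseteq$.} Both summands on the right-hand side are subspaces of $\im(\beta)$ by construction, so it suffices to show that each is contained in $\I_\ret$; the intersection $\I_\ret \cap \im(\beta)$ is a subspace, so it then contains their sum. For $\beta(\{0\} \oplus \V) = M_{\B\V}(\V)$, this is exactly \eqref{eq:no_disturbance}. For $\I_\pre$, take $x = b \oplus v \in \ker\epsilon$. By definition of $\I_\ret$ we have $b = \pi_\B(x) \in \I_\ret$. By \eqref{eq:repeatability}, $M_{\B\B} b \equiv b \equiv 0$ modulo $\I_\ret$, and by \eqref{eq:no_disturbance}, $M_{\B\V} v \in \I_\ret$. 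Hence $\beta(x) \in \I_\ret$.

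\emph{The inclusion $\subseteq$.} Take $y \in \I_\ret \cap \im(\beta)$ and write $y = \beta(b \oplus v)$ for some $b \in \B$ and $v \in \V$. The key step is to transport the condition $y \in \I_\ret$ back to the initial state of the object. Repeatability, in the form \eqref{eq:repeatable_def}, gives $L_\ret(b) = L_\ret(\beta(b\oplus v)) = L_\ret(y) = 0$, so $b \in \I_\ret$. By the definition $\I_\ret = \pi_\B(\ker\epsilon)$, there is then some $v' \in \V$ with $b \oplus v' \in \ker \epsilon$. Linearity of $\beta$ yields
\begin{equation}
	y = \beta(b \oplus v') + \beta\bigl(0 \oplus (v - v')\bigr),
\end{equation}
where the first term lies in $\I_\pre$ and the second in $\beta(\{0\}\oplus\V)$. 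This proves the inclusion.

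I expect no genuine obstacle, since the argument is pure linear algebra once repeatability is unpacked. The one step needing care is the transport step in the $\subseteq$ direction: it uses both halves of repeatability at once, in the combined form \eqref{eq:repeatable_def}, and it is the only place where the freedom to replace $v$ by a different $v'$ matters. That freedom is exactly what forces the extra summand $\beta(\{0\}\oplus\V)$ in the statement.
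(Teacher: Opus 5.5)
Your proposal is correct and follows essentially the same route as the paper's proof. For $\supseteq$, it uses the no-disturbance condition together with the fact that repeatability sends $\I_\pre$ into $\I_\ret$. For $\subseteq$, it transports $y \in \I_\ret$ back to the initial object state via \eqref{eq:repeatable_def}, picks $v'$ with $b \oplus v' \in \ker\epsilon$, and splits $y$ into a part in $\I_\pre$ and a part in $\beta(\{0\}\oplus\V)$.
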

\begin{proof}	
	By repeatability (which can be expressed as $L_\ret \comp \beta = L_\ret \comp \pi_\B$), we have
	\begin{equation}
		L_\ret\bigl(\I_\pre\bigr) = L_\ret \bigl(\beta(\ker \epsilon) \bigr) = L_\ret \bigl( \pi_\B(\ker \epsilon) \bigr) = L_\ret\bigl(\I_\ret\bigr) = \{0\}
	\end{equation}
	so that $\ker(L_\ret) \supseteq \I_\pre$ follows.
	Using $\ker(L_\ret) = \I_\ret$ (\cref{def:measured}), this implies
	\begin{equation}\label{eq:prediction_repeatable_2}
		\I_\ret \cap \im(\beta) \supseteq \I_\pre,
	\end{equation}
	as $\im(\beta) \supseteq \I_\pre$ holds by definition.
	
	To establish the $\supseteq$ inclusion in \cref{eq:prediction_repeatable_1}, it thus suffices to further show 
	\begin{equation}
		\beta(0 \oplus v) \in \I_\ret
	\end{equation}
	for every $v \in \V$.
	By $\beta(0 \oplus v) = M_{\B\V} v$ and $\I_\ret = \ker(L_\ret)$, this is a direct consequence of the ``no disturbance'' condition, i.e.\ \cref{eq:no_disturbance}.
	
	For the converse set inclusion, 
	assume $b \in \I_{\ret}$ and $b \in \im(\beta)$.
	The latter means that there exists an initial state $b' \oplus v' \in \B \oplus \V$ such that $b = \beta(b' \oplus v')$ holds.
	Applying the repeatability condition \eqref{eq:repeatable_def} gives
	\begin{equation}
		L_\ret \bigl(b' - \beta(b' \oplus v') \bigr) = 0,
	\end{equation}
	i.e.\ we have
	\begin{equation}
		b' - \beta(b' \oplus v') \in \ker(L_\ret) = \I_{\ret},
	\end{equation}
	which means $b' - b \in \I_{\ret}$.
	Since $b$ is an element of $\I_{\ret}$ by assumption, and $\I_{\ret}$ is a linear subspace, it follows that $b' \in \I_{\ret} = \pi_\B(\ker \epsilon)$ also holds.
	There must thus exist a state $v_0 \in \V$ satisfying $b' \oplus v_0 \in \ker \epsilon$.
	By the definition of the predictively inaccessible subspace, mapping this kernel element through $\beta$ gives 
	\begin{equation}\label{eq:prediction_repeatable_3}
		\beta(b' \oplus v_0) \in \I_{\pre}.
	\end{equation}
	Now, we can write
	\begin{equation}
		b = \beta(b' \oplus v_0) + \beta(0 \oplus v)
	\end{equation}
	where $ v \coloneq v' - v_0 \in \V$, which establishes
	\begin{equation}
		\I_\ret \cap \im(\beta) \subseteq \I_\pre + \beta \bigl( \{0\} \oplus \V \bigr)
	\end{equation}
	and completes the proof.
\end{proof}

As \cref{eq:prediction_repeatable_1} shows, there are two sources of discrepancy between the predicted and retrodicted information. 

One is the preparation support $\im(\beta)$, which just means that some of the properties learned about the initial state of $B$ can be trivially true for its post-measurement state by virtue of the fact that all valid post-measurement states have these properties. 
In other words, empirical information about the past can turn into non-empirical information about the future.

The other is the influence of the ready state variability on the object's post-measurement state, i.e.\ $\beta \bigl( \{0\} \oplus \V \bigr)$.
We can interpret this as saying that in principle there can be additional information predicted for a repeatable measurement, if one can create a correlation between the final state of $B$ and the subject's manifest variable by ``broadcasting'' the ready state variability to both. 
Such information is not required to be retrodicted, because the correlation is generated by the measurement interaction, and repeatability does not imply that a second subject applying the same measurement procedure will be able to retrodict this information.

\begin{corollary}\label{thm:outcome_relation}
	For a repeatable measurement, the inclusion map $\iota \colon \im(\beta) \hookrightarrow \B$ induces a well-defined linear map
	\begin{equation}\label{eq:outcome_relation}
		\begin{split}
			[\iota] \colon \linefaktor{\im(\beta)}{\I_\pre} &\to \linefaktor{\B}{\I_\ret} \\
			[b]_\pre &\to [b]_\ret,
		\end{split}
	\end{equation}
	where $[b]_\pre$ and $[b]_\ret$ denote the equivalence classes of an $b \in \im(\beta)$ in the respective quotient spaces.
\end{corollary}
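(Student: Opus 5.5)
The plan is to use the universal property of quotient maps. The inclusion $\iota$ is linear, so the composite $L_\ret \comp \iota \colon \im(\beta) \to \linefaktor{\B}{\I_\ret}$, which sends $b \mapsto [b]_\ret$, is a linear map. By the universal property, this composite descends to a well-defined linear map on $\linefaktor{\im(\beta)}{\I_\pre}$ exactly when its kernel contains $\I_\pre$. So the whole argument reduces to verifying the inclusion
\begin{equation*}
	\I_\pre \subseteq \ker(L_\ret \comp \iota) = \I_\ret \cap \im(\beta).
\end{equation*}

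This inclusion follows directly from \cref{thm:prediction_repeatable_kernels}. That theorem gives $\I_\ret \cap \im(\beta) = \I_\pre + \beta\bigl(\{0\} \oplus \V\bigr)$, and the right-hand side contains $\I_\pre$. Alternatively, one can cite the intermediate inclusion \eqref{eq:prediction_repeatable_2} established in its proof, which uses repeatability in the form $L_\ret \comp \beta = L_\ret \comp \pi_\B$.

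Spelled out, well-definedness means the following. If $b_1, b_2 \in \im(\beta)$ satisfy $b_1 - b_2 \in \I_\pre$, then $b_1 - b_2 \in \I_\ret$, and hence $[b_1]_\ret = [b_2]_\ret$. Linearity of $[\iota]$ is inherited from the linearity of $L_\ret$ and $\iota$, since $[\iota]\comp L_\pre = L_\ret \comp \iota$ and $L_\pre$ is a surjective linear map.

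There is no real obstacle here. The only substantive input is the containment $\I_\pre \subseteq \I_\ret$, which is exactly where repeatability is needed; this is already supplied by \cref{thm:prediction_repeatable_kernels}.
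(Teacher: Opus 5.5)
Your proposal is correct and takes essentially the same route as the paper. Well-definedness of $[\iota]$ reduces to the implication $b'-b\in\I_\pre \Rightarrow b'-b\in\I_\ret$ for $b,b'\in\im(\beta)$, which both you and the paper read off from \cref{thm:prediction_repeatable_kernels}; your explicit check of linearity via $[\iota]\comp L_\pre = L_\ret\comp\iota$ spells out a step the paper leaves implicit.
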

\begin{proof}
	The fact that $[\iota]$ is well-defined is equivalent to 
	\begin{equation}
		\forall \, b , b' \in \im(\beta) \quad : \quad b' - b \in \I_\pre  \implies  b' - b \in \I_\ret,
	\end{equation}
	which is a direct consequence of \cref{thm:prediction_repeatable_kernels}.
\end{proof}

If the inclusion
\begin{equation}\label{eq:lower_bound}
	\beta \bigl( \{0\} \oplus \V \bigr) \subseteq \I_\pre
\end{equation}
also holds, then the map $[\iota]$ is in fact injective, which also implies ${\linefaktor{\im(\beta)}{\I_\pre} \cong \linefaktor{\im(\beta)}{\I_\ret}}$.
While \eqref{eq:lower_bound} holds for many measurements of interest (it only fails for $M^{\rm pred}$ among the four options in \cref{tab:basic-interactions}), we do not currently know of a useful characterisation of those that do not satisfy it.

\begin{corollary}
	\label{thm:prediction_repeatable}
	For every repeatable measurement, we have
	\begin{equation}\label{eq:prediction_repeatable}
		L_\ret(b) = [\iota] \comp L_{\pre}(b) \qquad \forall \, b \in \im(\beta),
	\end{equation}
	where $[\iota]$ is the map from \eqref{eq:outcome_relation}.
\end{corollary}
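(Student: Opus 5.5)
The statement concerns an identity of maps on the preparation support $\im(\beta)$. By \cref{thm:outcome_relation}, the map $[\iota]$ is well defined, so it only remains to check that both sides agree pointwise. I would simply unfold the definitions.

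For $b \in \im(\beta)$, the predicted variable gives $L_\pre(b) = [b]_\pre$, the class of $b$ in $\linefaktor{\im(\beta)}{\I_\pre}$. Applying $[\iota]$ then yields $[\iota]([b]_\pre) = [\iota(b)]_\ret = [b]_\ret$, by the defining formula \eqref{eq:outcome_relation}. On the other hand, $L_\ret(b) = [b]_\ret$ by \cref{def:measured}, so the two sides coincide for every $b \in \im(\beta)$.

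The only substantive ingredient is the well-definedness of $[\iota]$. Without it, the expression $[\iota] \comp L_\pre$ would depend on the chosen representative of $[b]_\pre$. Well-definedness was established in \cref{thm:outcome_relation} from the inclusion $\I_\pre \subseteq \I_\ret \cap \im(\beta)$ of \cref{thm:prediction_repeatable_kernels}, and this is where repeatability enters.

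I do not expect any obstacle. The point to be careful about is that $L_\pre$ is only a partial variable, defined on $\im(\beta)$, which is why the equation \eqref{eq:prediction_repeatable} is asserted only for $b \in \im(\beta)$ rather than on all of $\B$.
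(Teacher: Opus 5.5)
Your proposal is correct and matches the paper's proof: both unfold $L_\ret(b) = [b]_\ret = [\iota]([b]_\pre) = [\iota](L_\pre(b))$ for $b \in \im(\beta)$. The well-definedness of $[\iota]$ comes from \cref{thm:outcome_relation}, i.e.\ from the inclusion $\I_\pre \subseteq \I_\ret \cap \im(\beta)$ of \cref{thm:prediction_repeatable_kernels}, exactly as you say.
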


\begin{proof}
	Using \cref{def:predicted,def:measured} as well as the notation from \eqref{eq:outcome_relation}, we have
	\begin{equation}
		L_\ret(b) = [b]_\ret = [\iota] \bigl( [b]_\pre \bigr) = [\iota] \bigl( L_\pre(b) \bigr)
	\end{equation}
	for any $b \in \im(\beta)$, which is precisely \cref{eq:prediction_repeatable}.
\end{proof}

\begin{corollary}\label{cor:retrodiction_implies_prediction}
	For every repeatable measurement, the retrodicted variable is also predictively fixed.
\end{corollary}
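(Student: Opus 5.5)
We need to show that for a repeatable measurement, $L_\ret$ is predictively fixed, i.e.\ there is a function $g \colon \E \to \linefaktor{\B}{\I_\ret}$ with $L_\ret \comp \beta = g \comp \epsilon$. The plan is to combine the repeatability condition with the fact that $L_\ret$ is itself retrodictively fixed. Two routes suggest themselves; the direct one is shortest.

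\emph{Direct route.} By \cref{def:simple_repeatable_meas}, repeatability says $L_\ret \comp \beta = L_\ret \comp \pi_\B$ as maps $\B \oplus \V \to \linefaktor{\B}{\I_\ret}$. By \cref{prop:fixed_from_measured} with $f$ the identity, $L_\ret$ is retrodictively fixed, so there is $h \colon \E \to \linefaktor{\B}{\I_\ret}$ with $L_\ret \comp \pi_\B = h \comp \epsilon$. Chaining the two gives
\begin{equation}
	L_\ret \comp \beta = L_\ret \comp \pi_\B = h \comp \epsilon,
\end{equation}
which is exactly \cref{eq:var_fix_prep} with $g \coloneq h$. Hence $L_\ret$ is predictively fixed.

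\emph{Alternative route.} One could instead use \cref{thm:prediction_repeatable} together with \cref{prop:fix_predicted}. On $\im(\beta)$ we have $L_\ret = [\iota] \comp L_\pre$, while off $\im(\beta)$ the variable $L$ of \cref{eq:predicted+} is the identity. Defining $f$ on the disjoint union codomain $\T$ to be $[\iota]$ on the first part and $L_\ret$ on the second part exhibits $L_\ret = f \comp L$. This route needs \cref{thm:prediction_repeatable_kernels} and is therefore more roundabout.

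I would present the direct route. The only care needed is to read repeatability correctly as an equality of functions on all of $\B \oplus \V$, not merely on $\B$. I do not expect any real obstacle.
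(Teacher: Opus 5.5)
Your proof is correct, but your preferred direct route is not the paper's argument. The paper's proof is exactly what you call the alternative route. It invokes \cref{thm:prediction_repeatable}, which relies on the well-definedness of $[\iota]$ from \cref{thm:outcome_relation}, i.e.\ on the inclusion $\I_{\pre} \subseteq \I_{\ret}$ from \cref{thm:prediction_repeatable_kernels}. It then exhibits $L_{\ret} = f \comp L$, with $f$ equal to $[\iota]$ on $\im(\beta)/\I_{\pre}$ and to $L_{\ret}$ off the preparation support, and concludes via \cref{prop:fix_predicted}.

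Your direct route just chains the repeatability identity $L_{\ret} \comp \beta = L_{\ret} \comp \pi_{\B}$ on $\B \oplus \V$ with the retrodictive fixedness $L_{\ret} \comp \pi_{\B} = h \comp \epsilon$. It needs neither the kernel comparison nor the characterisation of predictively fixed variables. It also shows a little more: the same decoding function $h$ that reads off the initial value of $L_{\ret}$ from the record also reads off its final value. Likewise, any witness of retrodictive fixedness of $f \comp L_{\ret}$ is also a witness of its predictive fixedness.

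What the paper's route gives in exchange is structural information. On the preparation support, $L_{\ret}$ is a \emph{linear} post-processing, via $[\iota]$, of the predicted variable $L_{\pre}$. This places the retrodicted information inside the predicted variable rather than merely asserting that some readout exists. For the corollary as stated, your argument is complete and shorter.
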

By \cref{prop:fixed_from_measured}, this also means that every retrodictively fixed variable is predictively fixed in the repeatable case.
\begin{proof}
	This follows by combining \cref{thm:prediction_repeatable} with \cref{prop:fix_predicted}.
	In particular, given $L$ from \cref{eq:predicted+}, we have 
	\begin{equation}
		L_\ret = f \comp L
	\end{equation}
	where $f \colon \T \to \linefaktor{\B}{\I_{\ret}}$ is given by
	\begin{equation}
		f(t) = 
		\begin{cases}
			[\iota](t)  &  \text{if } t \in \linefaktor{\im(\beta)}{\I_{\pre}} \\
			L_{\ret}(t)  &  \text{if } t \in \B \setminus \im(\beta),
		\end{cases}
	\end{equation}
	where we use the map $[\iota] \colon \linefaktor{\im(\beta)}{\I_\pre} \to \linefaktor{\B}{\I_\ret}$ from \cref{thm:outcome_relation}.
	This implies that $L_\ret$ is predictively fixed.
\end{proof}

The idea of repeatable measurements is that their outcome is the same when the measurement is repeated. 
In our framework, this idea translates as follows: 
Given two successive identical\footnotemark{} measurements of a toy object $B$, the respective toy subjects involved in these two measurements would agree on what the value of their retrodicted variable is. 
\footnotetext{That is, the two measurements are implemented by the same interaction $M$, they make use of isomorphic ready systems, and the two subjects have isomorphic ontic state spaces and the same manifest variable $E$. 
However, they do not share the initial state of the ready system, since these are by assumption uncontrollable.}%
That is, they agree which element of the partition $\linefaktor{\B}{\I_\ret}$ of $\B$ was occupied prior to their measurement.
Mathematically, we can describe this information by the map $L_\ret \comp \pi_\B \colon \B \oplus \V \to \linefaktor{\B}{\I_\ret}$ (cf.\ \cref{eq:repeatable_2}).

Since the measurements are implemented sequentially, the ontic state prior to the second measurement is equal to the ontic state after the first measurement.
This is mathematically described by the fact that the initial state for the second measurement is given by 
\begin{equation}
	\beta(b \oplus v_1) \oplus v_2 \in \B \oplus \V,
\end{equation}
where $b \oplus v_1$ is the initial state for the first measurement.
That is why repeatable measurements satisfy
\begin{equation}\label{eq:repeatable_2}
	L_\ret \bigl( \pi_\B(b \oplus v_1) \bigr) = L_\ret \Bigl( \pi_\B \bigl(\beta(b \oplus v_1) \oplus v_2 \bigr) \Bigr)
\end{equation}
for all $b \in \B$ and  $v_1, v_2 \in \V$.
The left-hand side represents the outcome of the first measurement and the right-hand side represents the outcome of the second measurement.
By the definition of $\pi_\B$ via $b \oplus v \mapsto b$, \cref{eq:repeatable_2} is equivalent to \cref{eq:repeatable_def}.

We can now show that the value of the predicted variable is also preserved when a repeatable measurement is applied in succession, but only when we discard the additional predicted information via $[\iota]$.
The relevant property to establish is obtained by replacing $L_\ret \comp \pi_\B$ with $[\iota] \comp L_\pre \comp \beta$ in \cref{eq:repeatable_2}.
\begin{proposition}[Stability of prediction in the repeatable case]\label{prop:repeatable_predicted}
	For every repeatable measurement, the predicted variable satisfies\/\footnotemark{}
	\footnotetext{In \cref{eq:repeatable_predicted} as well as in the proof below, we implicitly restrict the codomains of $\beta$, $M_{\B \B}$, and $M_{\B \V}$ to the domain of $L_\pre$, which is the preparation support $\im(\beta)$.}
	\begin{equation}\label{eq:repeatable_predicted}
		[\iota] \comp L_\pre \bigl( \beta(b \oplus v_1) \bigr) = [\iota] \comp L_\pre \Bigl( \beta \bigl(\beta(b \oplus v_1) \oplus v_2 \bigr) \Bigr)
	\end{equation}
	for all $b \in \B$ and all $v_1, v_2 \in \V$.
\end{proposition}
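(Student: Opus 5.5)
The plan is to reduce \cref{eq:repeatable_predicted} to the repeatability condition \eqref{eq:repeatable_def} by means of \cref{thm:prediction_repeatable}. Both arguments of $L_\pre$ lie in the preparation support: $\beta(b \oplus v_1) \in \im(\beta)$ trivially, and likewise $\beta\bigl(\beta(b\oplus v_1) \oplus v_2\bigr) \in \im(\beta)$, since $\beta(b\oplus v_1) \oplus v_2$ is a valid element of $\B \oplus \V$. Hence \cref{thm:prediction_repeatable} applies to each side. It lets us rewrite $[\iota] \comp L_\pre (c) = L_\ret(c)$ for every $c \in \im(\beta)$. So the claim is equivalent to
\begin{equation*}
	L_\ret\bigl(\beta(b\oplus v_1)\bigr) = L_\ret\Bigl(\beta\bigl(\beta(b\oplus v_1)\oplus v_2\bigr)\Bigr).
\end{equation*}

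Next, set $b' \coloneq \beta(b \oplus v_1) \in \B$ and apply \eqref{eq:repeatable_def} with the initial object state $b'$ and the ready state $v_2$. This gives $L_\ret(b') = L_\ret\bigl(\beta(b' \oplus v_2)\bigr)$, which is exactly the displayed equation. Note that \eqref{eq:repeatable_def} is quantified over all $b \in \B$, not only over $\im(\beta)$, so no restriction on $b'$ is needed. Equivalently, one can chain \eqref{eq:repeatable_def} twice: both sides equal $L_\ret(b)$.

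There is no real obstacle here. The only point that needs care is bookkeeping of domains, as flagged in the footnote. $L_\pre$ is defined only on $\im(\beta)$, so I must check that every argument fed to $L_\pre$ lies there. I must also check that \cref{thm:prediction_repeatable}, itself relying on the well-definedness of $[\iota]$ from \cref{thm:outcome_relation}, is invoked only on elements of the preparation support. Once this is verified, the statement follows in two lines.
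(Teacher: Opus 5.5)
Your proof is correct and follows the paper's approach: both use \cref{thm:prediction_repeatable} to replace $[\iota] \comp L_\pre$ by $L_\ret$ on $\im(\beta)$, and then invoke repeatability. The paper expands $\beta = M_{\B\B} + M_{\B\V}$ and checks three block-matrix identities that follow from \cref{eq:repeatability,eq:no_disturbance}; you instead apply \eqref{eq:repeatable_def} directly at $b' = \beta(b \oplus v_1)$ with ready state $v_2$, which reaches the same conclusion more economically.
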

\begin{proof}
	Using $\beta = M_{\B \B} + M_{\B \V}$, \cref{thm:prediction_repeatable}, and the linearity of $L_\ret$, we can write \cref{eq:repeatable_predicted} as 
	\begin{equation}
		L_\ret M_{\B \B} b + L_\ret M_{\B \V} v_1 = L_\ret M_{\B \B}^2 b + L_\ret M_{\B \B} M_{\B \V} v_1 + L_\ret M_{\B \V} v_2
	\end{equation}
	which is equivalent to the following three equations
	\begin{align}
		\label{eq:repeatable_predicted_1} L_\ret M_{\B \B} &= L_\ret M_{\B \B}^2 \\
		\label{eq:repeatable_predicted_2} L_\ret M_{\B \V} &= L_\ret M_{\B \B} M_{\B \V} \\
		\label{eq:repeatable_predicted_3} 0 &= L_\ret M_{\B \V}.
	\end{align}
	Since the measurement is repeatable, we have $L_\ret = L_\ret M_{\B \B}$ and $0 = L_\ret M_{\B \V}$, from which all of the equations follow.
\end{proof}

\section{Proofs of Epistemic Horizons}\label{sec:proof_EH}

\subsection{Retrodiction}\label{sec:proofs_retrodictive}

\begin{proof}[\bfseries Proof of \cref{thm:Poisson_MV_Poisson_RS}]
	There are two statements to show.
	First, we consider an arbitrary Poisson variable $Z \colon \B \to \Z$ and show that it can be retrodicted.
	This amounts to constructing a measurement whose retrodicted variable is $Z$.
	Second, we show that for arbitrary subjects of the above kind, the retrodicted variable must be a Poisson variable.
	
	\paragraph{Part 1 (sufficiency):}
	Consider a generic toy object $B$ with the underlying symplectic vector space $\B$ and an arbitrary Poisson variable $Z \colon \B \to \Z$.
	Recall that $Z$ is surjective by definition.
	
	We then choose a ready system $R$ and a toy subject $S$ as follows:
	Their ontic state spaces are each given by $\Q \oplus \P$, where $\Q$ and $\P$ are vector spaces isomorphic to $\Z$.
	We let the symplectic form have the matrix representation \eqref{eq:sympl_form} with respect to the decomposition $\S = \Q \oplus \P = \R$, so that both $\Q$ and $\P$ are Lagrangian subspaces.
	The ready state variability of the ready system is chosen to be $\V \coloneq \P$ and the manifest variable of the subject is $\pi_\Q \colon \B \to \Q$. 
	
	We now explicitly construct a transformation $\B \oplus \R \to \B \oplus \S$ that measures $Z$. 
	In matrix notation, it is given by
	\begin{equation}\label{eq:M_Pson}
		M^{\rm Pson} \coloneq  \begin{pmatrix}
			\id_\B & 0 & \Omega_{\B} Z^T \\
			Z & \id_\Q & 0 \\
			0 & 0 & \id_\P
		\end{pmatrix},
	\end{equation} 
	in the block form relative to the decomposition $\B \oplus \Q \oplus \P$ of both its input and output spaces.
	
	To show that $M^{\rm Pson}$ is Poisson, we compute
	\begin{equation}
		M^{\rm Pson} \Omega \left( M^{\rm Pson} \right)^T = 
		\begin{pmatrix}
			\Omega_{\B} & 0 & 0 \\
			0 & Z \Omega_{\B} Z^T & \id  \\
			0 & - \id & 0
		\end{pmatrix} = \Omega,
	\end{equation}
	where we use $Z \Omega_\B Z^T = 0$, i.e.\ that $Z$ is a Poisson variable.
	
	The fastest way to compute the variable retrodicted by $M^{\rm Pson}$ is to use \cref{prop:concrete_measurable}.
	We can choose the free manifest subspace $\F$ from its statement to be $\Q$, so that the retrodicted variable is $Z$ as required.
	
	\paragraph{Part 2 (necessity):}
	For the other direction, consider an arbitrary measurement interaction $M \colon \B \oplus \R \to \B \oplus \S$ together with a Poisson ready state variability $\V \subseteq \R$ and a subject with a Poisson manifest variable $E \colon \S \to \E$.
	
	We choose a free manifest subspace $\F \subseteq \E$ as in \cref{prop:concrete_measurable}, coming also with a projection $\pi_\F \colon \E \to \F$ with respect to the decomposition $\E = \F \oplus \C$.
	By \cref{prop:concrete_measurable}, the retrodicted variable satisfies $L_{\rm ret} = \overline{M_{\F \B}}$, where $M_{\F \B} \coloneq \pi_\F \comp M_{\E \B}$, $M_{\E \B} \coloneq E \comp M_{\S \B}$, and $\overline{M_{\F \B}}$ denotes the restriction of its codomain to its own image.
	Thus, to complete this part of the proof, it suffices to show that $M_{\F \B}$ must be a Poisson variable.
	
	Given a choice of a Darboux basis $\{q_1, \ldots, q_n, p_1, \ldots, p_n\}$ of $\R$ as in \cref{sec:toy_systems}, we also choose an inner product on $\R$, such that this basis is an orthonormal one.
	Then we can define an orthogonal complement $\W \subseteq \R$ of the ready state variability $\V$ as well as the orthogonal projection $\pi_{\W} \colon \R \to \W$.
	Since the kernel of $\pi_{\W}$ is the coisotropic subspace $\V$, it is a Poisson variable.
	
	Let us now use the fact that $M$ is a Poisson map, i.e.\
	\begin{equation}
		M \Omega_{\B \oplus \R} M^T = \Omega_{\B \oplus \S}.
	\end{equation}
	Multiplying this equation by $\pi_\F E \pi_{\S}$ from the left and by its transpose from the right gives
	\begin{equation}\label{eq:ops_FF}
		M_{\F \B} \Omega_{\B} M_{\F \B}^T + M_{\F \R} \Omega_{\R} M_{\F \R}^T = \pi_\F E \Omega_{\S} E^T \pi_{\F}^T.
	\end{equation}
	Since $E$ is a Poisson variable, the right-hand side vanishes.
	Moreover, by the defining properties of a free manifest subspace (see \cref{prop:concrete_measurable}), we have $M_{\F \V} = 0$ so that also
	\begin{equation}
		M_{\F \R} = M_{\F \V} \pi_\V + M_{\F \W} \pi_\W = M_{\F \W} \pi_\W
	\end{equation}
	holds.
	Now we use the fact that $\pi_\W$ is a Poisson variable to obtain
	\begin{equation}
		M_{\F \R} \Omega_{\R} M_{\F \R}^T = M_{\F \W} \pi_\W \Omega_{\R} \pi_\W^T M_{\F \W}^T = 0.
	\end{equation}
	Putting all of these arguments together, \cref{eq:ops_FF} thus implies
	\begin{equation}
		M_{\F \B} \Omega_{\B} M_{\F \B}^T = 0,
	\end{equation}
	i.e.\ $M_{\F \B}$ is a Poisson variable, as we wanted to show.
\end{proof}

\begin{proof}[\bfseries Proof of \cref{thm:Poisson_MV_trivial_RS}]
	In the case of trivial ready states, we have $\V = \R$ and $\F = \E$ (using the notation from \cref{sec:subjects,sec:measured}).
	By \cref{prop:concrete_measurable}, the retrodicted variable (for a given measurement) is $E \comp M_{\S \B}$. 
	
	For the swapping interaction $M^{\swap}$ from \eqref{eq:swap}, the block $M_{\S \B}$ is given by the identity map.
	Since $E$ can be an arbitrary (maximal) Poisson variable, we obtain that any maximal Poisson variable can be retrodicted.
	To retrodict other Poisson variables, one can either invoke \cref{lem:Lag_rotation} or use the measurement $M^{\rm Pson}$ from the proof of \cref{thm:Poisson_MV_Poisson_RS}.
	
	To show that no other variable can be retrodicted, we use the fact that every trivial ready state is also a Poisson ready state.
	Specifically, because subjects with access to Poisson ready states cannot retrodict variables other than Poisson ones by \cref{thm:Poisson_MV_Poisson_RS}, the same limitation follows for subjects with access to trivial ready states only.
\end{proof}

\begin{proof}[\bfseries Proof of \cref{thm:Poisson_MV_fixed_RS}]
	Consider a generic toy object $B$ with a decomposition $\B = \G \oplus \sfH$ of its underlying vector space such that we can write its symplectic form as
	\begin{equation}
		\Omega_{\B} = 
		\begin{pmatrix}
			0 & \id \\
			-\id & 0
		\end{pmatrix}
	\end{equation}
	with respect to it, as usual.
	
	We choose the toy subject $S$ and ready system $R$ to be toy systems with twice as many degrees of freedom as the toy object.
	Namely, their underlying vector spaces of ontic states, and their symplectic form, are given by 
	\begin{equation}
		\R = \S \coloneq \Q \oplus \P   \qquad  \text{and}  \qquad
		\Omega_{\S} \coloneq \begin{pmatrix}
			0 & \id  \\
			- \id & 0
		\end{pmatrix}
	\end{equation}
	respectively, where $\Q$ and $\P$ are (canonically isomorphic) copies of the vector \mbox{space $\B$}.
	Note that even though $\B \oplus \B$ and $\S$ are isomorphic as vector spaces, they are not isomorphic as symplectic vector spaces, since the former has symplectic form given by $\Omega_{\B} \oplus \Omega_{\B}$, which is distinct from $\Omega_{\S}$ given above.
	
	The manifest variable of the subject is the projection $\pi_\Q \colon \Q \oplus \P \to \Q$ onto its first component, which is a maximal Poisson variable (since $\Q$ is a Lagrangian subspace of $\S$).
	Moreover, the ready state is complete, i.e.\ its variability is $\{0\}$. 
	
	With these choices, we can construct a Poisson map, for which the identity variable $\id_{\B}$ is retrodicted.
	It is specified by a symplectic linear map $M^{\rm full} \colon \B \oplus \R \to \B \oplus \R$ as follows:
	\begin{equation}\label{eq:M_full}
		\begin{aligned}
			M^{\rm full}_{\B\B} &= \id & &\qquad\qquad &
			M^{\rm full}_{\B\R} &=
			\begin{pmatrix}
				\id & 0 
			\end{pmatrix} \\
			M^{\rm full}_{\S\B} &=
			\begin{pmatrix}
				\id  \\
				0 
			\end{pmatrix} & &\qquad\qquad &
			M^{\rm full}_{\S\R} &=
			\begin{pmatrix}
				\pi_{\G} & \Omega_{\B} \\
				- \Omega_{\B} & 0
			\end{pmatrix},
		\end{aligned}
	\end{equation}
	where 
	\begin{equation}\label{eq:object_projection}
		\pi_{\G} = 
		\begin{pmatrix}
			\id & 0 \\
			0 & 0
		\end{pmatrix}
	\end{equation}
	is a linear map $\B \to \B$ that projects onto the Lagrangian subspace $\G$ of $\B$.
	
	To show that $M^{\rm full}$ is a Poisson map, we compute
	\begin{equation}
		M^{\rm full} \Omega \left( M^{\rm full} \right)^T = 
		\begin{pmatrix}
			\Omega_{\B} & 0 & 0 \\
			0 & \Omega_{\B} - \pi_{\G} \Omega_{\B} - \Omega_{\B} \pi_{\G} & - \Omega_{\B}^2 \\
			0 & \Omega_{\B}^2 & 0
		\end{pmatrix}
	\end{equation}
	and note that we have $\Omega_{\B}^2 = - \id$ and $\pi_{\G} \Omega_{\B} + \Omega_{\B} \pi_{\G} = \Omega_{\B}$.
	
	Note that by the assumption of a complete ready state, the map $\epsilon$ is given by $\pi_\Q \comp M^{\rm full}_{\S\B}$, which is the identity on $\B$.
	As a consequence, we have shown that the identity variable of any system can be retrodicted by subjects with Poisson manifest variables and access to complete ready states.
	
	To show that any other surjective linear variable can also be retrodicted, we use \cref{lem:Lag_rotation}.
	Consider an arbitrary surjective linear variable $Z \colon \Q \to \Z$. 
	Furthermore, define a toy subject with underlying state space $\T = \Z \oplus \W$ for which both $\Z$ and $\W$ are Lagrangian subspaces.
	Choose the manifest variable to be $\pi_\Z$.
	By \cref{lem:Lag_rotation}, there is a Poisson map $N \colon \R \to \T$, whose $\Z \Q$ block is given by $Z$.
	Applying this physical transformation after the measurement interaction $M$, we obtain a measurement 
	\begin{equation}\label{eq:comp_meas}
		(\id_{\B} \oplus N) \comp M^{\rm full} \colon \B \oplus \R \to \B \oplus \T,
	\end{equation}
	whose retrodicted variable is $Z$.
	
	In conclusion, we have shown that any surjective linear variable $Z$ can be retrodicted for toy subjects with Poisson manifest variables and access to complete ready states.
\end{proof}

\begin{theorem}[EH for retrodiction; complete manifest variable]\label{thm:complete_MV_retrodictive}
	For toy subjects whose manifest variable is complete and ready state is trivial, every surjective linear variable, whose kernel is a symplectic subspace of $\B$, can be retrodicted.
	The same holds for subjects whose ready state is Poisson, as well as for those whose ready state is complete.\footnotemark{}
\end{theorem}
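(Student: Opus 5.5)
Since the manifest variable is complete ($E = \id_\S$), the plan is to give, for each admissible variable $Z$, a measurement that transfers the symplectic complement of $\ker Z$ into the subject by a swap. The trivial ready state is the hardest case. By \cref{lem:subject_retrodictive_power}, anything retrodicted with a trivial ready state ($\V = \R$) is also retrodicted with any $\V' \subseteq \R$, since the retrodictively fixed variables are preserved. So one construction for $\V = \R$ covers all three cases: Poisson and complete ready states are subspaces of $\R$. By \cref{prop:fixed_from_measured}, retrodicting $Z$ amounts to exhibiting a measurement with $\I_\ret = \ker Z$.

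Fix a surjective linear $Z \colon \B \to \Z$ with $\K \coloneq \ker Z$ symplectic. Then $\B = \K \oplus \K^\omega$ is a splitting into symplectically orthogonal symplectic subspaces, and $Z|_{\K^\omega}$ is a linear isomorphism onto $\Z$. Take a ready system $R$ and subject $S$ with $\R = \S \cong \K^\omega$ as symplectic vector spaces, via a symplectic isomorphism $\sigma \colon \K^\omega \to \S$. Define $M \colon \B \oplus \R \to \B \oplus \S$ as the partial swap: it acts as the identity on $\K$ and exchanges the $\K^\omega$ summand of $\B$ with $\R$ through $\sigma$, i.e.\ $M(k + a \oplus r) = (k + \sigma^{-1} r) \oplus \sigma a$. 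In the decomposition $\K \oplus \K^\omega \oplus \R$ this is a permutation of symplectic blocks, each intertwined by $\sigma$ or the identity. It is therefore a symplectic isomorphism, hence Poisson (cf.\ \cref{ex:Poisson_map}).

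Next compute $\epsilon$. With $\V = \R$ we have $\epsilon(k + a \oplus r) = \sigma a$, so $\ker\epsilon = \K \oplus \R$ and $\I_\ret = \pi_\B(\ker\epsilon) = \K = \ker Z$. Alternatively, \cref{prop:concrete_measurable} with $\C = \im(E M_{\S\V}) = \{0\}$ gives $\I_\ret = \ker(M_{\S\B}) = \K$. Then $L_\ret$ and $Z$ share a kernel and are both surjective, so $Z = \varphi \comp L_\ret$ for a linear isomorphism $\varphi$; this $Z$ is retrodicted in the sense used in \cref{thm:Poisson_MV_Poisson_RS}. For $Z = \id_\B$ the construction is exactly $M^{\rm swap}$, which yields \cref{cor:complete_MV_retrodictive}.

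The only real obstacle is checking that the partial swap is Poisson. This is where symplecticity of $\ker Z$ is essential: $\Omega_\B$ must be block-diagonal in the splitting $\K \oplus \K^\omega$, so that swapping $\K^\omega$ out does not disturb the brackets between $\K$ and the rest. With a Darboux basis adapted to $\K \oplus \K^\omega$ (which exists precisely because both are symplectic), $M$ becomes a permutation matrix up to the symplectic change of basis $\sigma$, and $M\Omega M^T = \Omega$ is immediate. For the Poisson and complete ready-state cases, one can also check directly that restricting to smaller $\V$ leaves $\I_\ret = \K$. The $\R$-component of $\ker\epsilon$ projects to zero in $\B$, so $\I_\ret$ does not depend on $\V$. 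This confirms the monotonicity argument.
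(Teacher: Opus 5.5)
Your proof is correct, but your construction differs from the paper's. The paper takes $\R = \S = \B$ and uses the full swap $M^{\rm swap}$. For this swap $\ker\epsilon = \{0\}\oplus\V$ for \emph{any} $\V\subseteq\B$, so all three ready-state types are handled in one line. The paper then post-composes the subject with a Poisson map $N\colon\S\to\T$ whose kernel is $\ker Z$, and symplecticity of $\ker Z$ enters only through \cref{lem:ops_forms}, to guarantee that such an $N$ exists. You instead swap only the $\K^\omega$-summand of the object with a ready system isomorphic to $\K^\omega$, so symplecticity enters when you verify that the interaction itself is a symplectic isomorphism.

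The paper's route is modular: once the identity is retrodicted, every Poisson post-processing of the record is retrodicted too. Yours uses a smaller ready system and subject, and leaves the $\K$-component of the object undisturbed. It also avoids equipping $\Z$ with a symplectic form, which the paper needs for an arbitrary $Z$, since \cref{lem:ops_forms} also requires the restriction of $N$ to $\K^\omega$ to be a symplectic isomorphism. You only need $Z|_{\K^\omega}$ to be a linear isomorphism, giving $Z = \varphi\comp L_\ret$.

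One small point: \cref{lem:subject_retrodictive_power} only transfers \emph{retrodictively fixed} variables. On its own it would give $\I_\ret \subseteq \K$ for a smaller $\V'$, not equality. It is your direct computation $\ker\epsilon = \K\oplus\V'$ in the last paragraph that actually establishes the Poisson and complete ready-state cases.
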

\footnotetext{We do not know whether other variables, one whose kernel is not a symplectic subspace, can also be retrodicted for Poisson ready states.
The case of complete ready states is covered by \cref{thm:repeatable_complete_MV_complete_RS}.
However, answering this question is irrelevant as far as the epistemic horizon is concerned, because identity maps can be retrodicted according to \cref{thm:complete_MV_retrodictive}, and so every variable is retrodictively fixed by \cref{prop:fixed_from_measured}.}%

\begin{proof}[\bfseries Proof of \cref{thm:complete_MV_retrodictive}]
	Consider a generic toy object $B$, a ready system with identical state space $\R = \B$ and variability $\V \subseteq \B$, and a toy subject also with the same state space, $\S = \B$, and a complete manifest variable $\id \colon \S \to \S$.
	Then we can use the swapping interaction $M^{\rm swap}$ from \eqref{eq:swap} to show that the identity variable can be retrodicted. 
	Indeed, the kernel of $\epsilon \colon \B \oplus \V \to \S$ is given by $\{0\} \oplus \V$ and so the retrodictively inaccessible subspace $\I_{\rm ret} = \pi_\B(\ker\epsilon)$ is $\{0\}$.
	In conclusion, for any of the three types of subjects, the identity variable can be retrodicted.
	
	Moreover, we can post-compose $M^{\rm swap}$ with $\id_\B \oplus N$ where $N \colon \S \to \T$ is any  Poisson map and $\T$ is any symplectic vector space. 
	Then the variable retrodicted by the composite measurement is $N$.
	By \cref{lem:ops_forms}, $N$ is Poisson if and only if its kernel is a symplectic subspace.
	As a consequence, every variable whose kernel is symplectic can be retrodicted regardless of the ready state.
\end{proof}

\subsection{Repeatable Measurements}\label{sec:proofs_repeatable}

\begin{proof}[\bfseries Proof of \cref{thm:repeatable:Poisson_MV_trivial_RS}]
	It suffices to show that the retrodictively inaccessible subspace $\I_{\rm ret}$ must be the whole of $\B$ for subjects with a complete manifest variable.
	The same conclusion then follows for those with a Poisson manifest variable, since post-composing $\epsilon$ with a Poisson manifest variable $E$ can only enlarge the kernel of $\epsilon$.
	
	For the trivial ready state, we have $\V = \R$, and for a complete manifest variable we have 
	\begin{equation}
		M(b \oplus r) = \beta(b \oplus r) \oplus \epsilon(b \oplus r).
	\end{equation}
	Since $M$ is Poisson, it is surjective by \cref{lem:ops_surj}.
	Thus for every $b'\in \B$, there are $b \in \B$ and $r \in \R$ satisfying
	\begin{equation}
		M(b \oplus r) = b' \oplus 0.
	\end{equation}
	By definition of the retrodictively inaccessible subspace (\cref{def:retr_inaccessible}), $b$ is an element of $\I_{\rm ret}$.
	
	Moreover, we have
	\begin{equation}
		L_{\rm ret}(b) = L_{\rm ret} \comp \beta (b \oplus r) = L_{\rm ret} (b')
	\end{equation}
	by repeatability and the previous two equations.
	In particular, by definition of the retrodicted variable (\cref{def:measured}), this means $b' - b \in \I_{\rm ret}$ and consequently $b' \in \I_{\rm ret}$.
	Since $b'$ was arbitrary, we obtain $\I_{\rm ret} = \B$, which implies that the retrodicted variable is constant.
\end{proof}

\begin{proof}[\bfseries Proof of \cref{thm:repeatable_Poisson_MV_Poisson_RS}]
	The proof of \cref{thm:Poisson_MV_Poisson_RS} shows that any Poisson variable can be retrodicted by $M^{\rm Pson}$ from \cref{eq:M_Pson}.
	So it suffices to show that $M^{\rm Pson}$ is in fact a repeatable measurement.
	Specifically, \cref{eq:repeatability} follows by $M^{\rm Pson}_{\B\B} = \id_\B$ while \cref{eq:no_disturbance} follows from 
	\begin{equation}
		L_{\rm ret} \comp M^{\rm Pson}_{\B \V} = Z \Omega_{\B} Z^T = 0
	\end{equation}
	where the first equation holds by $Z = L_{\rm ret}$ and the second holds because $Z$ is a Poisson variable.
	
	Since every variable that can be repeatably retrodicted can also be retrodicted by a generic measurement, the epistemic horizon from \cref{thm:Poisson_MV_Poisson_RS} implies that no other variables can be repeatably retrodicted.
\end{proof}

\begin{proof}[\bfseries Proof of \cref{thm:repeatable_Poisson_MV_complete_RS}]
	We show in the proof of \cref{thm:Poisson_MV_fixed_RS} that the variable retrodicted by the interaction $M^{\rm full}$ defined via \cref{eq:M_full} is the identity map $\id_\B$.
	Since we also have $M^{\rm full}_{\B \B} = \id_\B$, the measurement is repeatable.
	In particular, note that the ``no disturbance'' condition{\,---\,}\cref{eq:no_disturbance}{\,---\,}holds trivially because the initial ontic state of the ready system is completely fixed, i.e.\ $\V = \{0\}$.
	Therefore, the identity variable can be repeatably retrodicted.
	                                                    
	To show that any other (surjective linear) variable can also be repeatably retrodicted, we proceed as in the proof \cref{thm:Poisson_MV_fixed_RS}. 
	Specifically, the composite measurement from \eqref{eq:comp_meas}, whose retrodicted variable is (an arbitrary) surjective linear map, is also repeatable.
	To see this, we note that its $\B \B$ block is still equal to the identity map.
\end{proof}

\begin{proof}[\bfseries Proof of \cref{thm:repeatable_complete_MV_Poisson_RS}]
	The fact that every Poisson variable can be repeatably retrodicted follows from the construction of $M^{\rm Pson}$ from \cref{eq:M_Pson}, which was used in the proof of \cref{thm:Poisson_MV_Poisson_RS}.
	As noted in the proof of \cref{thm:repeatable_Poisson_MV_Poisson_RS}, this measurement is repeatable for a Poisson manifest variable.
	The same is true for a complete manifest variable, because the variable retrodicted by it is the same.
	Indeed we can compute the retrodicted variable easily thanks to \cref{prop:concrete_measurable} and verify that it is given by $Z${\,---\,}an arbitrary Poisson variable.
	
	To prove the converse, consider an arbitrary repeatable measurement $M$ and choose a free manifest subspace $\F \subseteq \S$, as in \cref{prop:concrete_measurable}.
	By this result, the retrodicted variable satisfies $L_{\rm ret} = \overline{M_{\F \B}}$, where $\overline{M_{\F \B}}$ is as in the second part of the proof of \cref{thm:Poisson_MV_Poisson_RS}.
	
	Thus it suffices to show that $M_{\F\B}$ is a Poisson variable, i.e.\ that we have $M_{\F\B} \Omega_{\B} M_{\F\B}^T = 0$.
	To this end, we again choose an inner product on $\R$, as in the proof of \cref{thm:Poisson_MV_Poisson_RS}.
	Then we can define an orthogonal complement $\W \subseteq \R$ of the ready state variability $\V$ as well as the orthogonal projection $\pi_{\W} \colon \R \to \W$.
	Since the kernel of $\pi_{\W}$ is the coisotropic subspace $\V$, it is a Poisson variable.
	
	Let us use the fact that $M$ is Poisson, namely 
	\begin{equation}\label{eq:M_trans_sympl}
		M \Omega M^T = \Omega, \qquad \text{where }
		\Omega = \begin{pmatrix}
			\Omega_\B & 0 \\
			0 & \Omega_{\R} 
		\end{pmatrix} .
	\end{equation} 
	Its $\B\F$ block is
	\begin{equation} \label{eq:sympl2} 
		M_{\B\B} \Omega_\B M_{\F\B}^T  +  M_{\B\R} \Omega_\R M_{\F\R}^T  =  0 .
	\end{equation}
	Multiplying by $M_{\F\B}$ from the left, and using the repeatability condition of \cref{eq:repeatability} gives 
	\begin{equation}\label{eq:contr2}
		M_{\F\B} \Omega_\B M_{\F\B}^T + M_{\F\B} M_{\B\R} \Omega_\R M_{\F\R}^T = 0 .
	\end{equation}
	By the no disturbance condition of \cref{eq:no_disturbance}, we then have 
	\begin{equation}
		M_{\F\B} M_{\B\R} = M_{\F\B} M_{\B\W} \pi_{\W} + M_{\F\B} M_{\B\V} \pi_\V = M_{\F\B} M_{\B\W} \pi_{\W}
	\end{equation}
	where $\pi_\V \colon \R \to \V$ is the orthogonal projection onto $\V$. 
	By the definition of the free manifest subspace we also have $M_{\F \V} = 0$, which implies
	\begin{equation}
		M_{\F \R} = M_{\F \W} \pi_{\W}.
	\end{equation}
	Using both of these in \cref{eq:contr2} gives
	\begin{equation}
		M_{\F\B} \Omega_\B M_{\F\B}^T + M_{\F\B} M_{\B\W} \pi_{\W} \Omega_\R \pi_{\W}^T M_{\F\W}^T = 0.
	\end{equation}
	Since $\pi_{\W}$ is a Poisson variable, it satisfies $\pi_{\W} \Omega_\R \pi_{\W}^T = 0$ and thus we get that $M_{\F\B}$ must also be a Poisson variable.
	This completes the proof of the theorem.
\end{proof}

\begin{proof}[\bfseries Proof of \cref{thm:repeatable_complete_MV_complete_RS}]
	This proof is essentially identical to that of \cref{thm:repeatable_Poisson_MV_complete_RS}, because the change in manifest variable does not change the variable retrodicted by $M^{\rm full}$.
	In particular, in this case we have $\epsilon = M^{\rm full}_{\S \B}$.
	By the definition of the interaction matrix via \cref{eq:M_full}, the kernel of $\epsilon$ is thus the identity which can be retrodicted.
\end{proof}

\subsection{Prediction}\label{sec:proofs_predictive}

\begin{proof}[\bfseries Proof of \cref{thm:prediction_Poisson_MV_trivial_RS}]
	The proof follows the same steps as that of \cref{thm:repeatable:Poisson_MV_trivial_RS}.	
	In particular, we can restrict our attention to complete manifest variables.
	
	The fact that the preparation support $\im(\beta)$ must be the whole of $\B$ follows from the surjectivity of the measurement (\cref{lem:ops_surj}) and the fact that for trivial ready states, the variability is unrestricted: $\V = \R$.
	
	Surjectivity of $M$ also implies that every output like $b' \oplus 0 \in \B \oplus \S$ is in the image of $M$, which implies $b' \in \beta(\ker \epsilon) = \I_\pre$.
	Therefore, we have $\I_\pre = \B$, which means that the predicted variable is necessarily constant.
\end{proof}

\begin{proof}[\bfseries Proof of \cref{thm:prediction_Poisson_MV_Poisson_RS}]
	We split the argument into two parts. 
	In the first, we show that for every measurement by a toy subject with a Poisson manifest variable and Poisson ready state, the predictively inaccessible subspace is necessarily coisotropic.
	By \cref{lem:Poisson}, this is equivalent to showing that the canonical quotient map $Z \colon \B \to \linefaktor{\B}{\I_\pre}$ is Poisson (i.e.\ that it satisfies $Z \Omega Z^T = 0$).
	In the second part, provided an arbitrary coisotropic subspace $\I$ of $\B$ and an arbitrary linear subspace $\F$ satisfying $\F \supseteq \I$,\footnotemark{} we then construct a measurement whose preparation support coincides with $\F$ and whose predictively inaccessible subspace coincides with $\I$.
	\footnotetext{Note that the subspace $\F$ must necessarily be coisotropic as well, since taking symplectic complements reverses subspace inclusions.}%
	
	\paragraph{Part 1 (necessity):}	
	While for a generic Poisson ready state, the variability $\V$ is a coisotropic subspace of $\R$, in this proof we can restrict our attention to the case of $\V$ being a Lagrangian subspace.
	To see this, recall that every coisotropic subspace $\tilde{\V}$ contains a Lagrangian subspace $\V$ within. 
	In this proof, we show (the contrapositive of) the implication
	\begin{center}
		$\I \subseteq \B$ is not coisotropic $\implies$ the quotient projection $Z \colon \B \to \linefaktor{\B}{\I}$ is not predictively fixed,
	\end{center}
	assuming that $\V$ is Lagrangian.
	By \cref{lem:subject_retrodictive_power}, if $Z$ is not predictively fixed for a Lagrangian $\V$, then it is also not predictively fixed for the coisotropic $\tilde{\V}$ satisfying $\tilde{\V} \supseteq \V$.
	This in turn implies that $\I$ cannot be a predictively inaccessible subspace, provided that $\tilde{\V}$ is the kernel of the ready state.
	To summarise this paragraph, in the rest of the proof we can safely assume $\V$ to be a Lagrangian subspace of $\R$.
	
	To simplify the argument structure, we choose decompositions of the vector spaces of the ready system and the toy subject 
	\begin{equation}
		\R = \W \oplus \V  \qquad \qquad  \S = \Q \oplus \Y
	\end{equation}
	where $\V$ is the ready state variability, $\Y$ is the kernel of the manifest variable $E \colon \S \to \E$, both Lagrangian subspaces, and their complements are chosen (together with isomorphisms $\W \cong \V$ and $\Q \cong \Y$) so that we have
	\begin{equation}
		\Omega_{\R} = \begin{pmatrix}
			0 & \id \\
			-\id & 0
		\end{pmatrix} 
		\qquad \qquad
		\Omega_{\S} = \begin{pmatrix}
			0 & \id \\
			-\id & 0
		\end{pmatrix}
	\end{equation}
	in the corresponding block notation.
	Using the above decompositions and the block notation for the interaction matrix $M$ in consideration, we have
	\begin{equation}
		E \comp M_{\Y \B} = 0 =  E \comp M_{\Y \V}
	\end{equation}
	by the definition of $\Y$.
	
	Note that, for the predictively inaccessible subspace $\I_\pre$ associated to $M$, the quotient map $Z \colon \B \to \linefaktor{\B}{\I_\pre}$ is predictively fixed by \cref{prop:fix_predicted}, i.e.\ we have $Z \beta = g \epsilon$ for some function $g$.
	What is more, the map ${g \colon \E \to \linefaktor{\B}{\I_\pre}}$ can be chosen to be linear.
	To simplify the notation further, we denote $\Z \coloneq \linefaktor{\B}{\I_\pre}$ and define the linear maps
	\begin{equation}
		M_{\Z \B} \coloneq g \comp E \comp M_{\Q \B}  \quad \text{and} \quad  M_{\Z \V} \coloneq g \comp E \comp M_{\Q \V}.
	\end{equation}
	Using 
	\begin{equation}
		\beta = M_{\B \B} + M_{\B \V} \quad \text{and} \quad \epsilon = E \comp M_{\Q \B} + E \comp M_{\Q \V},
	\end{equation}
	the fact that $Z$ is predictively fixed can be succinctly expressed via 
	\begin{equation}\label{eq:pred_var_fix}
		Z M_{\B \B} = M_{\Z \B}  \quad \text{and} \quad Z M_{\B \V} = M_{\Z \V}.
	\end{equation}
	We employ these repeatedly in the computations below.
	
	Let us now use the fact that $M$ is a Poisson map to show that $Z$ must be a Poisson variable.
	The defining equation, $M \Omega M^T = \Omega$, gives the following equations as its $\B \B$, $\B \Q$, $\Q \B$, and $\Q \Q$ blocks: 
	\begin{align}
		\label{eq:symp_1} M_{\B\B} \Omega_\B M_{\B\B}^T + M_{\B\W} M_{\B\V}^T - M_{\B\V} M_{\B\W}^T &= \Omega_\B \\
		\label{eq:symp_2} M_{\B\B} \Omega_\B M_{\Q\B}^T + M_{\B\W} M_{\Q\V}^T - M_{\B\V} M_{\Q\W}^T &= 0 \\
		\label{eq:symp_4} M_{\Q\B} \Omega_\B M_{\B\B}^T + M_{\Q\W} M_{\B\V}^T - M_{\Q\V} M_{\B\W}^T &= 0 \\
		\label{eq:symp_5} M_{\Q\B} \Omega_\B M_{\Q\B}^T + M_{\Q\W} M_{\Q\V}^T - M_{\Q\V} M_{\Q\W}^T &= 0 . 
	\end{align}
	Applying $Z$ and $Z^T$ to \cref{eq:symp_1} on the left and right respectively gives 
	\begin{equation}
		Z \Omega_\B Z^T =  M_{\Z\B} \Omega_\B M_{\Z\B}^T + Z M_{\B\W} M_{\Q\V}^T Z^T - Z M_{\Q\V}M_{\B\W}^T Z^T.
	\end{equation}
	Using \cref{eq:symp_2,eq:symp_4} respectively, we find
	\begin{align}
		Z M_{\B\W} M_{\Q\V}^T Z^T &= - M_{\Z\B} \Omega_\B M_{\Z\B}^T + M_{\Z \V} M_{\Z \W}^T \\
		- Z M_{\Q\V}M_{\B\W}^T Z^T &= - M_{\Z\B} \Omega_\B M_{\Z\B}^T - M_{\Z \W} M_{\Z \V}^T.
	\end{align}
	Putting the last three equations together and applying \cref{eq:symp_5}, we obtain
	\begin{equation}
		Z \Omega_\B Z^T = - M_{\Z\B} \Omega_\B M_{\Z\B}^T + M_{\Z \V} M_{\Z \W}^T - M_{\Z\W} M_{\Z \V}^T = 0.
	\end{equation}
	Altogether, this shows that $Z$ is a Poisson variable, and thus that $\I_\pre$ is a coisotropic subspace of $\B$.
	Consequently, the predicted variable given by the quotient map ${\im(\beta) \to \linefaktor{\im(\beta)}{\I_\pre}}$ is a Poisson partial variable by \cref{def:partial_variable}.
	
	\paragraph{Part 2 (sufficiency):}
	For the converse, we construct the measurement that saturates the epistemic horizon.
	To this end, consider two arbitrary coisotropic subspaces $\I$ and $\F$ of $\B$, that satisfy $\I \subseteq \F$.
	Denoting the symplectic complements by $(\ph)^\omega$, we have 
	\begin{equation}\label{eq:compl_incl}
		\F^\omega \subseteq \I^\omega \subseteq \I \subseteq \F.
	\end{equation}
	
	Because both quotients $\linefaktor{\B}{\F}$ and $\linefaktor{\F}{\I}$ are in general non-trivial, both non-empirical and empirical information\footnote{Non-empirical information predicted corresponds to the quotient $\linefaktor{\B}{\F}$, while the empirical information predicted by the subject can be identified with the quotient $\linefaktor{\F}{\I}$.} are non-trivial and have to be encoded in the final state of $\B$.
	To disentangle these, we decompose $\B$ into a direct sum of two subsystems $\B = \B_1 \oplus \B_2$, where  
	$\B_1$ will be used for predicting non-empirical information and $\B_2$ for the empirical information.
	
	Specifically, we choose $\B_1$ to be a symplectic subspace of $\B$, such that $\F^\omega$ is a Lagrangian subspace thereof.
	While not unique, such a $\B_1$ exists by Darboux Theorem, since $\F^\omega$ is an isotropic subspace of $\B$.
	We then define $\B_2$ to be the symplectic complement of $\B_1$, so that it is another symplectic subspace of $\B$ and satisfies $\B_1 \cap \B_2 = \{0\}$ and $\B = \B_1 \oplus \B_2$.
	
	By definition, we then have $\F \cap \B_1 = \F^\omega$, and using \eqref{eq:compl_incl} we also get $\I \cap \B_1 = \F^\omega$.
	As we can see, when restricted to the subsystem $\B_1$, our prospective preparation support $\F$ and predictively inaccessible subspace $\I$ coincide. 
	That is, $\B_1$ is indeed going to carry non-empirical information only.
	
	On the other hand, by $(\F^\omega)^\omega = \F$ we have $\B_2 \subseteq \F$, from which $\F\cap \B_2 = \B_2$ follows.
	We can also define the subspace $\I_2 \coloneq \I \cap \B_2$, which is coisotropic within $\B_2$.
	Since the prospective preparation support for the second subsystem is full, this system is going to carry empirical information only.
	In particular, the variable of $\B_2$ that we want to encode in the manifest variable is the quotient projection
	\begin{equation}\label{eq:target_measured}
		Z \colon \B_2 \to \newfaktor{\B_2}{\I_2}.
	\end{equation}
	Since $\I_2$ is coisotropic, $Z$ is a Poisson variable.
	
	What is more, $\F$ and $\I$ are both aligned with the decomposition of $\B$.
	That is, we have
	\begin{equation}\label{eq:alignment}
		\F = (\F^\omega)^\omega = (\F \cap \B_1)^\omega = \F^\omega + \B_1^\omega = \F^\omega \oplus \B_2,
	\end{equation}
	which, using $\I \subseteq \F$, implies
	\begin{equation}\label{eq:alignment_2}
		\I = \F \cap \I = (\F^\omega \cap \I) \oplus (\B_2 \cap \I ) = \F^\omega \oplus \I_2.
	\end{equation}
	
	Let us now consider both the ready system and the toy subject to have identical underlying ontic state space given by $\R = \R_1 \oplus \R_2$.
	The first subsystem is identical to the corresponding object subsystem: $\R_1 \cong \B_1$.
	The second one is constructed to be able to encode the necessary information in its Lagrangian subspace about $\B_2$.
	That is, we set $\R_2 \cong \Q_2 \oplus \P_2$, where both $\Q_2$ and $\P_2$ are Lagrangian subspaces of $\R_2$ (with symplectic form given by \eqref{eq:sympl_form}) and isomorphic to the codomain of $Z$ from \eqref{eq:target_measured}.
	We also choose a decomposition of $\R_1$ into two Lagrangian subspaces via $\R_1 = \Q_1 \oplus \P_1$.
	The ready state variability of the ready system is given by ${\F^\omega \oplus \P_2 \subseteq \R}$ and the manifest variable of the subject is the projection ${\pi_{\Q_1} \oplus \pi_{\Q_2}}$. 
	
	As the measurement interaction $M \colon \B \oplus \R \to \B \oplus \R$, we consider the direct sum (up to reordering of subsystems) of
	\begin{equation}
		M_1 \colon \B_1 \oplus \R_1 \to \B_1 \oplus \R_1  \qquad \text{and} \qquad  M_2 \colon \B_2 \oplus \R_2 \to \B_2 \oplus \R_2
	\end{equation}
	where $M_1$ is the swapping interaction introduced in \cref{eq:swap} and $M_2$ is the measurement of an arbitrary Poisson variable introduced in \cref{eq:M_Pson}. 
	In particular, we choose $M_2$ to be the measurement which has $Z$ from \eqref{eq:target_measured} as its retrodicted variable.
	
	Let us now compute the preparation support of $M$.
	It is given by $\im(\beta_1) \oplus \im(\beta_2)$ for 
	\begin{align}
		\beta_1 \colon \B_1 \oplus \F^\omega &\to \B_1 & \beta_1(b \oplus t) &= t, \\
		\label{eq:beta_2} \beta_2 \colon \B_2 \oplus \P_2 &\to \B_2 & \beta_2(b' \oplus p) &= b' + \Omega_{\B_2} Z^T p.
	\end{align}
	Consequently, the preparation support of $M$ is $\F^\omega \oplus \B_2$, which coincides with $\F$ by \cref{eq:alignment}.
	
	By the direct sum splitting of $M$, its predictively inaccessible subspace is given by the direct sum of the predictively inaccessible subspaces of $M_1$ and $M_2$ respectively.
	For the swapping interaction, we have
	\begin{equation}
		\epsilon_1 \colon \B_1 \oplus \F^\omega \to \Q_1  \qquad \qquad \epsilon_1(b \oplus t) = \pi_{\Q_1} b,
	\end{equation}
	so that the kernel of $\epsilon_1$ is $\P_1 \oplus \F^\omega$ and its image under $\beta_1$, i.e.\ the predictively inaccessible subspace of $M_1$, is $\F^\omega$.
	
	To compute the predictively inaccessible subspace of $M_2$, note that we have
	\begin{equation}
		\epsilon_2 \colon \B_2 \oplus \P_2 \to \Q_2  \qquad \qquad \epsilon_2(b' \oplus p) = Zb',
	\end{equation}
	so that $\ker(\epsilon_2) = \ker(Z) = \I_2$ holds.
	Since $Z$ is a Poisson variable, we have
	\begin{equation}
		Z \Omega_{\B_2} Z^T p = 0
	\end{equation}
	for all $p \in \P_2$, so that the term $\Omega_{\B_2} Z^T p$ from \eqref{eq:beta_2} is necessarily an element of the kernel of $Z$.
	Combining these two facts and recalling our choice of $Z$ from \eqref{eq:target_measured}, we find that the predictively inaccessible subspace of $M_2$ is 
	\begin{equation}
		\beta_2 \bigl( \ker(\epsilon_2) \bigr) = \ker(Z) = \I_2.
	\end{equation}
	In conclusion, the predictively inaccessible subspace of $M$ is $\F^\omega \oplus \I_2$, which coincides with $\I$ by \cref{eq:alignment_2}.
	
	This completes the proof, since we have shown, provided arbitrary coisotropic $\I \subseteq \F$, that there is a measurement $M$ with $\I$ and $\F$ as its predictively inaccessible subspace and preparation support respectively.
\end{proof}

\begin{proof}[\bfseries Proof of \cref{thm:prediction_Poisson_MV_complete_RS}]
	This follows by combining \cref{thm:repeatable_Poisson_MV_complete_RS} with \cref{thm:prediction_repeatable_kernels}.
	Specifically, the measurement $M^{\rm full}$ used in the proof of the former result has full preparation support, which follows directly from its definition via \cref{eq:M_full}.
	Therefore, \cref{eq:prediction_repeatable_1} reads
	\begin{equation}
		\I_\ret = \I_\pre + \beta(\{0\} \oplus \V) = \I_\pre,
	\end{equation}
	where the second equation follows because $\V = \{0\}$ is the definition of a complete ready state.
	Consequently, the variable predicted by $M^{\rm full}$ is the identity map.
	
	The same argument can be applied to the composite measurement from \eqref{eq:comp_meas}, so that also any other surjective linear map can be predicted.
\end{proof}

\begin{theorem}[EH for prediction; complete manifest variable; Poisson ready state]
	\label{thm:prediction_complete_MV_Poisson_RS}
	For toy subjects whose manifest variable is complete and ready state is Poisson, a variable can be predicted if and only if the preparation support $\im(\beta)$ and the predictively inaccessible subspace $\I_\pre$ satisfy ${\bigl(\im(\beta)\bigr)^\omega \subseteq \I_\pre}$. 
\end{theorem}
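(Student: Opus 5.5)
The plan is to prove the two directions separately. For necessity I will argue directly from the Poisson property of $M$, with no reduction to Lagrangian ready states; for sufficiency I will reuse the block decomposition from the proof of \cref{thm:prediction_Poisson_MV_Poisson_RS}.

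\emph{Necessity.} Fix a Darboux basis, so that $\Omega^{-1}=\Omega^T=-\Omega$ on every space. Since the manifest variable is complete, $\epsilon = M_{\S\B}\pi_\B + M_{\S\V}\pi_\V$. Let $x \in \bigl(\im(\beta)\bigr)^\omega$. The Poisson identity $M\Omega M^T=\Omega$ gives $M\bigl(\Omega M^T \Omega^{-1}\bigr)=\id$, so
\begin{equation*}
	y \coloneq \Omega_{\B\oplus\R} M^T \Omega_{\B\oplus\S}^{-1}(x\oplus 0)
\end{equation*}
satisfies $My = x \oplus 0$. Hence, if $y \in \B\oplus\V$, then $y \in \ker\epsilon$ and $\beta(y)=x$, which gives $x \in \I_\pre$. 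The only thing to check is that the $\R$-component of $y$, which up to sign is $\Omega_\R M_{\B\R}^T\Omega_\B x$, lies in $\V$.

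The hypothesis on $x$ gives $\omega_\B(x, M_{\B\V}v)=0$ for all $v\in\V$. Equivalently, $M_{\B\R}^T\Omega_\B^T x$ lies in the Euclidean annihilator $\V^\perp$. Applying $\Omega_\R$ maps $\V^\perp$ onto $\V^\omega$. Since $\V$ is coisotropic, $\V^\omega\subseteq\V$, and therefore $y\in \B\oplus\V$.

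I will also record that the condition $(\im\beta)^\omega\subseteq\I_\pre\subseteq\im\beta$ forces $\im\beta$ to be coisotropic. This recovers the consequences stated in the main text, namely that $\dim \im(\beta)$ is at least half of $\dim \B$ and that a Lagrangian support gives $\I_\pre=\im(\beta)$.

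\emph{Sufficiency.} Given subspaces $\I\subseteq\F$ of $\B$ with $\F^\omega\subseteq\I$, I will construct a measurement with $\im\beta=\F$ and $\I_\pre=\I$. First, $\F$ is coisotropic because $\F^\omega\subseteq\I\subseteq\F$. Exactly as in Part 2 of the proof of \cref{thm:prediction_Poisson_MV_Poisson_RS}, I choose a symplectic $\B_1$ in which $\F^\omega$ is Lagrangian and set $\B_2\coloneq\B_1^\omega$. This gives $\F=\F^\omega\oplus\B_2$ and $\I=\F^\omega\oplus\I_2$ with $\I_2\coloneq\I\cap\B_2$. Here $\I_2$ is now an \emph{arbitrary} subspace of $\B_2$, not necessarily coisotropic.

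On $\B_1$ I use $M^{\rm swap}$ with the Lagrangian ready variability $\F^\omega$. With a complete manifest variable, $\ker\epsilon_1=\{0\}\oplus\F^\omega$, so both the preparation support and $\I_\pre$ equal $\F^\omega$. On $\B_2$ I need a measurement with full preparation support and $\I_\pre=\I_2$; this is the job of $M^{\rm pred}$ with $L$ the quotient map $\B_2\to\B_2/\I_2$.

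My first attempt at $M^{\rm pred}$ is an ``EPR'' broadcast. Take $\R_2=\B_2\oplus\overline{\B_2}$, where $\overline{\B_2}$ denotes $\B_2$ with symplectic form $-\Omega$, and take the diagonal Lagrangian $\V_2=\{(u,u)\}$ as the ready variability. The interaction sends the object to $u$ and the subject to $(b,u)$; it is a permutation of symplectic blocks and hence Poisson. With $\I_2=\{0\}$ this has full support and $\I_\pre=0$.

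\emph{Main obstacle.} The hard part is obtaining a general $\I_2$ while keeping the subject's manifest variable complete. I will try to add an auxiliary symplectic factor to the ready system that masks exactly the $\I_2$ directions of the subject's copy of $u$. Concretely, I will add the masking components of $\I_2$ (split into their isotropic and symplectic parts) via a shear, as in $M^{\rm Pson}$, and then verify the Poisson identity blockwise.
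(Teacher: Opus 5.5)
Your necessity argument is correct, and it is the paper's argument written out explicitly. Your $y=\Omega_{\B\oplus\R}M^T\Omega_{\B\oplus\S}^{-1}(x\oplus 0)$ is exactly the preimage $c\oplus r\in(\ker M)^\omega$ that the paper obtains from \cref{lem:ops_forms}. Your step $\Omega_\R(\V^\perp)=\V^\omega\subseteq\V$ replaces the paper's pairing argument, and you do not even need $c=0$. The decomposition $\F=\F^\omega\oplus\B_2$, $\I=\F^\omega\oplus\I_2$ and the swap on $\B_1$ also match the paper.

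The gap is in the sufficiency construction on $\B_2$. Your EPR broadcast only realises $\I_2=\{0\}$, so as written the \emph{if} direction is proved only when $\I=\F^\omega$. The general case is exactly what distinguishes this theorem from \cref{thm:prediction_Poisson_MV_Poisson_RS}, since here $\I_2$ need not be coisotropic. For that case you only announce an intention to mask the $\I_2$ directions with an auxiliary factor and a shear. You do not specify the auxiliary system, its variability, or the map. You also do not check that the result is Poisson, or that $\I_\pre$ equals $\I_2$ rather than merely containing it.

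The paper fills this step with $M^{\rm pred}$ from \eqref{eq:M_pred}, built as follows.
\begin{itemize}
\item The ready variability is the Lagrangian $\P_2$.
\item The subject has dimension $2\dim\C_2$, so part of the input is discarded.
\item $L$ is a coisometry ($LL^T=\id$) with kernel $\I_2$.
\item $K$ satisfies $K-K^T=\Omega_{\B_2}$ and $K^T(\ker L)\subseteq\ker L$.
\end{itemize}
The Poisson identity then reduces to $L(\Omega_{\B_2}-K+K^T)L^T=0$. On $\ker\epsilon^{\rm pred}$ one gets $L\beta^{\rm pred}(b\oplus p)=-LK^Tp=0$, which gives $\I_\pre\subseteq\I_2$; the reverse inclusion is immediate.

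Your own route can also be closed, and more simply than by splitting $\I_2$ into isotropic and symplectic parts, because Poisson maps may discard.
\begin{itemize}
\item Adjoin $\mathsf{A}=\Q_{\mathsf{A}}\oplus\P_{\mathsf{A}}$ with $\dim\Q_{\mathsf{A}}=\dim\I_2$ and unconstrained variability. With $\Delta=\{(u,u)\}$, the total variability $\Delta\oplus\mathsf{A}$ is still coisotropic.
\item Fix an isomorphism $j\colon\Q_{\mathsf{A}}\to\I_2$.
\item Replace the subject's second copy by $u_2+j\pi_{\Q_{\mathsf{A}}}a$, discarding $a$.
\end{itemize}
The map $\overline{\B_2}\oplus\mathsf{A}\to\overline{\B_2}$, $(u_2,a)\mapsto u_2+j\pi_{\Q_{\mathsf{A}}}a$, is Poisson because $\pi_{\Q_{\mathsf{A}}}\Omega_{\mathsf{A}}\pi_{\Q_{\mathsf{A}}}^T=0$. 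One then has $\beta(b\oplus(u,u,a))=u$ and $\epsilon(b\oplus(u,u,a))=(b,\,u+j\pi_{\Q_{\mathsf{A}}}a)$. Hence $\im\beta=\B_2$ and $\I_\pre=j(\Q_{\mathsf{A}})=\I_2$ exactly.

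This common-cause realisation is genuinely different from the paper's shear-type $M^{\rm pred}$ and arguably more transparent. Either way, some such explicit construction has to be written down and verified before the proof is complete.
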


\begin{proof}[\bfseries Proof of \cref{thm:prediction_complete_MV_Poisson_RS}]
	We split the proof into two parts, first showing these conditions must be satisfied by any predictable variable and then constructing a measurement with these properties.
	
	\paragraph{Part 1 (necessity):}
	Consider an arbitrary element $a$ of the symplectic complement of the preparation support, i.e.\ an element of $\bigl(\im(\beta)\bigr)^\omega$.
	Our task is to show that $a$ lies necessarily within the predictively inaccessible subspace $\I_\pre$.
	
	To this end, by \cref{lem:ops_forms}~\ref{it:ops_kernel}, there exist $c \in \B$ and $r \in \R$ satisfying
	\begin{equation}\label{eq:u_preimage}
		c \oplus r \in \bigl(\ker(M)\bigr)^\omega \qquad \text{and} \qquad M(c \oplus r) = a \oplus 0.
	\end{equation}
	Then, by \cref{lem:ops_forms}~\ref{it:ops_forms}, we have
	\begin{equation}\label{eq:ops_formb_2}
		\omega_{\B \oplus \R} \bigl( c \oplus r, b \oplus v \bigr) = \omega_{\B \oplus \S} \bigl( M(c \oplus r), M(b \oplus v) \bigr)
	\end{equation}
	for all $b \in \B$ and all $v \in \V$.
	Note that we can choose $b$ and $v$ arbitrarily because \cref{lem:ops_forms} also guarantees that $\B \oplus \R$ splits into two symplectic subspaces via ${\ker(M) \oplus \bigl(\ker(M)\bigr)^\omega}$ and the component of $b \oplus v$ within the kernel of $M$ contributes by $0$ to both sides of \cref{eq:ops_formb_2}.
	
	Since the manifest variable is complete, we have
	\begin{equation}
		M(b \oplus v) = \beta(b \oplus v) \oplus \epsilon(b \oplus v)
	\end{equation}
	so that the above properties imply
	\begin{equation}
		\begin{split}
			\omega_{\B} (c, b) + \omega_{\R} (r,v) &= \omega_{\B \oplus \S} \bigl( a \oplus 0, \beta(b \oplus v) \oplus \epsilon(b \oplus v) \bigr) \\
			&= \omega_{\B} \bigl(a, \beta(b \oplus v) \bigr) + \omega_{\S}\bigl(0, \epsilon(b \oplus v) \bigr),
		\end{split}
	\end{equation}
	where both the summands in the final expression are equal to $0$.
	The first one vanishes by our assumption of $a \in \bigl(\im(\beta)\bigr)^\omega$ and the second one by the bilinearity of the symplectic form.
	Setting $v = 0$, we thus get 
	\begin{equation}
		\forall b \in \B \quad \omega_{\B} (c, b) = 0,
	\end{equation}
	which means $c = 0$ by the non-degeneracy of the symplectic form.
	On the other hand, choosing $b = 0$ gives
	\begin{equation}
		\forall v \in \V \quad \omega_{\R} (r, v) = 0,
	\end{equation}
	which means $r \in \V^\omega$.
	Since the ready state is Poisson, we have $\V^\omega \subseteq \V$ and thus \eqref{eq:u_preimage} gives
	\begin{equation}
		a = \beta(0 \oplus r) \qquad  0 = \epsilon(0 \oplus r)
	\end{equation}
	for $r \in \V$.
	Consequently, $a$ is indeed an element of $\beta(\ker \epsilon)$ as we wanted to show.
	
	\paragraph{Part 2 (sufficiency):}	
	For the converse, consider an arbitrary coisotropic subspace $\F$ of $\B$ and another subspace $\I$
	satisfying
	\begin{equation}\label{eq:compl_incl_2}
		\F^\omega \subseteq \I \subseteq \F.
	\end{equation}
	Our aim is to construct a measurement interaction $M$, so that given a Poisson ready state and a complete manifest variable, we get
	\begin{equation}
		\im(\beta) = \F \qquad \text{and} \qquad  \I_\pre = \I.
	\end{equation}
	
	Just as in the proof of \cref{thm:prediction_Poisson_MV_Poisson_RS}, we split the toy object into two subsystems $\B_1$ and $\B_2$.
	Like before, $\B_1$ is a choice of a symplectic subspace for which $\F^\omega$ is Lagrangian and $\B_2$ is the symplectic complement thereof.
	The very same argument then also establishes 
	\begin{equation}\label{eq:alignment_3}
		\F = \F^\omega \oplus \B_2 \qquad \qquad \qquad \I = \F^\omega \oplus \I_2
	\end{equation}
	with respect to this decomposition of $\B$, where $\I_2 \coloneq \I \cap \B_2$ is now not necessarily a coisotropic subspace of $\B_2$.
	We also choose a complementary subspace $\C_2$ of $\B_2$ of dimension ${\dim(\B_2) - \dim(\I_2)}$, such that we have 
	\begin{equation}\label{eq:S2_decomp}
		\B_2 \cong \I_2 \oplus \C_2.
	\end{equation}
	
	We choose the ready system and the toy subject to have underlying ontic state space given by $\R = \R_1 \oplus \R_2$ and $\S = \S_1 \oplus \S_2$ respectively.
	The first subsystem of each is identical to the corresponding object subsystem: $\R_1 \cong \B_1 \cong \S_1$.
	We set the ready state variability of $\R_1$ to be $\F^\omega$.
	
	The second ready subsystem is $\R_2 = \Q_2 \oplus \P_2$, where both $\Q_2$ and $\P_2$ are Lagrangian subspaces of $\R_2$ (with symplectic form given by \eqref{eq:sympl_form}), each of which carries a (vector space) isomorphism with $\B_2$ and thus they also inherit the decomposition from \eqref{eq:S2_decomp}.
	We set the ready state variability of $\R_2$ to be $\P_2$.
	
	The second subject subsystem is $\S_2 = \G_2 \oplus \sfH_2$, where both $\G_2$ and $\sfH_2$ are Lagrangian subspaces of $\S_2$, each isomorphic to $\C_2$.
	In particular, this means that we have a surjective linear map $L \colon \B_2 \to \C_2$ given in matrix form by
	\begin{equation}\label{eq:L_proj}
		L = \begin{pmatrix} 0 & \id \end{pmatrix}
	\end{equation}
	with respect to the decomposition \eqref{eq:S2_decomp}, so that the kernel of $L$ is $\I_2$.
	In particular, we then have
	\begin{equation}\label{eq:coisometry}
		L L^T = \id_{\C_2}.
	\end{equation}
	The manifest variable is complete by theorem assumptions, and so it must be given by the identity map $\id_{\S_2}$.
	
	As the measurement interaction $M \colon \B \oplus \R \to \B \oplus \S$, we consider the direct sum (up to reordering of subsystems) of
	\begin{equation}
		M_1 \colon \B_1 \oplus \R_1 \to \B_1 \oplus \S_1  \qquad \text{and} \qquad  M^{\rm pred} \colon \B_2 \oplus \R_2 \to \B_2 \oplus \S_2
	\end{equation}
	where $M_1$ is the swapping interaction introduced in \cref{eq:swap} and
	\begin{equation}
		M^{\rm pred} \colon \B_2 \oplus \Q_2 \oplus \P_2 \to \B_2 \oplus \G_2 \oplus \sfH_2
	\end{equation}
	is given as a matrix by
	\begin{equation}\label{eq:M_pred}
		M^{\rm pred} = \begin{pmatrix} \id_{\B_2} & 0 & \Omega_{\B_2} \\ L & L & L K \\ 0 & 0 & L \end{pmatrix}
	\end{equation}
	where $K$ is a matrix satisfying both
	\begin{equation}\label{eq:piecewise_sympl_form}
		K^T \bigl( \ker(L) \bigr) \subseteq \ker(L) \qquad \text{and} \qquad K - K^T = \Omega_{\B_2}.
	\end{equation}
	
	Let us show that such a $K$ exists.
	To this end, let us use decomposition \eqref{eq:S2_decomp} to write $K$ and $\Omega_{\B_2}$ in block form:
	\begin{equation}
		K = \begin{pmatrix}
				K_{11} & K_{12} \\
				K_{21} & K_{22}
			\end{pmatrix}
		\qquad \qquad	
		\Omega_{\B_2} = \begin{pmatrix}
			\Omega_{11} & -\Omega_{21}^T \\
			\Omega_{21} & \Omega_{22}
		\end{pmatrix}.
	\end{equation}
	Using \cref{eq:L_proj}, the first condition in \eqref{eq:piecewise_sympl_form} follows if we set $K_{12} = 0$.
	The second condition amounts to 
	\begin{equation}
		\begin{pmatrix}
			K_{11} - K_{11}^T & -K_{21}^T \\
			K_{21} & K_{22} - K_{22}^T
		\end{pmatrix} =
		\begin{pmatrix}
			\Omega_{11} & -\Omega_{21}^T \\
			\Omega_{21} & \Omega_{22}
		\end{pmatrix},
	\end{equation}
	which can be satisfied by a suitable choice of $K$ since both $\Omega_{11}$ and $\Omega_{22}$ are antisymmetric.
	
	Having defined $M^{\rm pred}$, let us establish that it is a valid physical transformation in nomic toy theory. 
	Using the block matrix expression of the symplectic form on $\B_2 \oplus \R_2$:
	\begin{equation}
		\Omega = \begin{pmatrix} \Omega_{\B_2} & 0 & 0 \\ 0 & 0 & \id_{\B_2} \\ 0 & -\id_{\B_2} & 0 \end{pmatrix}
	\end{equation}
	we compute
	\begin{equation}
		M^{\rm pred} \, \Omega \, (M^{\rm pred})^T = \begin{pmatrix} \Omega_{\B_2} & 0 & 0 \\ 0 & L \bigl[ \Omega_{\B_2} - K + K^T \bigr] L^T & L L^T \\ 0 & - L L^T & 0 \end{pmatrix}.
	\end{equation}
	By \cref{eq:coisometry,eq:piecewise_sympl_form}, this equals the matrix expression for the symplectic form of $\B_2 \oplus \S_2$, thus proving that $M^{\rm pred}$ is a Poisson map.
	
	We now compute the preparation support of $M$.
	As we have seen in the proof of \cref{thm:prediction_Poisson_MV_Poisson_RS}, $\im(\beta_1)$ is $\F^\omega$.
	Furthermore, note that $\beta^{\rm pred} \colon \B_2 \oplus \P_2 \to \B_2$ is given by
	\begin{equation}\label{eq:sigma_pred}
		\beta^{\rm pred} = \begin{pmatrix} \id_{\B_2} & \Omega_{\B_2} \end{pmatrix}
	\end{equation}
	and thus $\im(\beta^{\rm pred}) = \B_2$, so that we obtain
	\begin{equation}
		\im(\beta_1) \oplus \im(\beta^{\rm pred}) = \F^\omega \oplus \B_2 = \F
	\end{equation}
	by \eqref{eq:alignment_3}.
	Therefore, the preparation support of $M$ is $\F$ as required.
	
	As we have seen in the proof of \cref{thm:prediction_Poisson_MV_Poisson_RS}, the predictively inaccessible subspace of $M_1$ is $\F^\omega$.
	On the other hand, $\epsilon^{\rm pred} \colon \B_2 \oplus \P_2 \to \G_2 \oplus \sfH_2$ is given by
	\begin{equation}
		\epsilon^{\rm pred}(b \oplus p) = \begin{pmatrix} L b + L K p \\ L p \end{pmatrix}.
	\end{equation}
	Setting this equal to $0$ gives $p \in \ker(L)$ and $L b = - L K p$ as conditions for $b \oplus p$ to be elements of $\ker(\epsilon^{\rm pred})$.
	Mapping this kernel by $\beta^{\rm pred}$ and then $L$ gives
	\begin{equation}
		L \comp \beta^{\rm pred} (b \oplus p) = L b + L \Omega_{\B_2} p = L (\Omega_{\B_2} - K) p = - L K^T p = 0,
	\end{equation}
	where the first equality is by \eqref{eq:sigma_pred}, second one by $b \oplus p \in \ker(\epsilon^{\rm pred})$, and the last two both by \eqref{eq:piecewise_sympl_form}.
	This establishes that the predictively inaccessible subspace of $M^{\rm pred}$ is contained within $\ker(L) = \I_2$.
	But we also know that $b \oplus 0$ is an element of $\ker(\epsilon^{\rm pred})$, as long as $b \in \ker(L)$, and thus by \eqref{eq:sigma_pred} we get that the predictively inaccessible subspace of $M^{\rm pred}$ is in fact equal to $\I_2$.
	Altogether, we have 
	\begin{equation}
		\I_\pre = \F^\omega \oplus \I_2 = \I
	\end{equation}
	by \eqref{eq:alignment_3}, so that the predictively inaccessible subspace of $M$ is indeed $\I$ as required, concluding the proof.
\end{proof}

\end{document}